\newcommand{\ALGtikzmarkcolor}{black}
\newcommand{\ALGtikzmarkextraindent}{4pt}
\newcommand{\ALGtikzmarkverticaloffsetstart}{-.5ex}
\newcommand{\ALGtikzmarkverticaloffsetend}{-.5ex}
\newcounter{ALG@tikzmark@tempcnta}
\newcommand\ALG@tikzmark@start{%
    \global\let\ALG@tikzmark@last\ALG@tikzmark@starttext%
    \expandafter\edef\csname ALG@tikzmark@\theALG@nested\endcsname{\theALG@tikzmark@tempcnta}%
    \tikzmark{ALG@tikzmark@start@\csname ALG@tikzmark@\theALG@nested\endcsname}%
    \addtocounter{ALG@tikzmark@tempcnta}{1}%
}
\def\ALG@tikzmark@starttext{start}
\newcommand\ALG@tikzmark@end{%
    \ifx\ALG@tikzmark@last\ALG@tikzmark@starttext
    \else
        \tikzmark{ALG@tikzmark@end@\csname ALG@tikzmark@\theALG@nested\endcsname}%
        \tikz[overlay,remember picture] \draw[\ALGtikzmarkcolor] let \p{S}=($(pic cs:ALG@tikzmark@start@\csname ALG@tikzmark@\theALG@nested\endcsname)+(\ALGtikzmarkextraindent,\ALGtikzmarkverticaloffsetstart)$), \p{E}=($(pic cs:ALG@tikzmark@end@\csname ALG@tikzmark@\theALG@nested\endcsname)+(\ALGtikzmarkextraindent,\ALGtikzmarkverticaloffsetend)$) in (\x{S},\y{S})--(\x{S},\y{E});%
    \fi
    \gdef\ALG@tikzmark@last{end}%
}
\apptocmd{\ALG@beginblock}{\ALG@tikzmark@start}{}{\errmessage{failed to patch}}
\pretocmd{\ALG@endblock}{\ALG@tikzmark@end}{}{\errmessage{failed to patch}}
\crefname{fact}{Fact}{Fact}
\crefname{lem}{Lemma}{Lemma}
\algnewcommand\algorithmicforeach{\textbf{for each}}
\newcounter{PotentialIndex}
\newcommand{\HPT}{\text{HPT}}
\newcommand{\DHT}{\text{DHT}}
\newcommand{\PTP}{\text{P2P}}
\newcommand{\concat}{\circ}
\newcommand{\MsdSet}{\text{MSD}}
\newcommand{\PatSet}{\text{PAT}}
\newcommand{\TrieSet}{\text{HPT}}
\newcommand{\KeySet}{\text{KEYS}}
\newcommand{\MsgSet}{\text{MSG}}
\newcommand{\bitsSym}{d}
\newcommand{\MemoryBound}{$\Theta (\bitsSym)$}
\newcommand{\PN}{\text{SHPT}}
\newcommand*{\rom}[1]{\expandafter\@slowromancap\romannumeral #1@}
\title{A Self-Stabilizing Hashed Patricia Trie\footnote{This work was partially supported by the German Research Foundation (DFG) within the Collaborative
Research Center 'On-The-Fly Computing' (SFB 901).}}
\author[1]{Till Knollmann}
\author[2]{Christian Scheideler}
\affil[1]{Heinz Nixdorf Institute \& Computer Science Department,\\Paderborn University, Germany\\
\href{mailto:tillk@mail.upb.de}{tillk@mail.upb.de}\\
\href{https://www.hni.uni-paderborn.de/alg/}{https://www.hni.uni-paderborn.de/alg/}}
\affil[2]{Computer Science Department, Paderborn University, Paderborn, Germany\\
\href{mailto:scheideler@upb.de}{scheideler@upb.de}\\
\href{https://cs.uni-paderborn.de/ti/}{https://cs.uni-paderborn.de/ti/}}
\authorrunning{T.\ Knollmann and C.\ Scheideler} 
\keywords{Self-Stabilizing, Prefix Search, Distributed Data Structure}
\begin{document}

\maketitle

\begin{abstract}
	While a lot of research in distributed computing has covered solutions for self-stabilizing computing and topologies, there is far less work on self-stabilization for distributed data structures.
	Considering crashing peers in peer-to-peer networks, it should not be taken for granted that a distributed data structure remains intact.
	In this work, we present a self-stabilizing protocol for a distributed data structure called the \emph{hashed Patricia Trie} (Kniesburges and Scheideler WALCOM'11) that enables efficient prefix search on a set of keys.
	The data structure has a wide area of applications including string matching problems while offering low overhead and efficient operations when embedded on top of a distributed hash table.
	Especially, longest prefix matching for $x$ can be done in $\mathcal{O}(\log |x|)$ hash table read accesses.
	We show how to maintain the structure in a self-stabilizing way.
	Our protocol assures low overhead in a legal state and a total (asymptotically optimal) memory demand of $\Theta(d)$ bits, where $d$ is the number of bits needed for storing all keys.
\end{abstract}

\section{Introduction}

We consider the problem of maintaining a distributed data structure for efficient \emph{Longest Prefix Matching} in peer-to-peer (P2P) systems.
We focus on the \emph{hashed Patricia Trie} (\HPT) introduced in \cite{HashedPatriciaTrie} and present an algorithm rendering a self-stabilizing version of this data structure when applied on top of any reliable \emph{distributed hash table} (\DHT).

\begin{definition}(Longest Prefix Matching)
	\label{definition:prefix_search}
	Consider a set of binary strings called \emph{keys} and a binary string $x$.
	The task of Longest Prefix Matching is to find a key $y$ sharing the longest common prefix with $x$.
	A prefix of a binary string is a substring beginning with the first bit.
	We denote the longest common prefix of $x$ and $y$ by $\ell cp(x,y)$.
\end{definition}

We denote a prefix $p$ of $x$ by $p\sqsubseteq x$.
$p$ is a \emph{proper prefix} of $x$ ($p\sqsubset x$) if $p$ is a prefix of $x$ and $\vert p\vert < \vert x\vert$, where $|p|$ is the length of $p$.
Longest Prefix Matching is an old problem with applications in various areas including string matching problems and IP lookup in Internet routers.
To solve it efficiently in a distributed P2P system, the \HPT\ has been introduced \cite{HashedPatriciaTrie}.
The \HPT\ is a distributed data structure applied to any common \DHT\ which allows efficient prefix search for $x$ in $\mathcal{O}(\log \vert x\vert)$ read accesses to the hash table, i.e., solely based on the length of the search word $x$.
The costs for an insertion of $x$ is in $\mathcal{O}(\log \vert x\vert)$ read accesses and $\mathcal{O}(1)$ write accesses, while deletion can be done in $\mathcal{O}(1)$ accesses.
The memory space used is asymptotically optimal in $\Theta (\text{sum of all key lengths})$.
Moreover, \emph{Suffix Trees} can be implemented efficiently using Patricia Tries and thus also hashed Patrica Tries (called \emph{PAT Trees} \cite{Gonnet:1992:NIT:129687.129692}).
This allows us to efficiently decide if a given string $x$ is a substring of a text in a runtime only depending on the length of $x$.

The usefulness of Patricia Tries motivates us to investigate how a \HPT\ can be maintained in a P2P system where nodes may enter/leave or even fail.
While a lot of research has considered the design of self-stabilizing computation or topologies (See \cref{subsection:related_work}), to the best of our knowledge there are far fewer results concerning self-stabilizing distributed data structures.
However, considering failures of peers, the stability of any distributed data structure can also be affected.
Therefore, we consider the problem of finding an efficient distributed protocol to rebuild the \HPT\ in a self-stabilizing way.

\subsection{Model}

We assume the existence of a self-stabilizing \emph{distributed hash table} (DHT) which provides the operations $\textsc{DHT-Insert}(x)$ to insert data and $\textsc{DHT-Search}(x)$ to retrieve data.
These operations are carried out reliably on the stored data, i.e., no operation is ever canceled.
We assume the existence of a collision-free hash function which maps binary strings to positions in $[0,1)$ to store data in the \DHT.
The function is available locally at every peer.
Each peer has a unique identifier, manages local variables and maintains a \emph{channel}.
When a peer sends a message $m$ to peer $p$, it puts $m$ in the channel of $p$.
A channel has unbounded capacity and messages never get lost.
If a peer processes a message in its channel, the message is removed from the channel afterwards.

We distinguish between two types of \emph{actions}: The first one is for standard procedures and has the form $\langle label \rangle (\langle parameters \rangle):\langle command \rangle$ where \emph{label} is the name of the action, \emph{parameters} define the set of parameters and \emph{command} defines the statements that are executed when calling the action.
It may be executed locally or remotely.
The second type has the form $\langle label \rangle:(\langle guard\rangle)\rightarrow \langle command\rangle$ where \emph{label} and \emph{command} are defined as above and \emph{guard} is a predicate over local variables.
An action at peer $p$ can only be executed if its guard is \emph{true} or a message in the channel of $p$ requests to call it.
We call such an action \emph{enabled}.
The guard of our protocol routine $\textsc{Timeout}$ is always \emph{true}.

A \emph{state} of the system is defined by the assignment of variables at every peer, the data items and their values stored at every peer and all messages in channels of peers.
The system can transform from a state $s$ to another state $s'$ by execution of an enabled \emph{action} at a peer.
An infinite sequence of states $(s_1,s_2,\dots)$ is a \emph{computation} if $s_{i+1}$ can be reached by execution of an action enabled in $s_{i}$ for all $i\geq 1$.
The state $s_1$ is called \emph{initial state}.
We assume \emph{fair message receipt}, i.e., every message contained in a channel is eventually processed.
Also, we assume \emph{weakly fair action execution} such that any action that is enabled in all but finitely many states is executed infinitely often.
This especially applies to the $\textsc{Timeout}$ procedure.
Most importantly, our protocol is \emph{self-stabilizing}.
We call a protocol self-stabilizing, if it fulfills \emph{Convergence} and \emph{Closure}.
Convergence means that starting from an arbitrary initial state, the protocol transforms the system to a legal state in finite time.
Closure denotes that starting from a legal state, the protocol only transforms the system to consecutive legal states.\enlargethispage{\baselineskip}
Our goal is to reach a state in which the \HPT\ provided by the system is in a \emph{legal state}.
We define the legal state of a \HPT\ later in \cref{section:protocol_analysis:correctness}.

\subsection{Related Work}\label{subsection:related_work}

The basic data structure we consider here is the Patricia Trie.
This compressed tree structure has been introduced by Morrison in \cite{Morrison:1968:PAR:321479.321481}.
It was extended to the hashed Patricia Trie by Kniesburges and Scheideler in \cite{HashedPatriciaTrie}.
In \cite{Gonnet:1992:NIT:129687.129692}, Gonnet et al.\ presented PAT Trees which are essentially Patricia Tries for special suffixes (\emph{sistrings}) of a text.
This widens the applications of Patricia Tries to general string problems such as deciding if a word or sentence is contained in a text \cite{Gonnet:1992:NIT:129687.129692}.
The work on self-stabilization started with the research of Dijkstra in \cite{Dijkstra:1974:SSS:361179.361202} where he analyzed self-stabilization in a token ring scenario.
Since then, research has covered wide areas including self-stabilizing computation \cite{awerbuch1991distributed, collin1994self} and coordination \cite{afek1990memory, arora1994distributed, Dijkstra:1974:SSS:361179.361202, flatebo1994two}.
Furthermore, with the rise of P2P systems \cite{rowstron2001pastry, stoica2001chord}, self-stabilizing topologies in the sense of overlay networks gained attraction \cite{Clouser:2012:TSD:2161009.2161351, cramer2005self, dolev2008hypertree, jacob2014skip+, jacob2009self, Kniesburges2014, shaker2005self}.
We use approaches originally presented for topological self-stabilization.
This includes a technique called \emph{Linearization} presented by Onus et al.\ in \cite{Onus:2007:LLS:2791188.2791198}.
A common approach for storing data in overlay networks is a distributed hash table (\DHT ) like Chord \cite{stoica2001chord}.
Using hashing, data items, as well as network peers, are mapped to the $[0,1)$ interval such that a mapping between them is established.
There are various results on self-stabilizing \DHT s in the literature (for example \cite{Kniesburges2014}).
Further, most (self-stabilizing) overlay networks can easily be extended to a \DHT\ given sortable unique identifiers for the peers which is a common assumption.

\subsection{Our Contribution}

We present a self-stabilizing protocol called \PN\ to maintain a slightly modified version of the \HPT\ as presented in \cite{HashedPatriciaTrie}.
Whenever we refer to \HPT, we implicitly mean the modified version.
The \HPT\ and our modification are briefly introduced in \cref{section:HashedPatriciaTrie}.
Afterwards, \cref{section:SHPT_protocol} gives a high-level description of the most important mechanisms of our protocol.
We only require for an initial state that the underlying \DHT\ is in a legal state and that a set of unique keys is stored at \DHT\ nodes.
In \cref{section:Protocol_analysis}, we show that our protocol stabilizes a \HPT\ in finite time out of any initial state.
When the \HPT\ is in a legal state, our protocol guarantees a low overhead of a constant amount of hash table read accesses and messages generated at each \DHT\ node per call of the protocol routine.
Furthermore, we can bound the total memory consumption in a legal state to \MemoryBound\ bits if $\bitsSym$ is the number of bits needed to store all keys.
To improve readability, we deferred the Pseudocode to \cref{appendix:pseudo-code} and the full proofs concerning correctness and overhead to \cref{appendix:correctness_proof} and \cref{appendix:overhead_proof}.

\section{Hashed Patricia Trie}\label{section:HashedPatriciaTrie}

We consider a data structure called the \emph{hashed Patricia Trie} (\HPT) as presented in \cite{HashedPatriciaTrie}.
The \HPT\ is an extended Patricia Trie that is distributed in a \PTP\ System by using a \DHT.
We briefly describe the construction.
For details, we refer to \cite{HashedPatriciaTrie}.
The Patricia Trie is a compressed trie which was proposed by Morrison in \cite{Morrison:1968:PAR:321479.321481}.
Suppose we are given a key set $\KeySet$ consisting of strings.
A trie is a tree structure that consists of labeled nodes and labeled edges.
The root node is labeled by the empty string and every edge is labeled by one character.
The label of a node is the concatenation of all edge labels of edges traversed on the unique path from the root to the node.
For each $k\in\KeySet$ there is a node labeled by $k$ (see \cref{figure:Trie_Example}).
The Patricia Trie introduces compression by allowing edge labels to be strings such that inner nodes with a single child, which do not represent a key, can be avoided.
Similar to \cite{HashedPatriciaTrie}, we restrict ourselves to keys represented by binary strings.
We store the Patricia Trie in a \DHT\ by hashing all nodes by their label resulting in the hashed Patricia Trie.
Our notation is close to the one of \cite{HashedPatriciaTrie} and can be seen in \cref{figure:TRIE_Node_Values}.

\begin{figure}[htb]
	\begin{minipage}[b][][b]{0.25\textwidth}
		\centering
		\includegraphics[width=0.85\textwidth, trim = 0cm 13cm 30cm 0cm, clip=true]{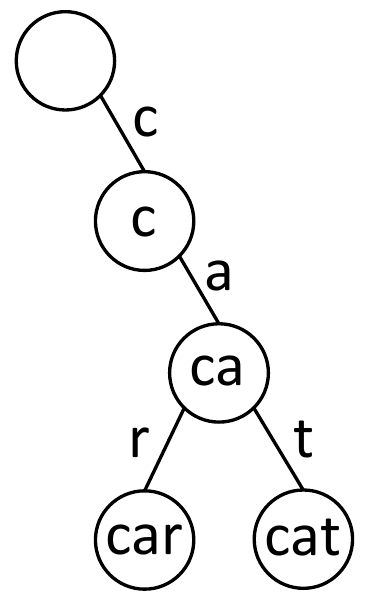}
		\captionof{figure}{Example of a classical Trie containing the keys "car" and "cat".\\}
		\label{figure:Trie_Example}
	\end{minipage}
	\begin{minipage}{0.05\textwidth}
		\hfill
	\end{minipage}
	\begin{minipage}[b][][b]{0.68\textwidth}
		\centering
		\includegraphics[width=0.9\textwidth, trim = 0cm 10.6cm 21.5cm 0cm, clip=true]{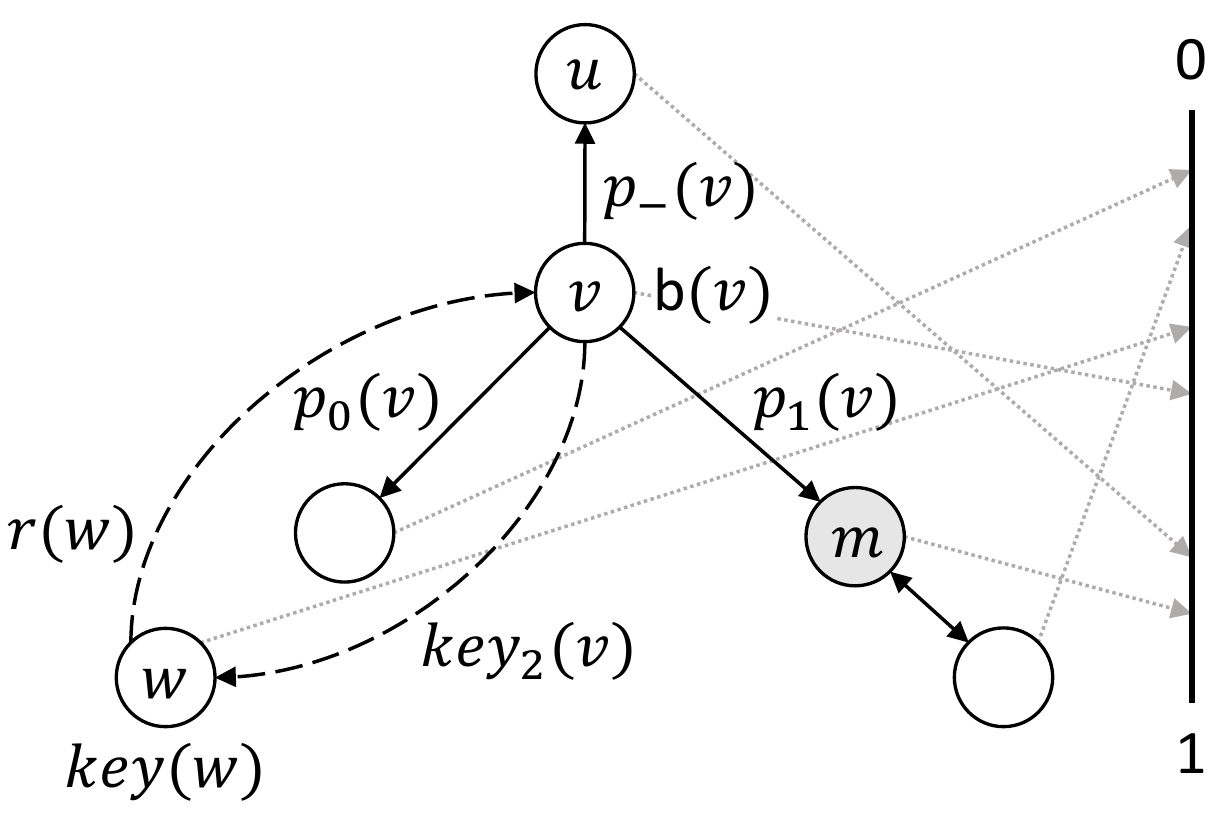}
		\captionof{figure}{Values stored at nodes of the \HPT\ from the perspective of $v$.
		Nodes are stored by hashing their label to $[0,1)$ in combination with a \DHT.
		White nodes denote Patricia nodes while Msd nodes are depicted in gray.}
		\label{figure:TRIE_Node_Values}
	\end{minipage}
\end{figure}

Every Patrica node $v$ has a label denoted by $b(v)$ and stores three edges.
The \emph{root} node stores the empty string $b(root)=\varepsilon$.
$p_{-}(v)$ is the parent edge of $v$ pointing to the parent node $u$ such that $b(u)\concat p_{-}(v)=b(v)$.
We denote by $\concat$ the concatenation of strings.
By $p_{x}(v)$ we denote the child edge of $v$ starting with the value $x$ for $x\in\{0,1\}$.
If $b(w)\in\KeySet$ for a Patricia node $w$, we set $key(w)=b(w)$.
Additionally, an inner Patricia node stores a $key_{2}(v)=k$, where $k$ is a key with $b(v)\sqsubseteq k$.
For efficient updates, the node $w$ storing $k$ has a field $r(w)=b(v)$.
These key$_2$ values allow returning a valid result for a prefix search when stopping at any Patricia node.
It is possible to assure that every inner Patricia node with two children has a key$_2$ pointing to a leaf node in its subtree.

To allow efficient prefix search, the Patricia Trie has been extended in \cite{HashedPatriciaTrie}.
Between every pair of directly connected Patricia nodes, Msd nodes (from Most Significant Digit) are added.
Their length is chosen in a way that those nodes are hit by a binary search first.
More specifically, Msd nodes are inserted between Patricia nodes such that their length is considered first by the binary search before the Patricia nodes around them are considered.
We only give a short definition of the calculation of an Msd label in \cref{definition:msd_calculation}.
In the special case that an Msd label equals the label of a surrounding Patricia node, no Msd node is needed at that position.
For details on how Msd nodes improve the prefix search operation, see \cite{HashedPatriciaTrie}.

\begin{definition}[Msd Label]
	\label{definition:msd_calculation}
	Let $a=(a_{m},\dots,a_0)$ and $b=(b_{m},\dots,b_0)$ be two binary strings of the same length.
	Possibly, one of them is filled up with leading zeros to have length $m+1$.
	We define $msd(a,b)$ to be the position $j$ where $a_j\neq b_j$ and $a_i=b_i$ for all $i>j$.
	That means, $msd(a,b)$ is the most significant bit (digit) at which $a$ and $b$ differ.
	\\
	Consider the binary labels $b(u)$ and $b(v)$ of two nodes $u,v$.
	Let $\ell_u=\vert b(u)\vert$ and $\ell_v=\vert b(v)\vert$ and without loss of generality let $\ell_u<\ell_v$.
	We define the Msd label $b(m)$ between $u$ and $v$ to be the prefix of $v$ of length $\sum_{i=msd(\ell_u,\ell_v)}^{\lfloor \log \ell_v\rfloor+1}(\ell_v)_{i}\cdot 2^{i}$.
\end{definition}

For example, consider $u,v$ with $b(u)=10$ and $b(v)=100101$, where $\ell_u=|b(u)|=(10)_{2}$ and $\ell_v=|b(v)|=(110)_{2}$.
Then $msd(\ell_u,\ell_v)=msd((010)_2,(110)_2)=2$, such that an Msd node $m$ between $u$ and $v$ has label $b(m)=1001\sqsubset b(v)$ of length $2^2=4$.

The \HPT\ supports operations $\textsc{PrefixSearch}(x)$ and $\textsc{Insert}(x)$ for a binary string $x$ in $\mathcal{O}(\log\vert x\vert)$ read accesses on the hash table.
Insertion takes additional $\mathcal{O}(1)$ write accesses and $\textsc{Delete}(x)$ is supported in constant hash table accesses.
Furthermore, the memory space usage is in $\Theta\left(\sum_{k\in\KeySet}\vert k\vert\right)$.

$ $\medskip\\
\textbf{Modification.}
We modify the \HPT\ to simplify the stabilization technique.
Consider \cref{figure:Modified_HPT}.
The original \HPT\ has a structure as shown on the left side.
The Msd node $m$ is in between the Patricia nodes $u$ and $w$ such that $u$ and $w$ point to $m$ and $m$ points to $u$ (parent) and $w$ (child).
We modify this structure by having $u$ and $w$ point to each other and not to $m$.
By this, deletions of Msd nodes do not concern the connectivity between Patricia nodes while the advantages of Msd nodes are still present.
The crucial property of Msd nodes is that they point to Patricia nodes.
Edges towards Msd nodes are not needed for the efficient operations introduced in \cite{HashedPatriciaTrie}.
For the rest of this paper, when we refer to the \HPT, we mean the \HPT\ with this small modification.
\\

\begin{wrapfigure}{r}{0.3\textwidth}
	\vspace{-0.7cm}
	\includegraphics[width=0.3\textwidth, trim = 0cm 13.5cm 27.5cm 0cm, clip=true]{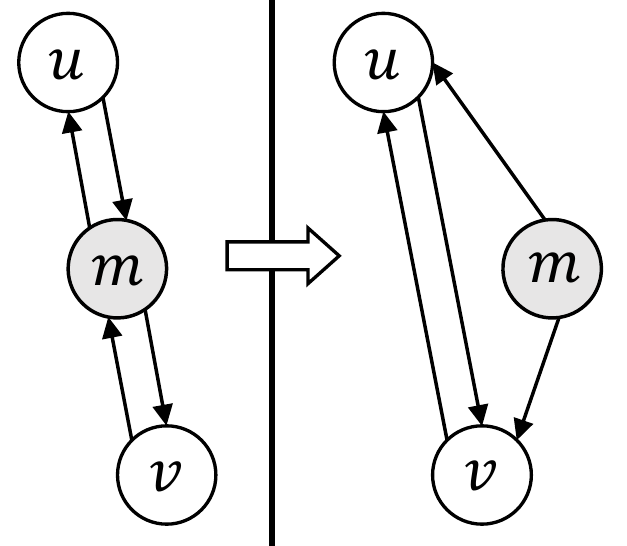}
	\caption{Modified \HPT}
	\label{figure:Modified_HPT}
	\vspace{-0.5cm}
\end{wrapfigure}
Next, we introduce some common terms that are used throughout the paper.
$\TrieSet$ is the set of all data nodes of the \HPT.
This includes $\PatSet$ as the set of nodes used in the original Patricia Trie and $\MsdSet$ which are the Msd nodes.
By definition $\TrieSet=\PatSet\cup\MsdSet$.
We denote by $\KeySet$ the set of keys stored by the \HPT\ and by $\MsgSet$ the set of all messages currently existing.
Let $u,\,v\in\TrieSet$ with $b(u) \sqsubset b(v)$.
In this case we say, $u$ is \emph{above} $v$ while $v$ is \emph{below} $u$.
Let $w\in\TrieSet$ such that $b(u) \sqsubset b(w) \sqsubset b(v)$.
Then $w$ is in \emph{between} $u$ and $v$.
If for two $u,\,v\in\TrieSet$ with $b(u) \sqsubset b(v)$ there is no $w\in\TrieSet$ with $b(u)\sqsubset b(w)\sqsubset b(v)$, then $u$ and $v$ are \emph{closest} to each other.
We say a child edge $e$ of $v\in\TrieSet$ is \emph{valid}, if there exists a node $w\in\TrieSet$ with $b(v) \concat e = b(w)$.
Similar, a parent edge $e$ of $v\in\TrieSet$ is valid, if there exists a node $w\in\TrieSet$ with $b(w) \concat e = b(v)$.
Consider two nodes $v,\,u\in\TrieSet$, where $u$ has an edge pointing to $v$ and vice versa.
We then speak of a \emph{bidirectional} edge.

\section{The \PN\ Protocol}\label{section:SHPT_protocol}

In the following, we present \PN, our self-stabilizing protocol for maintaining a \HPT.
The corrections of \PN\ can be divided into several parts.
We present our assumptions concerning the underlying \DHT\ first.
Afterwards, we give an intuition on the different types of repairs our protocol performs.
We often speak about actions executed by a \HPT\ node $v$.
This translates to actions that are executed by the corresponding \DHT\ node storing $v$.
For detailed Pseudocode, we refer to \cref{appendix:pseudo-code}.

\subsection{Properties of the DHT}

We assume that the underlying \DHT\ is in a legal state, i.e., it provides the actions \textsc{DHT-Search}$(x)$ and $\textsc{DHT-Insert}(x)$ which are carried out reliably on the stored data.
Deletion of data is only done locally by our protocol.
Stability of the \DHT\ is crucial as our protocol relies on finding/manipulating nodes of the \HPT\ solely based on their hash value given by their label.
There are a lot of different self-stabilizing \DHT s presented in the literature.
Some of them are mentioned in \cref{subsection:related_work}.

Our main demand on the \DHT\ is that at some point nodes are stored such that they can always be retrieved by their labels.
HPT nodes are essentially data-items.
Every \DHT\ node regularly checks if all its stored data is at the correct peer based on the hashing.
If data is stored incorrectly, it is sent towards the correct \DHT\ node.
When a data item $i$ is inserted at a \DHT\ node $n$, $n$ checks if $i$ is already present.
If yes, $i$ is only inserted if it does not collide with an already stored Patricia node that stores a key.
If a \HPT\ node $v$ has been inserted, a presentation method is triggered for $v$ and $v$ is directly presented to the nodes referred to by $p_{-}(v)$, $p_{0}(v)$ and $p_{1}(v)$.
The presentation mechanism is presented later.
This assumption assures that keys are preserved while insertion is not blocked and every \HPT\ node is presented at least once.

\subsection{Correcting Edge Information}

One general problem for self-stabilizing solutions is that every stored information can be corrupted.
Thus, our protocol regularly checks information stored in a \HPT\ node.
Consider a node $v\in\TrieSet$.
We refer to the information provided by the fields $p_{-}(v)$, $p_1(v)$ and $p_0(v)$ as well as $key_2(v)$ and $r(v)$ as \emph{edge information}.
Edge information can be checked rather simply as it allows reconstruction of a node's label $b(w)$.
The label can be used to query the \DHT\ for an (incomplete) copy of $w$.
$v$ can then compare the information stored at $w$ with its own and decide for corrections.
Some inconsistencies in the local structure can also be checked without querying the \DHT.
In general, when checking an edge $e$ at node $v$, we distinguish three cases:
\begin{enumerate}
	\item $e$ has a wrong form.
	      For example, if $p_{1}(v) = (0 \dots)$ or $p_{-}(v)$ is not a suffix of $b(v)$.
	      In this case, the edge is considered corrupted and is cleared.
	\item The node $w$ that $e$ points to does not exist.
	      Again, $e$ is not correct and is cleared.
	\item The node $w\in\TrieSet$ that $e$ points to does exist, but the edge provided by $w$ which should point to $v$ does not match $e$.
	      Several sub-cases arise here.
	      The protocol may have to simply present $v$ to $w$, or a new node may need to be inserted.
\end{enumerate}

\begin{figure}[htb]
	\center
	\includegraphics[width=0.95\textwidth, trim = 0cm 16cm 15.5cm 0cm, clip=true]{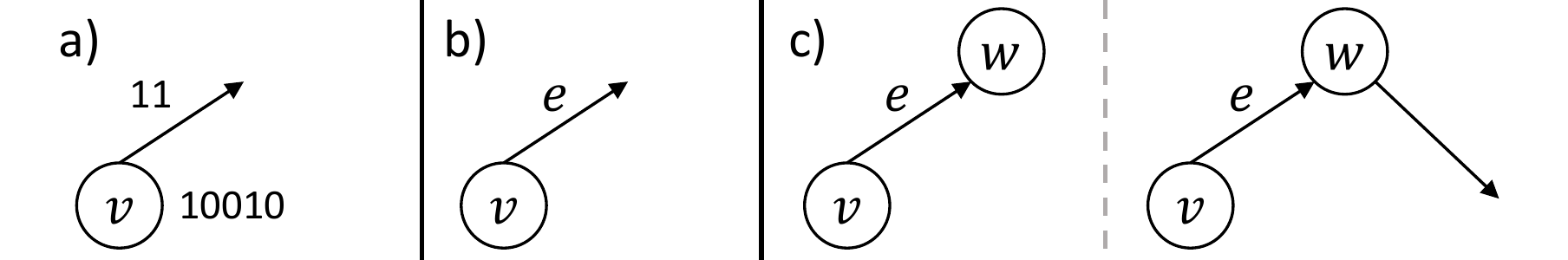}
	\caption{Examples for the cases of wrong edge information.}
	\label{figure:Cases_edge_information}
\end{figure}

Additionally, every node avoids edges pointing to Msd nodes.
Such edges are treated as if they pointed to a non-existing node.
A node $v$ can check the values of $p_{-}(v)$, $key_2(v)$ and $r(v)$ by calculating if the prefix relation between itself and the respective nodes fulfills the definition of the hashed Patricia Trie.
To prevent the spreading of incorrect information, new edges are only stored if they comply with the definition of the hashed Patricia Trie from the local perspective of $v$.
We will go into detail on the creation of new edges and the insertion of nodes later.

\subsection{Maintaining Connections}

Our goal to stabilize the Patricia nodes of a \HPT\ can also be formulated using \emph{Branch Sets} as described in \cref{definition:branch_Set}.
A Branch Set consists of all Patricia nodes on a branch from the root to a leaf node (see \cref{figure:Presentation_List_Branch}).
When the \HPT\ is in a legal state, there are as many Branch Sets as there are leaf nodes.

\begin{restatable}[Branch Set]{definition}{definitionBranchSet}
	\label{definition:branch_Set}
	Consider a set of Patricia nodes with maximum cardinality $S$ such that $u,w\in S$ implies $b(u)\sqsubset b(w)$ or $b(w)\sqsubset b(u)$ and the Patricia node $v\in S$ with maximum label length stores a key $k$.
	We call this set the \emph{Branch Set} of $k$.
\end{restatable}

\begin{wrapfigure}[21]{r}{0.4\textwidth}
	\vspace{-0.3cm}
	\includegraphics[width=0.4\textwidth, trim = 0cm 10.5cm 25.5cm 0cm, clip=true]{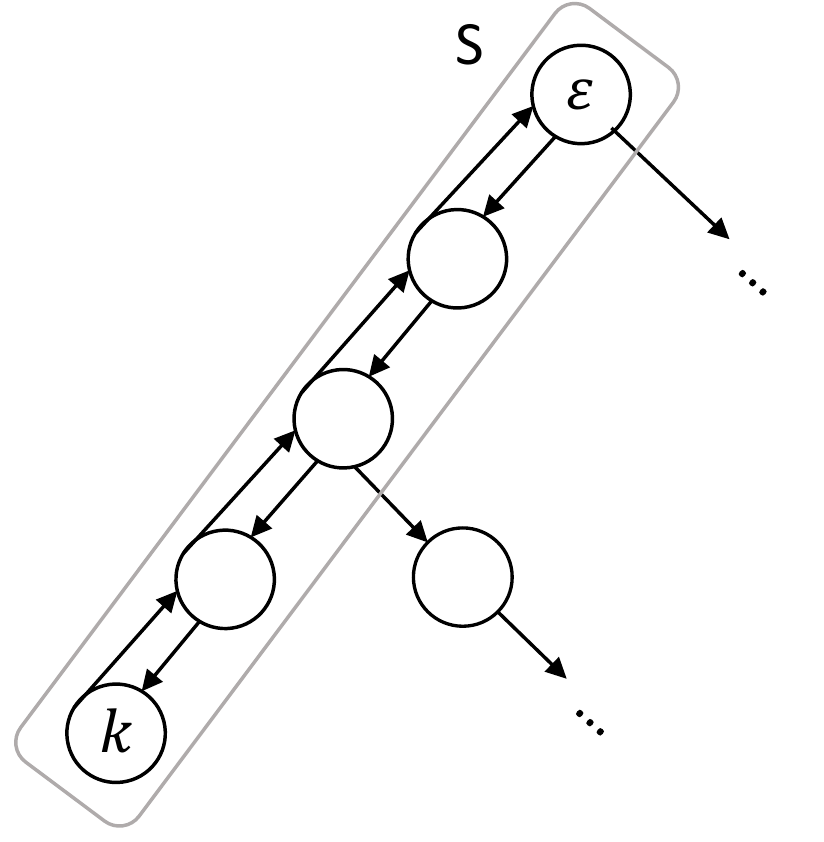}
	\caption{Branch Set $S$ from the root ($\varepsilon$) to a leaf node ($k$) is the set of nodes in a branch of the hashed Patricia Trie in a legal state.}
	\label{figure:Presentation_List_Branch}
\end{wrapfigure}
We apply a technique called Linearization \cite{Onus:2007:LLS:2791188.2791198} to all Patricia nodes to create a list sorted by label length for all Branch Sets in finite time.
It is important to exclude Msd nodes from the Linearization.
Msd nodes are not presented nor do they delegate presentation messages.
Due to deletion of a Patricia node, an Msd node might still be presented accidentally.
However, we limit this problem by carefully handling deletions and insertions as described later.
For the Linearization to work, we need to make sure that all nodes in a Branch Set are brought into and kept in a weakly connected state.

A Patricia node $v$ with an empty parent edge tries to recreate connectivity by doing a modified \textsc{PrefixSearch}($b(v)$) similar to the one presented in \cite{HashedPatriciaTrie}.
The procedure we call \textsc{BinaryPrefixSearch}($b(v)$) does not search for $b(v)$ itself and only consists of the binary search phase of the PrefixSearch($x$) of \cite{HashedPatriciaTrie}, returning a copy of a Patricia node $w$ with $b(w)\sqsubset b(v)$.
If no such node exists, we conclude that the root node is non-existent and trigger a construction of it.

Further, we let every Patricia node present its own label to its parent and its two children using a \emph{presentation message}.
A message presenting $v$ is delegated to the Patricia node $w$ closest to $v$.
Delegation happens only by using edges and intermediate nodes sharing a Branch Set with $v$.
All nodes maintain connections to labels which are closest to them while delegating presentations of other labels.
This behavior resembles the Linearization approach presented in \cite{Onus:2007:LLS:2791188.2791198}, allowing our protocol to form a sorted list for all branches of the \HPT.

There is still an important issue we need to resolve.
Consider a Branch Set $S$ of nodes.
We can end up in situations where nodes exist that do not contribute to the hashed Patricia Trie.
Such nodes can be Patricia nodes not storing a key.
To reduce memory demands, we are interested in removing unneeded nodes.
In principle, deletion without harming connectivity can be done since the root node is always known implicitly.
However, deletion increases distances.
In addition, our protocol must provide the ability to create and integrate new Patricia nodes.
When inserting and deleting nodes, we need to make sure that no loops are possible in which the system may take forever to stabilize.
We will explain how to avoid such loops in the following.

\subsection{Removal/Creation of Nodes}

Due to the implicitly known root node, deletion is possible and should be considered to reduce memory demands.
We distinguish between Msd nodes and Patricia nodes.
Our modification allows us to handle Msd nodes in a simple and efficient way.
We try to avoid any edges pointing to Msd nodes such that eventually, deletion and creation of Msd nodes does not influence the Patricia nodes and their structure.
Only if there are two Patricia nodes $u,\,w$ connected via a bidirectional edge, an Msd node between them might be inserted.
Fortunately, Msd labels can be calculated locally and a corresponding Msd node can easily be accessed by querying the \DHT.
Any Msd node which is not between such two Patricia nodes, or has an incorrect label, is deleted.

A Patricia node $v$ (except for the root) is \emph{unnecessary} if $key(v)=nil$ and there are no two Patricia nodes $u,w$, both storing a key, such that $b(v) = \ell cp(b(u), b(w))$, i.e., $u$ should be in a different subtree than $w$ below $v$.
From a global point of view, we can easily decide if $v$ is unnecessary solely based on information about the situation below $v$.
From a local perspective, $v$ cannot decide but only assume to be unnecessary if it lacks child edges.
We make the local protocol aggressive by deleting any node that lacks child edges and assumes to be unnecessary.
This also introduces deletion of necessary Patricia nodes.
Therefore, we always trigger a creation of new \HPT\ nodes by Patricia nodes below the new ones.
This avoids loops of creation and deletion of nodes, because newly created nodes inherently have valid children and, thus, do not assume to be unnecessary.
Patricia nodes storing a key essentially form a stable starting point, because they are never deleted.
The need to insert a Patricia node is detected by comparing a node's parent edge with the corresponding edge provided by the parent.

\subsection{Distribution of References to Keys}

In addition, \PN\ tries to achieve the following.
Every inner Patricia node $v$ with two children should store a $key_{2}(v)=b(w)$ which points to a leaf node $w$ storing a key such that $b(v)\sqsubset b(w)$.
The respective leaf node $w$ stores an $r(w)$ value pointing to $v$.
This property is helpful for efficient prefix search.
No matter at which Patricia node the prefix search stops, there is a key referenced having the node's label as a prefix.
This key is a valid result for the search query.
We call all inner Patricia nodes with two children and the root node \emph{key$_2$ nodes}.
Due to the resemblance of the hashed Patricia Trie with a binary tree, \cref{fact:enough_key2_spaces_for_leaves} holds.

\begin{restatable}{fact}{factEnoughKeyTwoSpacesForLeaves}
	\label{fact:enough_key2_spaces_for_leaves}
	Let $L$ be the number of leaf nodes.
	Let $I$ be the number of key$_2$ nodes.
	When the \HPT\ is in a legal state, it holds $I \leq L \leq I +1$.
	$L=I$, if the root has one child and $L=I+1$ if it has two.
\end{restatable}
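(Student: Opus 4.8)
The plan is to prove \cref{fact:enough_key2_spaces_for_leaves} by viewing the Patricia Trie in a legal state as essentially a full binary tree and counting. First I would recall the structural fact that, in a legal \HPT, every inner Patricia node other than the root has exactly two children (this is the defining compression property of a Patricia Trie: nodes with a single child that do not store a key are suppressed), and every leaf is a Patricia node storing a key. The root is the only node allowed to have a single child. Thus the key$_2$ nodes are precisely the inner nodes with two children together with the root, i.e.\ all inner nodes except possibly the root when the root has only one child.

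Next I would do the counting argument on the tree $T$ obtained by taking all Patricia nodes (Msd nodes are irrelevant to this statement). Let $n$ be the total number of Patricia nodes, $L$ the number of leaves, and let $I$ denote the number of key$_2$ nodes as in the statement. Consider first the case in which the root has two children. Then \emph{every} inner node has exactly two children, so $T$ is a full binary tree, and the standard identity for full binary trees gives (number of internal nodes) $= L - 1$. Since here the internal nodes are exactly the key$_2$ nodes, $I = L-1$, i.e.\ $L = I+1$. In the remaining case the root has exactly one child $c$; then the subtree rooted at $c$ is a full binary tree (every node in it that is internal has two children), so it has (internal nodes of the subtree) $= L-1$, and the key$_2$ nodes are those internal nodes of the subtree plus the root itself, giving $I = (L-1) + 1 = L$, i.e.\ $L = I$. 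In both cases $I \le L \le I+1$, and the two sub-statements about when equality holds follow exactly as computed. A clean way to unify the two cases is to count edge endpoints: $\sum_{v} (\text{number of children of }v) = n - 1$; every leaf contributes $0$, every key$_2$ node contributes $2$ except the root contributes $2$ or $1$, and every non-root non-key$_2$ inner node contributes $1$ — but in a legal \HPT\ there are no non-root inner nodes with a single child, so this sum is $2I$ or $2I-1$ depending on the root's degree, while simultaneously $n = L + I$ (every node is a leaf or a key$_2$ node, since the root is always counted among key$_2$ nodes); combining $n-1 = 2I$ or $2I-1$ with $n = L+I$ yields $L = I+1$ or $L = I$ respectively.

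The main obstacle is not the counting, which is routine, but making precise and justifying the structural claim that in a legal \HPT\ there are no ``useless'' Patricia nodes — that every non-root inner Patricia node has exactly two children and every leaf stores a key. This is really a consequence of the definition of the legal state of the \HPT\ (deferred to \cref{section:protocol_analysis:correctness}) together with the compression property of Patricia Tries; I would cite that definition and the discussion of \emph{unnecessary} nodes in \cref{section:SHPT_protocol} (a non-root Patricia node $v$ with $key(v)=nil$ and fewer than two key-storing descendants in distinct subtrees is unnecessary, hence absent in a legal state). Once that is in hand, the proof is the two-line counting argument above; I would present it as the edge-endpoint count to handle both root cases simultaneously, and then read off the stated refinement about $L=I$ versus $L=I+1$ from whether the root has one or two children.
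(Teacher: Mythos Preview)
Your argument is correct. The paper itself does not give a proof of this statement at all: it is stated as a \emph{Fact} and justified only by the sentence ``Due to the resemblance of the hashed Patricia Trie with a binary tree, \cref{fact:enough_key2_spaces_for_leaves} holds.'' Your counting argument is exactly the standard full-binary-tree leaf/internal-node count that this remark alludes to, and your case split on whether the root has one or two children matches the two cases in the statement. The structural premise you rely on---that in a legal state every non-root inner Patricia node has two children---is indeed part of the legal-state definition (\cref{definition:Correctness_of_HPT}): the only Patricia nodes present are those storing a key or arising as $\ell cp(k_1,k_2)$ for two keys, so non-root inner nodes are branching points. One minor caveat in your unified edge-count version: the equation $n = L + I$ tacitly assumes the root is not itself a leaf (otherwise it would be double-counted), which is fine here since the statement already presupposes the root has at least one child.
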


To assure that every leaf node is referenced by a key$_2$ node, we allow the root to store up to two key$_2$ values.
This reduces the number of hash table accesses created by our protocol, when the \HPT\ is in a legal state.\\

If we naively assign leaf nodes to key$_2$ nodes, this may lead to situations in which a key$_2$ node cannot get a key$_2$ value.
For an example, consider \cref{figure:Blocking_Example}.
The critical observation is that key$_2$ nodes with a shorter label, in general, have more possible leaf nodes they can point to than key$_2$ nodes with a longer label.
Therefore, our protocol aims at prioritizing key$_2$ nodes which are closer to leaf nodes.

\begin{figure}[htb]
	\begin{minipage}[c]{0.6\textwidth}
		\centering
		\includegraphics[width=\textwidth, trim = 0cm 14cm 21.5cm 0cm, clip=true]{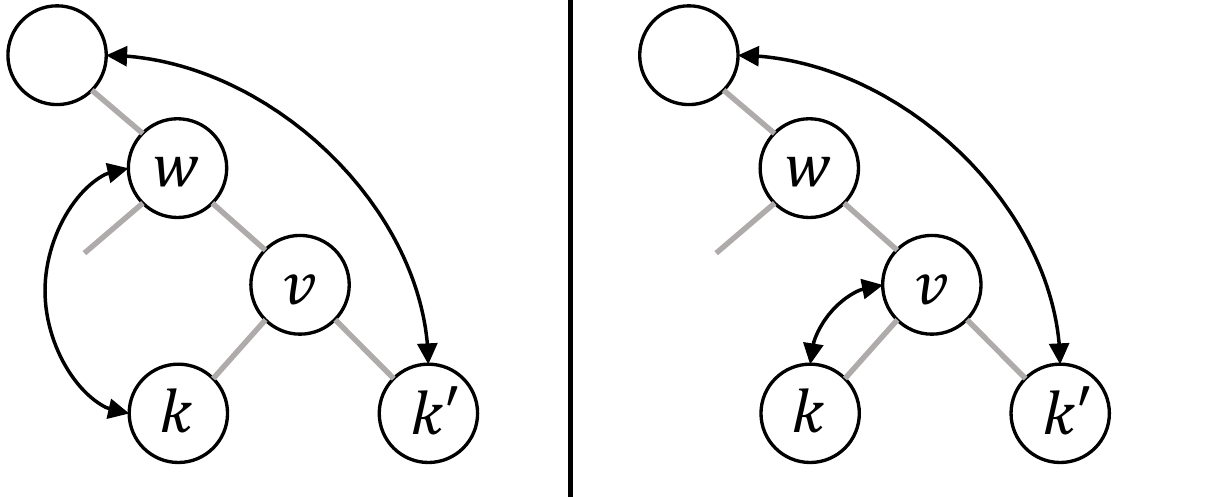}
	\end{minipage}
	\begin{minipage}[c]{0.4\textwidth}
		\caption{Example where $v$ cannot get a key$_2$ (left).
			The leaf nodes $k$ and $k'$ storing a key are already associated to Patricia nodes above $v$.
			The blocking of $v$ is resolved as $v$ takes over the key$_2$ of $w$ (right).}
		\label{figure:Blocking_Example}
	\end{minipage}
\end{figure}

We divide the protocol into three parts.
First, all nodes continuously check if they should store a key$_2$ or $r$ value and whether such a value points to a leaf node, respectively key$_2$ node.
Second, if a leaf node $v$ does not store a value in $r(v)$, it presents its label upwards in the \HPT\ by sending a message crossing only parent edges.
The first key$_2$ node $w$ without a key$_2$ receiving the message sets $key_2(w)=b(v)$.
Third, a key$_2$ node $v$ repairs in the following way.
If $key_2(v)$ points to leaf node $w$ with $b(v)\sqsubset b(w)$, there are two cases.

\begin{enumerate}
	\item $b(v)\sqsubset r(w)$: Then $key_2(v)$ is set to $nil$ since there may already be some key$_2$ node with longer label pointing at $w$.
	\item Else, $v$ has either longer label than $r(w)$ or $r(w)=nil$.
	      The protocol sets $r(w)=b(v)$.
\end{enumerate}

If $key_2(v)=nil$, a message is sent upwards in the \HPT\ and the first key$_2$ node $w$ with $b(v)\sqsubset key_2(w)$ responds to $v$.
Then, $key_2(v)$ is set to $key_2(w)$.
Eventually, $v$ takes over the key$_2$ value of $w$, because $w$ executes case a).

Intuitively, key$_2$ nodes without a key$_2$ pull values from nodes with shorter label.
Simultaneously, leaf nodes without an $r$ value present their label towards the root.

\section{Protocol Analysis}\label{section:Protocol_analysis}

In this section, we show that \PN\ is self-stabilizing and transforms the \HPT\ in finite time to a legal state.
Furthermore, we present results concerning memory usage and the number of hash table accesses and messages when the \HPT\ is in a legal state.

\subsection{Correctness}\label{section:protocol_analysis:correctness}

We begin by showing the correctness of our self-stabilizing protocol.
We use a commonly known technique introduced by Dijkstra in \cite{Dijkstra:1974:SSS:361179.361202}.
Our goal is to show \cref{theorem:HPTCorrect}.
For that we consider a sequence of intermediate states that are reached consecutively until the \HPT\ is in a legal state.
For every state we show \emph{convergence} towards the state and \emph{closure} within it, i.e., the properties of the state are kept by our protocol.

\begin{restatable}{theorem}{CorrectnessTheorem}
	\label{theorem:HPTCorrect}
	The algorithm creates in finite time a hashed Patricia Trie in a legal state out of any initial state in which the \DHT\ is in a legal state and there is a set of unique keys stored at \DHT\ nodes.
\end{restatable}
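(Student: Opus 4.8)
The plan is to follow the Dijkstra-style layering announced in the text: define a finite ascending chain of predicates on system states, $P_0 \supseteq P_1 \supseteq \dots \supseteq P_m$, where $P_0$ is simply ``the \DHT\ is legal and a set of unique keys is stored'', and $P_m$ is ``the \HPT\ is in a legal state'' (the definition of which still has to be pinned down precisely in \cref{section:protocol_analysis:correctness}). For each $i$ I would prove a \emph{closure} lemma --- once $P_i$ holds it continues to hold under every enabled action --- and a \emph{convergence} lemma --- from any state satisfying $P_{i-1}$ the protocol reaches a state satisfying $P_i$ in finitely many steps. Chaining these gives convergence to $P_m$ from $P_0$, and the closure lemma for $P_m$ is exactly the Closure half of self-stabilization; together with Convergence this proves \cref{theorem:HPTCorrect}.

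Concretely I would order the layers to mirror the protocol's structure. \textbf{Layer 1 (sanitation):} eventually no \HPT\ node stores malformed edge information and no edge points to a nonexistent node or to an \MsdSet\ node; this follows because such checks are performed on every \textsc{Timeout}, corrupted fields are only cleared (never propagated), and by the \DHT\ assumption a referenced label can always be resolved --- so once a bad field is cleared nothing recreates it. \textbf{Layer 2 (key preservation and base points):} the key set $\KeySet$ is invariant --- \DHT-nodes never delete keys, insertion never overwrites a key-storing Patricia node, and Patricia nodes storing a key are never deleted --- so the key-storing nodes form a permanent skeleton. \textbf{Layer 3 (weak connectivity and linearization):} every Patricia node with an empty parent edge runs \textsc{BinaryPrefixSearch}, so each Branch Set becomes and stays weakly connected; then, invoking the Linearization result of \cite{Onus:2007:LLS:2791188.2791198} applied to Patricia nodes only (Msd nodes are excluded from presentation and delegation), each Branch Set converges to a sorted list and all valid parent/child edges between closest Patricia nodes become bidirectional. \textbf{Layer 4 (node population):} the ``aggressive'' deletion rule removes every unnecessary Patricia node, and the rule that new nodes are always triggered by a Patricia node \emph{below} them guarantees freshly created nodes have valid children, hence are never unnecessary --- so no delete/create loop occurs and eventually exactly the required Patricia nodes are present; then Msd nodes get created precisely between bidirectionally connected Patricia node pairs (and every other Msd node deleted). \textbf{Layer 5 (key$_2$/$r$ distribution):} using \cref{fact:enough_key2_spaces_for_leaves} and the ``pull from shorter label'' / ``present upwards'' mechanism, together with the priority that favours key$_2$ nodes nearer the leaves, show every leaf ends up referenced by exactly one key$_2$ node (with the root allowed two), and that this is a fixed point.

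For the convergence lemmas the crux is always to exhibit a \emph{monotone potential} that the protocol strictly decreases until $P_i$ holds and that later actions cannot increase --- e.g.\ for Layer 1 the number of corrupted fields, for Layer 3 the ``distance'' of each presentation message to its target in the weakly connected Branch Set (the standard linearization potential), for Layer 4 a lexicographic combination of (number of unnecessary nodes, number of missing-but-required nodes) that cannot cycle precisely because created nodes are not unnecessary, and for Layer 5 the number of unmatched leaf/key$_2$ pairs with the closeness-to-leaf priority preventing the blocking configuration of \cref{figure:Blocking_Example} from recurring. Weakly fair action execution and fair message receipt ensure the decreasing steps are actually taken.

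I expect the main obstacle to be \textbf{Layer 4}, i.e.\ ruling out infinite create/delete oscillation while the structure is still globally wrong. A node only ``assumes'' it is unnecessary from a local view, so it may delete a genuinely needed node; the argument that this terminates hinges delicately on the asymmetry ``creation is always triggered from below, and anything created from below has valid children, hence is not (locally) unnecessary'', so each required node, once (re)created, is created for the last time. Making this rigorous requires carefully tracking, for each label that must eventually host a Patricia node, that after the last deletion of its descendants it is recreated and then permanently retained --- and interleaving this with the Msd create/delete churn (Layer 4's second half), which must be shown not to perturb the Patricia skeleton thanks to the modification of \cref{figure:Modified_HPT}. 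A secondary subtlety is Layer 5: showing the key$_2$ reassignment cannot loop forever, for which the priority toward leaves and the explicit take-over in case~a) must be shown to act as a well-founded ranking on key$_2$ nodes by label length.
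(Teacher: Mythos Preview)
Your layered plan mirrors the paper's Dijkstra-style strategy, but the \emph{ordering} of your layers contains a genuine gap. You place linearization (Layer~3) \emph{before} the termination of Patricia-node deletion and creation (Layer~4). This does not work: the Linearization result of \cite{Onus:2007:LLS:2791188.2791198} you invoke presupposes a fixed participant set, and while necessary Patricia nodes are still being deleted (because of their misleading local view) and recreated, each Branch Set $S$ keeps changing and the linearization potential need not decrease monotonically. The paper resolves this by inverting your order: Phase~\rom{1} first establishes that \emph{deletion of Patricia nodes stops} (\cref{lemma:main:Deletion_Stops}), via a bottom-up potential $\max_{v\in\mathcal{D}}|b(v)|$ over the set $\mathcal{D}$ of nodes and messages that can still trigger a deletion (\cref{lemma:last_nodes_with_empty_child_deleted}); only \emph{after} the Patricia-node set is frozen does Phase~\rom{2} reconstruct missing nodes and invoke Linearization (\cref{lemma:Every_Branch_Set_Bidirectional_Edges} explicitly uses that ``$S$ does not change'').

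The same inversion breaks closure of your Layer~1. You assert as an early closed predicate that ``no edge points to a nonexistent node'', but any Patricia node deleted in your later Layer~4 instantly creates dangling edges, so Layer~1 cannot be closed under the protocol. The paper therefore does \emph{not} globalize this property; early on it only clears edges of \emph{syntactically} malformed shape (\cref{lemma:initially_wrong_reference_information_vanishes}) and treats ``edges to a deleted node eventually vanish'' as a per-node, per-deletion fact (\cref{lemma:deleted_not_rebuilt_nodes_references_vanish}) rather than as an invariant layer. Finally, your proposed Layer~4 potential --- a lexicographic pair (number of unnecessary nodes, number of missing required nodes) --- does not obviously decrease, because \emph{necessary} nodes can be deleted from a wrong local view while unnecessary ones persist; the paper's max-label-length potential is precisely what makes the bottom-up termination argument go through, and you will need it (or something equivalent) \emph{before} you can appeal to linearization.
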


In the following, we briefly sketch the main proof by presenting a sequence of main lemmas that roughly reflect the states the system reaches.
Each main lemma thereby consists of multiple properties that are proven by a set of lemmas on its own.
The full proof consisting of all lemmas, their respective proofs, and the complete definition of a legal state of the \HPT\ can be found in \cref{appendix:correctness_proof}.

To prove the correctness captured in \cref{theorem:HPTCorrect}, we first need to formally define a legal state of the \HPT.
At this point, we only give an intuitive definition.
For the complete definition, see \cref{appendix:correctness_proof} \cref{definition:Correctness_of_HPT}.
Intuitively, the \HPT\ is in a legal state if we have as few \HPT\ nodes as possible in the system, all keys are stored correctly, the structure is consistent to the (modified) definition presented in \cref{section:HashedPatriciaTrie}, and the references to keys in key$_{2}$ nodes are existing and stored at correct nodes.

Initially, we only assume that a set of unique keys is stored at \DHT\ nodes.
The first lemma states that general repair mechanisms assure correctly stored keys and Patricia nodes.

\begin{restatable}{lem}{lemmaMainMiscRepair}
	\label{lemma:main:misc_repair}
	In finite time it holds:
	Every key $k$ is stored in a node $v\in\PatSet$ with $b(v)=k$.
	Furthermore, every node is stored at the \DHT\ node responsible for it.
	Consider any $v\in \TrieSet$ that is deleted.
	As long as $v$ is not reconstructed, in finite time it holds:
	\begin{enumerate}
		\item There is no presentation message for $b(v)$.
		\item There is no edge pointing towards $b(v)$ in the system.
	\end{enumerate}

\end{restatable}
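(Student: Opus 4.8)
The plan is to prove the three claims of \cref{lemma:main:misc_repair} essentially independently, relying on the self-stabilizing \DHT\ and the "aggressive" local checks described in \cref{section:SHPT_protocol}. For the first claim (every key $k$ lives in a node $v\in\PatSet$ with $b(v)=k$, and every node sits at its responsible \DHT\ node), I would argue as follows. Since the \DHT\ is in a legal state, every data item placed in it is eventually routed to the \DHT\ node responsible for its hash value, and the regular "check if my stored data is at the correct peer" action re-routes anything misplaced; by fair action execution and fair message receipt this converges in finite time. For keys specifically, a key $k$ is never deleted by the protocol (keys form the "stable starting point"), and when $k$ is (re)inserted at its responsible node the collision rule forbids it from being displaced by a non-key node sharing that hash value; hence in finite time $k$ resides in a Patricia node $v$ with $b(v)=k$. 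Closure is immediate: none of these actions ever moves correctly placed data or deletes a key.

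For claims (a) and (b) about a deleted node $v$, the key observation is that presentation messages for $b(v)$ and edges pointing at $b(v)$ are the only two ways $v$'s label can "survive" in the system, and both are self-limiting once $v$ is gone. For (a): a presentation message for $b(v)$ is, by the Linearization mechanism, always delegated toward the Patricia node closest to $b(v)$ using only edges within a Branch Set; since $v$ no longer exists, whichever node currently holds the message will, on processing it, find that $b(v)$ corresponds to no existing node and therefore drop it rather than store an edge (this is exactly case 2 of the edge-checking trichotomy, promoted to presentation handling). By fair message receipt the message is processed in finitely many steps, and since $v$ is not reconstructed no new presentation message for $b(v)$ is ever generated — only $v$ itself presents $b(v)$. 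Hence in finite time no presentation message for $b(v)$ exists, and this persists.

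For (b): any edge $e$ at a node $u$ with the property that $b(u)\concat e = b(v)$ (child edge) or $b(w)\concat e=b(u)$ with the pointed-to label equal to $b(v)$ (parent edge) is checked by $u$'s \textsc{Timeout} routine; since $v$ does not exist, the \DHT\ query for $b(v)$ returns nothing, so $e$ falls under case~2 (and also case~1 if malformed, or an Msd-style "treat as non-existing" case) and is cleared. New edges are only ever installed if they comply with the \HPT\ definition \emph{and} point to an existing node, so once cleared no edge toward $b(v)$ can reappear while $v$ stays deleted. The only subtlety is that clearing an edge toward $b(v)$ at $u$ might, via the "insert a Patricia node" or "recreate connectivity via \textsc{BinaryPrefixSearch}" mechanisms, cause $u$ to create \emph{some other} node — but crucially never $b(v)$ itself, since $b(v)$ is not on the path returned by a prefix search from any surviving node toward $u$ unless some key demands it (and that case is excluded because "$v$ is not reconstructed" is the hypothesis). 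Combining fair execution of the \textsc{Timeout} routines with the fact that $\MsgSet$ and the edge set only shrink with respect to references to $b(v)$, both (a) and (b) hold in finite time and are closed.

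The main obstacle, and the place I would spend the most care, is ruling out a \emph{loop} in which clearing an edge toward $b(v)$ triggers a chain of insertions/deletions that keeps regenerating references to $b(v)$ (or keeps the Linearization from draining the stale presentation message). The paper's own design addresses this — newly created nodes always have valid children and so are never "unnecessary", and creation is always triggered from below — but I would need to make precise that none of these triggered creations has label $b(v)$ under the hypothesis that $v$ is not reconstructed, and that the delegation of the $b(v)$-presentation message is strictly monotone (e.g. moves to a node strictly closer to $b(v)$ in the Branch-Set order, so it cannot cycle) until it reaches a point where it is dropped. Making that monotonicity/termination argument airtight is the crux; the rest is routine application of fair message receipt and weakly fair action execution.
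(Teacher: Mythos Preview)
Your decomposition and the argument for the first claim (keys end up in correctly-labeled Patricia nodes, and every node sits at its responsible \DHT\ peer) match the paper's approach; the paper also explicitly handles the sub-case where a key $k$ sits in a Patricia node $v$ with $b(v)\neq k$, which triggers deletion of $v$ and reinsertion of $k$, but that is a minor addition.

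The real gap is in your treatment of (a) and (b). You assume that when a node processes a presentation message for $b(v)$ it will notice that $v$ no longer exists and \emph{drop} the message, and likewise that ``new edges are only ever installed if they \ldots\ point to an existing node''. That is not how the Linearization procedure works: \textsc{Linearize} compares labels and updates $p_{-}$ or $p_x$ purely on the basis of prefix relations, without issuing a \textsc{DHT-Search} to verify the presented label still names a live node. So a stale presentation of $b(v)$ \emph{can} install an edge toward the already-deleted $v$. The paper's own proof says exactly this: ``a presentation is processed and perhaps a new edge towards $v$ in the \HPT\ is created. In any case the number of messages is reduced.''

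Because of this, the paper orders the two claims strictly: first (a) is established by observing that no \emph{new} presentation messages for $b(v)$ are generated (only $v$ presents itself), delegation terminates by the Linearization monotonicity argument, and hence the finite stock of such messages is eventually consumed---possibly leaving behind bad edges. Only \emph{after} (a) holds does (b) follow: each remaining edge toward $b(v)$ is cleared at the next periodic check, and since there are no more presentation messages for $b(v)$, it cannot be reinstalled. This sequential structure is what dissolves the ``loop'' worry you flag at the end; your attempt to argue (a) and (b) in parallel, with the incorrect dropping assumption, is precisely what forces you into that concern.
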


From now on, the proof consists of three phases.
In a first phase, all Patricia nodes which are not needed for the final structure are removed.
The second phase considers the reconstruction of the binary tree structure of the \HPT\ and corrects the sets of Patricia nodes and Msd nodes.
In the third and last phase, information stored in key$_2$ and $r$ fields is made consistent.

\subsubsection{Phase \rom{1} -- Deletion of Patricia nodes}

In this phase, the protocol makes sure that all Patricia nodes which are not needed in the final structure are removed.
Initially, information stored at \HPT\ nodes that directly contradicts the definition of the \HPT\ is cleared.
This can be information such as a parent edge at $v\in\TrieSet$ that is no suffix of $b(v)$.
After that, Patricia nodes and Msd nodes in unnecessary subtrees, i.e., subtrees not containing a key, and unnecessary inner Patricia nodes are gradually removed.
Every leaf node in an unnecessary subtree detects in finite time that it has no valid children and is deleted.

\begin{restatable}{lem}{lemmaMainEmptySubtreesNotNeededNodesRemoved}
	\label{lemma:main:Empty_Subtrees_Not_Needed_Nodes_Removed}
	In finite time, every unnecessary Patricia node is removed.
	A Patricia node $v$ is \emph{unnecessary} if there are no two $keys$ $k_1$ and $k_2$ with $b(v)=\ell cp(k_1, k_2)$.
\end{restatable}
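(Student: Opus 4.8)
The plan is to prove \cref{lemma:main:Empty_Subtrees_Not_Needed_Nodes_Removed} by combining the guarantees already established in \cref{lemma:main:misc_repair} with a potential-function argument over the set of unnecessary Patricia nodes. By \cref{lemma:main:misc_repair}, in finite time every key is stored in a Patricia node labeled by that key, every node resides at its responsible \DHT\ peer, and once a node is deleted it is never reconstructed unless genuinely needed (no stale presentation messages or edges survive). We work from the first state in which these properties hold; call it $s^*$. From $s^*$ onward, a key-storing Patricia node is never deleted, so these nodes form a fixed stable backbone. The target is to show every remaining Patricia node $v$ with $key(v)=nil$ and no pair of keys $k_1,k_2$ satisfying $b(v)=\ell cp(k_1,k_2)$ is eventually removed.

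First I would argue that any unnecessary node eventually has no valid child edge. An unnecessary node $v$ either lies in a subtree containing no key at all, or has at most one key-bearing node strictly below it. In the latter case, there is a unique direction $x\in\{0,1\}$ such that no key lies below $v$ via $p_x(v)$; I would show that after $s^*$ the edge-checking mechanism (Case 2 of the edge-checking cases, plus the rule that edges to Msd nodes are treated as dangling) clears any child edge of $v$ that does not point to a surviving Patricia node, and that no new valid child edge toward a non-key node can be created below $v$ because node creation is always triggered from below by a node that inherently has valid children. So I would proceed by induction on the height of the unnecessary node within the (finite) set of \HPT\ nodes: the deepest unnecessary nodes are leaves, are detected as lacking child edges by the aggressive deletion rule, and are removed; once removed they stay removed by \cref{lemma:main:misc_repair}; this strictly decreases the number of unnecessary nodes, and the induction hypothesis then applies to the next layer up, whose children among unnecessary nodes have now disappeared.

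The key formalization is a potential $\Phi$ equal to the number of unnecessary Patricia nodes present in the system. I would show $\Phi$ is non-increasing after $s^*$ — node creation never produces an unnecessary node, since a freshly created Patricia node has a valid child (the creator) below it and thus is on a key-to-key path or is an ancestor of a key — and that $\Phi$ strictly decreases infinitely often as long as $\Phi>0$, because by weakly fair action execution the $\textsc{Timeout}$ routine at some currently-childless unnecessary node eventually fires and deletes it. Since $\Phi$ is a non-negative integer, it reaches $0$ in finite time, which is the claim. I would also note the root is exempt by definition and the root is implicitly always known, so its special treatment does not interfere.

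The main obstacle I anticipate is ruling out a deletion/recreation loop: a node $v$ that is in fact \emph{necessary} but temporarily childless could be deleted, and one must be certain the aggressive deletion does not cause an infinite oscillation that prevents $\Phi$ from converging — or, conversely, that an \emph{unnecessary} node is not kept alive forever by being repeatedly re-presented from below or re-inserted as a perceived missing parent. The resolution, which I would make precise, is exactly the design invariant flagged in \cref{section:SHPT_protocol}: new \HPT\ nodes are created only by Patricia nodes strictly below them, so a node can reappear only when some surviving descendant demands it, and such a demand, after $s^*$, originates only from the stable key-bearing backbone and the genuinely necessary branching nodes above it; hence an unnecessary node, once deleted, generates no such demand and \cref{lemma:main:misc_repair}(b) keeps all edges toward it cleared. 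Care is also needed to ensure that Msd nodes in the doomed subtrees, which are never presented and never delegate, cannot masquerade as valid children and block the deletion of a Patricia node — this follows from the rule that edges toward Msd nodes are uniformly treated as dangling.
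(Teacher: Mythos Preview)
Your overall strategy is sound, but there is a real gap in the monotonicity claim for the potential $\Phi$. You assert that ``node creation never produces an unnecessary node, since a freshly created Patricia node has a valid child (the creator) below it and thus is on a key-to-key path or is an ancestor of a key.'' Being an ancestor of \emph{one} key does not make a node necessary; necessity requires $b(v)=\ell cp(k_1,k_2)$ for \emph{two} keys. Concretely, at your state $s^*$ (only \cref{lemma:main:misc_repair} assumed) there can still be Patricia nodes $w$ sitting in empty subtrees, i.e., with no key at or below them. Such a $w$ can execute \textsc{CheckParentEdgeInfo}, detect a divergence between its parent edge and the parent's child edge $e_{par}$, and create a fresh branching node $n$ with $b(n)=\ell cp(b(w),\,b(par)\concat e_{par})$. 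One of $n$'s two child directions leads toward $w$, hence into a key-free subtree, so $n$ is itself unnecessary and $\Phi$ strictly increases. Since $\Phi$ is not monotone from $s^*$ onward, the combination ``non-increasing plus decreases infinitely often'' does not go through.

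The paper avoids precisely this by a two-stage decomposition. First (\cref{lemma:empty_subtree_nodes_vanish}) it eliminates all nodes with no key below them using a different potential, the \emph{maximum label length} over that set; this potential \emph{is} monotone under creation because any newly created node has strictly shorter label than its creator. Only once empty subtrees are gone does the paper switch to the count potential over the remaining unnecessary inner nodes (\cref{lemma:inner_pat_nodes_without_rigths_leave}); at that point every surviving Patricia node has a key below it, so any newly created branching node has keys in both child subtrees and is genuinely necessary, making the count non-increasing. Your induction-on-depth idea for the \emph{decrease} part is essentially right and matches the paper's reasoning; what you are missing is the need to stage the argument so that the relevant potential is monotone in each phase.
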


\begin{figure}[htb]
	\begin{minipage}[c]{0.68\textwidth}
		\centering
		\includegraphics[width=\textwidth, trim = 0cm 11.5cm 17.4cm 0cm, clip=true]{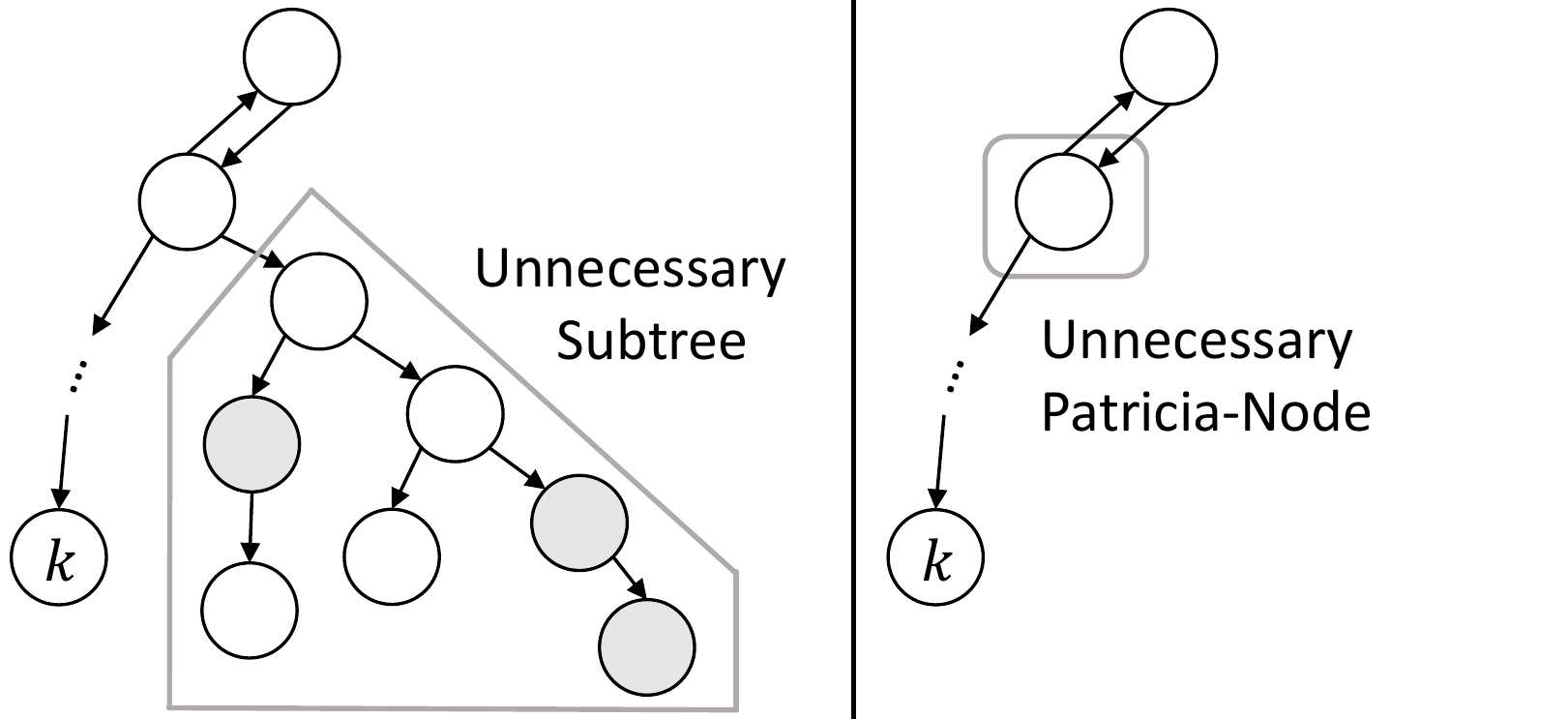}
	\end{minipage}
	\begin{minipage}[c]{0.28\textwidth}
		\caption{Node $k$ stores a key. Msd nodes are sketched in grey. First, unnecessary subtrees are deleted (left), then remaining unnecessary Patricia nodes are removed (right).}
		\label{figure:Deletion_Phase}
	\end{minipage}
\end{figure}

Patricia nodes which are necessary may still be deleted because of their local perspective.
However, this deletion is limited and stops after finitely many deletions.
This holds, because Patricia nodes are only deleted due to incorrect child edges.
If a new Patricia node with a long label is inserted, its child edges are initially valid and stay valid.
There cannot be infinitely many deletions triggered, because the structure stabilizes bottom-up.

\begin{restatable}{lem}{lemmaMainDeletionStops}
	\label{lemma:main:Deletion_Stops}
	In finite time, every Patricia node has valid child edges pointing to Patricia nodes and no further Patricia node is deleted.
\end{restatable}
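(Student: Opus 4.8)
The plan is to build on Lemmas~\ref{lemma:main:misc_repair} and~\ref{lemma:main:Empty_Subtrees_Not_Needed_Nodes_Removed} and to argue by a bottom-up (increasing-label-length) induction over the Patricia nodes that will belong to the final structure. After Lemma~\ref{lemma:main:Empty_Subtrees_Not_Needed_Nodes_Removed} we may assume that in finite time every unnecessary Patricia node has been removed; by the closure part of that lemma no unnecessary node reappears. Thus the only Patricia nodes that can exist from some point on are (i) the \emph{necessary} ones --- the root, the key-storing leaves, and the inner nodes $v$ with $b(v)=\ell cp(k_1,k_2)$ for two keys --- and (ii) transient necessary nodes that get deleted and possibly re-created. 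The claim reduces to: every necessary Patricia node is, from some point on, permanently present and equipped with valid child edges to (necessary) Patricia nodes, and no further deletion of any Patricia node occurs.

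First I would pin down \emph{why a node gets deleted}: by the protocol description in Section~\ref{section:SHPT_protocol}, a Patricia node is deleted only when it lacks a child edge and assumes itself unnecessary; key-storing nodes are never deleted. So the base case is immediate: key-storing leaves are permanent, and by Lemma~\ref{lemma:main:misc_repair} each key $k$ lives in a node $v$ with $b(v)=k$ whose child edges are (trivially, being a leaf) valid and stay valid. For the inductive step, consider a necessary inner node $v$ with $b(v)=\ell cp(k_1,k_2)$, and suppose by induction that the two necessary children $w_0,w_1$ of $v$ in the final structure --- the closest necessary descendants of $v$ in each of its two subtrees --- are already permanently present with valid child edges. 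I then argue that $v$ must eventually become permanent: if $v$ is currently absent, then $w_0$ (say) has an empty or wrong parent edge, runs \textsc{BinaryPrefixSearch}$(b(w_0))$, and by comparing its parent edge with the edge returned/provided it detects the need to insert a Patricia node on the path above it; per the ``Removal/Creation of Nodes'' rule, creation is triggered by the node \emph{below}, so $v$ (or an ancestor of $w_0$ strictly below the current closest node above) gets created, and a newly created node is born with a valid child edge by construction, hence does not assume itself unnecessary. Once both of $v$'s final children are permanent and $v$ itself is present with its two valid child edges pointing at them, $v$ can never again lack a child edge and is therefore never deleted --- it is permanent. A standard linearization/convergence argument (the presentation messages of Section~\ref{section:SHPT_protocol}, mirroring \cite{Onus:2007:LLS:2791188.2791198}) is what guarantees that the two child edges of $v$ actually converge to point at the correct closest necessary descendants in finite time; I would cite the supporting lemmas in Appendix~\ref{appendix:correctness_proof} here rather than reprove linearization.

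It remains to rule out an \emph{infinite} sequence of deletions elsewhere, i.e. to discharge the ``no further Patricia node is deleted'' clause globally rather than just for the necessary nodes. The key point is the bottom-up nature of the stabilization: order the finitely many necessary Patricia nodes by label length (ties broken arbitrarily) and observe that the induction above makes each one permanent in finite time, so after the last of them stabilizes there are \emph{no} Patricia nodes left except necessary, permanent ones (any extra node would be unnecessary and removed, and stays removed by Lemma~\ref{lemma:main:Empty_Subtrees_Not_Needed_Nodes_Removed}'s closure); with every surviving node holding two valid child edges, the deletion guard is never again enabled.

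I expect the main obstacle to be the creation/deletion interaction: one must make watertight the claim that the aggressive local deletion of ``assumed-unnecessary'' nodes cannot ping-pong forever with re-creation. The load-bearing facts are (a) necessary nodes are re-created only from \emph{below}, where the children already exist, so a freshly created node immediately has a valid child edge and never self-diagnoses as unnecessary; and (b) deletions therefore strictly ``progress'' up the tree and are bounded by the finite number of necessary nodes times a bounded number of transient re-creations per node --- there is no cycle. Carefully stating and proving (a) and (b), together with the finite-time convergence of each node's child edges via linearization, is where the real work lies; the rest is the inductive bookkeeping sketched above. I would defer the detailed case analysis (wrong-form edge vs.\ dangling edge vs.\ mismatched-edge, as enumerated in Section~\ref{section:SHPT_protocol}) and the linearization details to the appendix and keep the main-text proof at the level of the induction and the two structural facts.
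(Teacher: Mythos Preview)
Your overall plan---bottom-up stabilization from the key-storing leaves---matches the paper's intuition, but the paper does \emph{not} implement it as an induction over the necessary Patricia nodes. Instead it defines a single potential $\phi=\max_{e\in\mathcal{D}}|b(e)|$ where $\mathcal{D}$ collects all ``dangerous'' elements: Msd or non-existent nodes that some Patricia node points to with a child edge, presentation messages for such labels, and Patricia nodes that currently look unnecessary locally (\cref{lemma:last_nodes_with_empty_child_deleted}). It then shows $\phi$ never increases and strictly decreases until $\mathcal{D}=\emptyset$, at which point every remaining Patricia node is a key leaf, a key inner node, or an inner node with two existing Patricia children, and the deletion guard can never fire again (\cref{lemma:no_more_deletions_happen}).

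Your inductive version has a concrete gap at the creation step. You argue that a newly created necessary node $v$ ``is born with a valid child edge by construction, hence does not assume itself unnecessary.'' But the deletion guard is $\textsc{Children}(v)<2$, not $\textsc{Children}(v)=0$: one valid child is not enough. When $v$ is created by some $w$ below it, $v$'s \emph{other} child edge is inherited from the parent's former edge $e_{par}$, and nothing in your induction hypothesis rules out that $b(par)\concat e_{par}$ is non-existent or an Msd node. If it is, $v$ will clear that edge on its next check and promptly delete itself, and your induction stalls. The paper's potential handles exactly this: if $|b(v)|\geq\phi$, then that inherited child has label length $>|b(v)|\geq\phi$, so if it were in $\MsdSet\cup\mathcal{N}$ it would already have been in $\mathcal{D}$ with larger label, contradicting the maximality of $\phi$. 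This is the load-bearing step, and it genuinely needs the ``max over bad elements'' invariant rather than an induction indexed only by the good (necessary) nodes.

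Relatedly, your appeal to linearization to make $v$'s child edges converge to $w_0,w_1$ is circular here: the linearization lemmas you would cite (\cref{lemma:non_leaving_nodes_reached_from_root}, \cref{lemma:Every_Branch_Set_Bidirectional_Edges}) live in Phase~\rom{2} and explicitly assume \cref{lemma:no_more_deletions_happen}, which is what you are trying to prove. To salvage your approach you would have to strengthen the induction hypothesis to also control all child edges and stale messages with labels longer than the current threshold---at which point you have essentially reinvented the paper's potential $\phi$.
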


\subsubsection{Phase \rom{2} -- Reconstruction}

In the second phase, \PN\ reconstructs the \HPT\ by rebuilding missing Patricia nodes and repairing connections.
Since every node tries to create a parent edge pointing to a Patricia node with shorter label, eventually all missing Patricia nodes are detected and can be inserted.
The process works in a bottom-up fashion, i.e., Patricia nodes with longer labels reconstruct missing nodes with shorter ones.
The Patricia nodes storing a key as well as the root node act as fixed points in this case, because they are never deleted once constructed.

\begin{restatable}{lem}{lemmaMainReconstructionRootCorrectPointing}
	\label{lemma:main:Reconstruction_Root_Correct_Pointing}
	In finite time, the root node exists and no Patricia node points to an Msd node.
	Furthermore, missing Patricia nodes are reconstructed.
	Also, every Patricia node has valid edges pointing only to existing Patricia nodes, i.e., there is a path from every Patricia node to the root and there is a path from the root to every Patricia node.
\end{restatable}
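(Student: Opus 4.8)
The starting point is the situation already guaranteed by \cref{lemma:main:misc_repair,lemma:main:Empty_Subtrees_Not_Needed_Nodes_Removed,lemma:main:Deletion_Stops}: from some state on, every key $k$ lies in a Patricia node labelled $k$, every node is stored at the \DHT\ node responsible for it, a deleted node leaves behind neither a presentation message for its label nor an edge pointing to it, every unnecessary Patricia node is gone, no Patricia node is deleted ever again, and every surviving Patricia node has valid child edges pointing to Patricia nodes. In particular $\PatSet$ only ever grows from now on, and the Patricia nodes storing a key (and, once it exists, the root) act as fixed points. The plan is to prove the four assertions of the lemma roughly in the stated order, reusing these invariants throughout.

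\emph{No edges to Msd nodes; the root exists.}
On every \textsc{Timeout} a Patricia node $v$ reconstructs, from each of its edges, the label of the node the edge resolves to, asks the \DHT\ for a copy of that node, and --- since by design an edge resolving to an Msd node is treated exactly like an edge to a non-existing node --- clears any edge that resolves to an Msd node or to nothing. New edges are only ever created by \textsc{BinaryPrefixSearch} (which returns a Patricia node with strictly shorter label), by the insertion of a Patricia node, or by locally storing an edge that complies with the modified \HPT\ definition; none of these stores an edge to an Msd node. As $\PatSet$ is no longer shrinking, a cleared edge to an Msd node is never reintroduced, so in finite time no Patricia node points to an Msd node, permanently. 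For the root, pick a Patricia node $v$ of minimum label length; one exists because at least one key node does and key nodes are never deleted. Any \emph{valid} parent edge of $v$ passes the form check ($p_{-}(v)$ is a suffix of $b(v)$ and the resolved node $w$ satisfies $b(w)\concat p_{-}(v)=b(v)$), hence resolves to a Patricia node of strictly smaller label length --- contradicting minimality. So on a \textsc{Timeout} $v$ clears its parent edge and runs \textsc{BinaryPrefixSearch}($b(v)$); this finds no Patricia node above $v$, which triggers a construction of the root. The root has label $\varepsilon$, is never unnecessary, and is never deleted, so it persists.

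\emph{Reconstruction and two-way connectivity.}
Call a Patricia node \emph{necessary} if it is the root, a key node, or an $\ell cp$ of two distinct keys; in a legal state these are exactly the Patricia nodes present. I would first show that in finite time every non-root Patricia node permanently holds a valid parent edge to an existing Patricia node above it: whenever its parent edge is empty or fails a check it is cleared and, via \textsc{BinaryPrefixSearch}($b(v)$), replaced either by a valid edge to some Patricia ancestor (which by the previous step and by \cref{lemma:main:Deletion_Stops} is a Patricia node that is never deleted, so this edge stays valid forever) or by the discovery that no ancestor exists --- impossible once the root is present. This already yields a directed path of parent edges from every Patricia node to the root. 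Next, missing necessary nodes get inserted by induction on label length from the longest downward: the longest necessary node is a key node and already exists; for a necessary node $u=\ell cp(k_1,k_2)$, its two branch-children (the closest necessary nodes below it on the branches to $k_1$ and $k_2$) are necessary and have strictly longer labels, hence exist permanently by the induction hypothesis, and once they route their presentations up to the Patricia node currently just above $u$, the mismatch between a branch-child's parent edge and the child edge supplied by that ancestor is detected and triggers the creation of $u$ from below with child edges to the two branch-children. A freshly created $u$ has valid children and is not unnecessary, so it is never deleted; there are finitely many necessary nodes, so in finite time all are present. Finally the Linearization/presentation mechanism settles each Patricia node's parent and child edges to the closest Patricia nodes in its Branch Set (\cref{definition:branch_Set}); combined with upward connectivity this gives a path from the root to every Patricia node as well, which completes the lemma.

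\emph{Main obstacle.}
The delicate part is the termination argument that ties the pieces together. In the asynchronous model, creating nodes ``from below'' and aggressively deleting nodes that \emph{locally} look unnecessary could a priori chase each other indefinitely, and the reconstruction must converge to \emph{exactly} the necessary set with \emph{exactly} the closest-ancestor and closest-descendant edges despite in-flight presentation messages (including occasional spurious presentations of Msd nodes that survived a deletion). I would route around this by anchoring everything on \cref{lemma:main:Deletion_Stops}: after deletions have stopped, every node created thereafter --- always created with valid children, hence never unnecessary --- survives, so reconstruction makes monotone bottom-up progress over the finite set of necessary nodes; a separate, smaller induction on label length then shows the parent and child edges stabilise to the closest Patricia nodes. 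One also has to check that \textsc{BinaryPrefixSearch} still behaves correctly while Msd nodes are missing or wrong (it uses only the binary-search phase and returns some Patricia ancestor, or none, which is all that is needed here) and that Msd nodes, being excluded from Linearization, never permanently capture a parent or child edge of a Patricia node.
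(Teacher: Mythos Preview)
Your decomposition mirrors the paper's (root via \textsc{BinaryPrefixSearch}, parent edges to existing Patricia nodes, exclusion of Msd pointers, Linearization for downward reachability, reconstruction of missing inner nodes), and your handling of the root, the parent edges, and the Msd-pointer cleanup is essentially the same argument the paper gives, only in a slightly different order (the paper proves ``no Patricia node points to an Msd node'' \emph{after} parent edges are established, using the latter to kill Msd parent pointers; your order works too).

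The one place where your route genuinely diverges, and where there is a gap, is the induction for missing necessary nodes. You argue, for $u=\ell cp(k_1,k_2)$, that its two branch-children ``route their presentations up to the Patricia node currently just above $u$'' and that the resulting mismatch ``triggers the creation of $u$''. Neither half is justified at the point you invoke it. The routing is precisely the Linearization mechanism, which you only appeal to \emph{after} this step; without it, a branch-child $w$ merely has \emph{some} valid parent $p$ above $u$, not the closest one. And the creation rule (\cref{algorithm:Trie-Node_parent_edge_check}, \cref{algorithm:line:parent_should_be_patricia_node_with_two_children}) inserts the node at $\ell cp(b(w),\,b(p)\concat e_{p})$, which may well be a strict ancestor of $u$ (if $p$'s child edge diverges from $b(u)$ before reaching $|b(u)|$) or a strict descendant of $u$ (if it stays on $w$'s branch past $u$); nothing forces this to be $u$. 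So the downward induction on label length does not go through as stated.

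The paper sidesteps this by reversing your last two steps: it first runs the Linearization argument (\cref{lemma:non_leaving_nodes_reached_from_root}), which already absorbs the insertion of missing branching nodes as part of list formation, obtaining two-way reachability between the root and every Patricia node; only then does it conclude (\cref{lemma:missing_patricia_nodes_created}) by contradiction that no necessary node can still be missing, since otherwise some edge on the two paths would have to point to a non-existent or Msd node. Swapping your last two paragraphs and replacing the induction by this reachability-based contradiction closes the gap with no other changes to your plan.
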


It is crucial that no Patricia node points to an Msd node, because edges to Msd nodes are effectively treated as corrupt ones.
This property assures that Msd nodes are eventually excluded from the Linearization procedure.
Linearization then allows us to show that every Branch Set (see \cref{definition:branch_Set}) of Patricia nodes eventually forms a stable sorted list.
Incorrect Msd nodes are removed without affecting the rest of the \HPT\ and missing Msd nodes are inserted.
Further, correct Msd nodes are not deleted, because the two Patricia nodes closest to a correct Msd node are not deleted and do not change their edges any more.
All these properties are reflected in \cref{lemma:main:Linearization_Works_Msd_Nodes_Removed_And_Inserted}.
For completeness, we refer to the definition of incorrect and missing Msd nodes in the full proof in \cref{appendix:correctness_proof}.

\begin{restatable}{lem}{lemmaMainLinearizationWorksMsdNodesRemovedAndInserted}
	\label{lemma:main:Linearization_Works_Msd_Nodes_Removed_And_Inserted}
	In finite time for every Branch Set $S$ it holds:
	Between every pair of closest Patricia nodes $u,\,w\in S$ there is a bidirectional edge.
	Furthermore, every incorrect Msd node is removed and all missing Msd nodes are inserted.
\end{restatable}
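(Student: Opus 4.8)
The plan is to start from the situation guaranteed by \cref{lemma:main:Reconstruction_Root_Correct_Pointing}: the set $\PatSet$ already equals the final set of Patricia nodes, no Patricia node is deleted any more (\cref{lemma:main:Deletion_Stops}), every Patricia node carries only valid edges pointing to existing Patricia nodes, and no Patricia node points to an Msd node. The first step is to observe that the Linearization running on the Patricia nodes is \emph{decoupled} from the Msd nodes: Msd nodes are never presented and never delegate presentation messages, and since no Patricia node holds an edge towards an Msd node, every presentation message is from now on handled and delegated exclusively along edges between Patricia nodes. By \cref{lemma:main:misc_repair}, presentation messages carrying the label of a node that does not exist disappear in finite time; hence eventually only presentations of existing Patricia nodes remain, and each such presentation travels only among Patricia nodes of a common Branch Set (\cref{definition:branch_Set}).

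Next I would apply the Linearization argument of Onus et al.\ \cite{Onus:2007:LLS:2791188.2791198} separately to each Branch Set $S$. By \cref{lemma:main:Reconstruction_Root_Correct_Pointing} the Patricia nodes of $S$ are weakly connected, and within $S$ the prefix relation is total, so the labels of $S$ are linearly ordered by length; thus $S$ is exactly an instance of linearization on a line. Because $\PatSet$ is fixed and no Patricia node is deleted, a potential argument in the spirit of \cite{Onus:2007:LLS:2791188.2791198} (tracking, per edge, how many Patricia nodes of $S$ lie strictly between its endpoints) shows that in finite time every node of $S$ holds precisely the edges to its closest neighbours in $S$. Equivalently, between every pair of closest Patricia nodes $u,w \in S$ there is a bidirectional edge. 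Closure holds because the protocol replaces such an edge only by an edge to a strictly closer node, and no such node exists.

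Now fix the stabilized Patricia structure and turn to the Msd nodes. An Msd node is \emph{incorrect} if it does not lie between two Patricia nodes joined by a bidirectional edge, or if its label is not the $msd$-label (\cref{definition:msd_calculation}) determined by those two nodes; it is \emph{missing} if, for a pair $u,w$ of closest Patricia nodes joined by a bidirectional edge, the corresponding $msd$-label differs from both $b(u)$ and $b(w)$ but no node with that label exists. On every \textsc{Timeout} the \DHT\ node hosting an Msd node re-checks it by querying the \DHT\ for the two Patricia nodes its edges point to and for the edge between them; since the surrounding Patricia edges never change again, the verdict is eventually permanent, so each incorrect Msd node deletes itself in finite time and, by \cref{lemma:main:misc_repair}, its leftover presentation messages and any incoming edges vanish. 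Conversely, on \textsc{Timeout} one of $u,w$ computes the required $msd$-label locally, queries the \DHT, and, if the node is absent, issues a \textsc{DHT-Insert} for it with parent/child edges back to $u$ and $w$; by the assumed reliability of the \DHT\ this node appears, and thereafter it passes the consistency check and is never deleted, giving closure.

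The main obstacle is the convergence of the Linearization in the second paragraph, together with the verification that it is not perturbed by the concurrent deletion and (re)creation of Msd nodes. This is precisely where our modification of the \HPT\ is used: bidirectional edges run directly between neighbouring Patricia nodes, so Msd churn never touches a delegation path, and the accidental presentation of an Msd node left over from an earlier Patricia deletion is transient and dies out by \cref{lemma:main:misc_repair}. Making the potential bookkeeping precise in the presence of in-flight presentation messages, and carefully arguing the closure statements, is the technical core of the proof.
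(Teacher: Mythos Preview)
Your proposal is correct and follows essentially the same route as the paper: first use \cref{lemma:main:Reconstruction_Root_Correct_Pointing} and \cref{lemma:main:Deletion_Stops} to fix $\PatSet$ and decouple the Msd layer, then apply the Linearization argument of \cite{Onus:2007:LLS:2791188.2791198} per Branch Set to obtain the bidirectional edges, and finally handle the Msd nodes (incorrect ones deleted, missing ones inserted) once the Patricia edges are stable. The paper organizes this into three sub-lemmas (\cref{lemma:Every_Branch_Set_Bidirectional_Edges}, \cref{lemma:incorrect_msd_nodes_removed}, \cref{lemma:all_missing_msd_nodes_created}) with exactly the same logical dependencies you describe, including the key observation that no new incorrect Msd node can be created once all Branch Sets are linearized.
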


\subsubsection{Phase \rom{3} -- Consistency}

In the final phase the information stored in key$_2$ and $r$ fields is corrected to be consistent.
Due to \cref{fact:enough_key2_spaces_for_leaves}, we know that this can be achieved.
The root is allowed to store up to two key$_2$ values.
Therefore, there is always a way to store all keys of leaf nodes in key$_2$ nodes.
First, we show that nodes which should not store a key$_2$ value remove any such stored value.
Further, references in key$_2$ and $r$ fields are deleted when they contradict the relationship $r(key_2(v))=b(v)$, where $v$ is a key$_2$ nodes and $key_2(v)$ references a leaf node.

\begin{restatable}{lem}{lemmaMainKeyTwoRValueReferencesOnlyEligible}
	\label{lemma:main:Key_Two_R_Value_References_Only_Eligible}
	In finite time, only key$_2$ nodes store a key$_2$ and only leaf nodes store an $r$ value.
	Every key$_2$ value stored at a Patricia node $v$ points to a leaf $w$ with $b(v)\sqsubset b(w)$ and every $r$ value stored at a Patricia node $w$ points to a key$_2$ node $v$ with $b(v)\sqsubset b(w)$.
\end{restatable}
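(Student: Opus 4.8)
The plan is to argue convergence and closure for each of the four claimed properties, in a carefully chosen order, relying on the structural guarantees already established by the end of Phase~II. By Lemmas~\ref{lemma:main:Deletion_Stops}, \ref{lemma:main:Reconstruction_Root_Correct_Pointing}, and~\ref{lemma:main:Linearization_Works_Msd_Nodes_Removed_And_Inserted}, after finite time the Patricia structure is fixed: the set of Patricia nodes no longer changes, every Patricia node has valid bidirectional edges to its closest Patricia neighbours, and in particular each Patricia node can locally and correctly determine whether it is a leaf, an inner node with one child, or a key$_2$ node (an inner node with two children, or the root). Msd nodes never store key$_2$ or $r$ values that matter, and by Lemma~\ref{lemma:main:misc_repair} no stale edges pointing to deleted nodes remain; so from here on I only need to reason about the stabilized Patricia subtree together with the data items currently stored at the responsible \DHT\ nodes.

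First I would handle \emph{removal of ineligible key$_2$ and $r$ values}. A node $v$ that is not a key$_2$ node (it is a leaf, or an inner node with exactly one child) can detect this from its stabilized child edges via the checks of \cref{section:SHPT_protocol}; when $v$'s \textsc{Timeout} fires it clears any $key_2(v)$ it still holds. Symmetrically, a node $w$ that is not a leaf (it has a valid child edge) clears any $r(w)$ it still holds. Since the structure no longer changes and \textsc{Timeout} is executed infinitely often by weak fairness, after finite time every $key_2$ value lives at a genuine key$_2$ node and every $r$ value lives at a genuine leaf. Closure is immediate: a node that passes these checks has no reason to clear its value, and a node that is ineligible never writes such a value back, because new edges/values are only stored when they comply with the \HPT\ definition from the local perspective of the node.

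Next I would establish the \emph{prefix correctness} of the surviving references. Consider a key$_2$ node $v$ with $key_2(v)=\beta$. On \textsc{Timeout}, $v$ queries the \DHT\ for the node with label $\beta$; the stored value is accepted only if that node is a Patricia leaf and $b(v)\sqsubset\beta$ holds. (A value that fails because $\beta$ names an Msd node, a non-leaf, or a non-descendant is cleared, and $v$ then re-pulls a value through the upward-message mechanism of Phase~III; here I only need that \emph{bad} values are eventually purged, not that $v$ stays populated — that is the job of the next main lemma.) The symmetric argument applies to an $r$ value at a leaf $w$: $r(w)=\alpha$ survives only if the node labelled $\alpha$ is a key$_2$ node with $b(v)\sqsubset b(w)$ where $v$ is that node. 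Again weak fairness and the unchanging structure give convergence, and the local-compliance rule for writing new values gives closure. The one subtlety to check is that these repeated clear-and-repull cycles do not loop forever: a \emph{wrong} reference, once cleared, is replaced only by a reference obtained via the Phase~III upward-presentation protocol, which by construction only produces pairs satisfying $b(v)\sqsubset b(w)$ — so each clearing strictly reduces the number of violating references and cannot be undone.

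The main obstacle I anticipate is disentangling this lemma from the \emph{liveness} of the Phase~III redistribution it depends on, without circularity. The clean way is to prove exactly what is stated — that \emph{ineligible} and \emph{inconsistent} values disappear and stay gone — and to postpone to the final main lemma the stronger claim that every key$_2$ node actually acquires a valid value and every leaf an $r$ value. Concretely, I would phrase the convergence argument purely in terms of a monotone quantity (the total number of ineligible or prefix-violating references, which never increases and strictly decreases whenever a guilty node's \textsc{Timeout} fires) so that the argument terminates regardless of how the pulling mechanism behaves; the closure argument then only needs the local-compliance invariant, which is already part of the protocol's design. Everything else is routine case analysis over the three edge-checking cases of \cref{section:SHPT_protocol} applied to the $key_2$ and $r$ fields.
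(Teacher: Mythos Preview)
Your proposal is correct and mirrors the paper's own decomposition: the paper proves this main lemma by two sub-lemmas, first showing that non-key$_2$ nodes drop any stored $key_2$ and non-leaves drop any stored $r$ (your ``removal of ineligible values'' step), and then showing that surviving references satisfy the prefix relation (your ``prefix correctness'' step), with closure in each case following because the protocol never writes a value violating the respective condition. Your explicit separation of this lemma from the liveness of the subsequent redistribution lemma, and your potential-function phrasing, are slightly more detailed than the paper's argument but not a different route.
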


From now on, key$_2$ nodes not storing a key$_2$ try to acquire the key$_2$ of a key$_2$ node above them.
Leaf nodes lacking a reference in $r$ present themselves to key$_2$ nodes above them.
Therefore, the length of the longest label of a key$_2$ node not storing a staying key$_2$ reduces over time.
As this length is finite, the process terminates.
Thereafter, the $r$ values of leaf nodes are corrected, because the key$_2$ values do not change any more.

\begin{restatable}{lem}{lemmaMainKeyTwoRValueReferencesMadeConsistent}
	\label{lemma:main:Key_Two_R_Value_References_Made_Consistent}
	In finite time, all key$_2$ nodes store a stable key$_2$ and all leaf nodes store a stable $r$ value.
	For every key$_2$ node $v$, the node $w$ with $b(w)=key_2(v)$ is a leaf node with $r(w)=b(v)$.
\end{restatable}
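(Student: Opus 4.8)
The plan is to build on \cref{lemma:main:Key_Two_R_Value_References_Only_Eligible}, which we may assume: after some finite time only key$_2$ nodes hold a key$_2$ value, only leaf nodes hold an $r$ value, and every such stored reference already satisfies the correct prefix relation (a key$_2$ value points to a leaf strictly below, an $r$ value points to a key$_2$ node strictly below). We also have the structural result \cref{fact:enough_key2_spaces_for_leaves}, so a full matching between leaf nodes and key$_2$ nodes exists once the root is permitted two slots. From here I would argue in two stages: first that the set of key$_2$ nodes lacking a \emph{stable} key$_2$ value shrinks until it is empty, and then that the $r$ values settle once the key$_2$ values no longer move.

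For the first stage I would introduce the potential $\Phi$ defined as the maximum label length among key$_2$ nodes that do not currently hold a key$_2$ value which will never again be removed (a \emph{staying} key$_2$), with $\Phi = 0$ if no such node exists. I would show $\Phi$ is non-increasing and strictly decreases within finite time as long as $\Phi>0$. The key mechanism is the third part of the protocol from \cref{section:Protocol_analysis} together with the upward-presentation mechanism: a key$_2$ node $v$ with $key_2(v)=nil$ sends a message upward crossing only parent edges (which, by \cref{lemma:main:Linearization_Works_Msd_Nodes_Removed_And_Inserted}, now form a stable path up each branch), and the first key$_2$ node $w$ above it with $b(v)\sqsubset key_2(w)$ responds, after which $v$ sets $key_2(v)=key_2(w)$. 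Meanwhile $w$, by case a) of the protocol, sets its own key$_2$ to $nil$ because $b(w)\sqsubset r(key_2(w))$ will hold once $v$ has claimed that leaf. I would argue that among the longest-label key$_2$ nodes without a staying key$_2$, each either eventually receives such a value from above and keeps it forever, or the branch simply has no leaf available — which \cref{fact:enough_key2_spaces_for_leaves} rules out, since the count of leaves below any key$_2$ node strictly exceeds the count of key$_2$ nodes strictly below it. The two simultaneous flows — leaves without an $r$ value presenting upward, and key$_2$ nodes without a key$_2$ pulling downward from above — must therefore meet, filling the deepest vacant key$_2$ node. Once filled from above with a value that is never contested (because all deeper key$_2$ nodes are already stably satisfied), that value stays, so $\Phi$ drops. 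Since $\Phi$ is a finite integer, it reaches $0$ in finite time, at which point every key$_2$ node holds a staying, hence stable, key$_2$ value, and by Closure these never change again.

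For the second stage, once all key$_2$ values are fixed, I would observe that every leaf node $w$ referenced by some key$_2$ node $v$ receives, via case b) of the protocol, the assignment $r(w)=b(v)$ whenever $r(w)$ is longer than $b(v)$ or $r(w)=nil$; and by case a) no key$_2$ node will remove a correct key$_2$ once the matching $r$ value is in place. I would also argue that leaf nodes \emph{not} referenced by any key$_2$ node clear their $r$ field (this is part of the check in \cref{lemma:main:Key_Two_R_Value_References_Only_Eligible}, now stable), and that each leaf is referenced by exactly one key$_2$ node: the matching is a bijection between leaves and key$_2$ nodes except for the possible second root slot, which \cref{fact:enough_key2_spaces_for_leaves} handles. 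Hence in finite time each leaf $w$ has a stable $r(w)=b(v)$ for the unique key$_2$ node $v$ with $key_2(v)=b(w)$, which is exactly the statement of \cref{lemma:main:Key_Two_R_Value_References_Made_Consistent}; Closure then follows from the guards, since none of the repair actions is enabled in the resulting configuration.

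**Main obstacle.** I expect the hard part to be making the potential-decrease argument fully rigorous — specifically, ruling out a subtle loop in which a key$_2$ value is repeatedly pulled up one level and then pushed back down, or in which several key$_2$ nodes on the same branch oscillate over a shared set of leaves. The resolution hinges on prioritizing deeper key$_2$ nodes (as the text notes around \cref{figure:Blocking_Example}) and on the counting identity of \cref{fact:enough_key2_spaces_for_leaves}: one must show that once the deepest vacant key$_2$ node is served, the value it obtains comes from strictly above and can never be reclaimed, because everything strictly below it is already stably matched and so no case-a) trigger can fire at it. Pinning down ``stably matched below'' as a well-founded induction on label length, run after the structural stabilization of Phases \rom{1} and \rom{2} is complete, is the technical crux.
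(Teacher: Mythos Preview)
Your proposal is correct and follows essentially the same approach as the paper: the paper proves this lemma by reduction to two sub-lemmas, the first of which uses exactly your potential $\Phi=\max_{v\in\mathcal{K}}|b(v)|$ over key$_2$ nodes lacking a staying key$_2$ (with the same case split on whether $key_2(v)$ is $nil$ or shared with a deeper node, and the same appeal to \cref{fact:enough_key2_spaces_for_leaves}), and the second of which fixes the $r$ values once the key$_2$ values are stable. Your identified ``main obstacle'' is precisely the content of the paper's closure argument for the first sub-lemma, and your resolution via induction on label length matches the paper's reasoning that no node with label longer than $\Phi$ ever drops its key$_2$.
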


Finally, our protocol is correct as all unnecessary nodes are removed, missing nodes are inserted, Patricia nodes are connected by bidirectional edges, and the information stored in key$_2$ and $r$ fields is consistent such that the \HPT\ is is in a legal state in finite time.

\subsection{Overhead}

Assume, the \HPT\ is in a legal state.
We give results for the complexity in terms of hash table accesses and messages and the memory overhead of our solution.
We refer to \cref{appendix:overhead_proof} for the proofs of the following theorems.
When a \DHT\ node executes \PN\ by calling its \textsc{Timeout} Method, exactly one \HPT\ node is checked.
Thereby, at most a constant number of other \HPT\ nodes may be partially acquired or notified and \cref{theorem:Constant_Amount_Of_Messages_Per_Timeout} holds.

\begin{restatable}{theorem}{theoremConstantAmountOfMessagesPerTimeout}
	\label{theorem:Constant_Amount_Of_Messages_Per_Timeout}
	When the \HPT\ is in a legal state, \PN\ creates a constant number of hash table (read) accesses and messages per call of \textsc{Timeout} at each \DHT\ node.
\end{restatable}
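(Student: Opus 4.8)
\textbf{Proof plan for \cref{theorem:Constant_Amount_Of_Messages_Per_Timeout}.}

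The plan is to argue by inspecting the \textsc{Timeout} routine of \PN\ line by line, relying heavily on the fact that the \HPT\ is in a legal state (\cref{definition:Correctness_of_HPT}) and that all the stabilizing lemmas of \cref{section:protocol_analysis:correctness} have already taken effect: every \HPT\ node stores only valid edges pointing to existing \HPT\ nodes, every key$_2$ and $r$ field is consistent, there are no spurious presentation messages in transit, and no unnecessary nodes exist. First I would observe that when a \DHT\ node executes \textsc{Timeout} it selects exactly one \HPT\ node $v$ that it stores and runs the checks on $v$; since each \HPT\ node has a bounded description (a constant number of edge fields $p_-(v), p_0(v), p_1(v)$, plus $key(v)$, $key_2(v)$, $r(v)$), the body of the routine touches only a constant number of fields. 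Each field that references another label induces at most one \textsc{DHT-Search} to fetch an (incomplete) copy of the referenced node, so the number of hash-table read accesses per call is bounded by that constant number of reference fields, i.e.\ $\mathcal{O}(1)$.

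Next I would bound the messages generated. The key point is that in a legal state every guard that would trigger a \emph{corrective} action — clearing a malformed edge, inserting a missing Patricia or Msd node, launching a \textsc{BinaryPrefixSearch}, pulling a key$_2$ value, or forwarding a leaf presentation upward — is false, because the corresponding invariant already holds by the closure parts of \cref{lemma:main:Deletion_Stops,lemma:main:Reconstruction_Root_Correct_Pointing,lemma:main:Linearization_Works_Msd_Nodes_Removed_And_Inserted,lemma:main:Key_Two_R_Value_References_Only_Eligible,lemma:main:Key_Two_R_Value_References_Made_Consistent}. What remains are only the \emph{maintenance} messages that the protocol sends unconditionally: $v$ re-presents its own label to its parent and to its (at most two) children, and, if $v$ is a key$_2$ node, it may re-confirm the $r$ value of the single leaf it references (and symmetrically a leaf may re-confirm its key$_2$ node). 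Each of these is a fixed, constant-size set of messages directed at a constant number of neighbours, and presentation messages in a legal state are not delegated further, since by \cref{lemma:main:Linearization_Works_Msd_Nodes_Removed_And_Inserted} each node is already adjacent (via bidirectional edges) to the Patricia nodes closest to it, so a received presentation is absorbed rather than forwarded. Hence each \textsc{Timeout} call emits $\mathcal{O}(1)$ messages, and likewise each message it emits, when processed at the recipient, does constant work and spawns no cascade.

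Finally I would assemble these observations: fix the constant $c$ as the maximum over all \textsc{Timeout} code paths of (number of reference fields read) $+$ (number of maintenance messages sent), which is finite because the routine is a finite piece of pseudocode operating on a node of constant description size; then in a legal state no call exceeds $c$ hash-table accesses and $c$ messages. The main obstacle I anticipate is the bookkeeping of confirming that \emph{no} corrective branch can fire in a legal state — this requires matching each guard in the \textsc{Timeout} pseudocode (\cref{appendix:pseudo-code}) against the precise clauses of the legal-state definition and of the closure statements, and in particular arguing that an \emph{incoming} message (a presentation or a key$_2$/$r$ query from a neighbour's \textsc{Timeout}) likewise triggers only constant local work and terminates without generating new messages, so that the per-node-per-call accounting is not secretly amortized across a long chain. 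Once that case analysis is done, the constant bound is immediate.
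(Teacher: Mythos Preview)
Your proposal is correct and follows essentially the same line as the paper's own proof: one \HPT\ node is checked per \textsc{Timeout}, its constant number of reference fields ($p_-,p_0,p_1$ and one of $key_2/r$) bounds the hash-table reads, Linearization sends a constant number of presentation messages, and in a legal state none of the corrective branches (reinsertion, \textsc{BinaryPrefixSearch}, key$_2$ search/propagation) fire. The paper's argument is terser---it simply counts ``at most four nodes acquired'' and ``at most three presentations'' and asserts the corrective paths are inactive---whereas you additionally worry about whether an \emph{incoming} presentation could cascade; that extra care is not strictly needed for the theorem as stated (which is per \textsc{Timeout} call, not per message processed), but it does no harm.
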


Unnecessary Patricia nodes and incorrect Msd nodes are removed by \PN.
Therefore, the \HPT\ nodes are the same as presented in the construction in \cref{section:HashedPatriciaTrie} and \cref{theorem:storage_overhead} holds.

\begin{restatable}{theorem}{theoremStorageOverhead}
	\label{theorem:storage_overhead}
	Let $\bitsSym$ be the number of bits needed to store all keys.
	The total memory used by a \HPT\ in a legal state is in \MemoryBound\ bits.
\end{restatable}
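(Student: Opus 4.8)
The plan is to show a two-sided bound. For the lower bound, observe that any data structure that stores the key set $\KeySet$ must use at least $\Omega(\bitsSym)$ bits, since the keys themselves occupy $\bitsSym = \sum_{k\in\KeySet}|k|$ bits and, by \cref{lemma:main:misc_repair}, in a legal state every key $k$ is stored in a Patricia node $v$ with $b(v)=k$. So the interesting direction is the upper bound: I want to argue that in a legal state the total number of bits used by all \HPT\ nodes (Patricia nodes, Msd nodes, their edge fields, and the key$_2$/$r$ fields) is $\mathcal{O}(\bitsSym)$.

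First I would count the Patricia nodes. By the correctness analysis — specifically \cref{lemma:main:Empty_Subtrees_Not_Needed_Nodes_Removed} and \cref{lemma:main:Deletion_Stops} — in a legal state every Patricia node $v$ with $key(v)=nil$ is necessary, meaning $b(v)=\ell cp(k_1,k_2)$ for two distinct keys $k_1,k_2$ that lie in different subtrees below $v$. This is exactly the branching structure of a compressed binary trie on $|\KeySet|$ leaves, so the number of inner (keyless) Patricia nodes is at most $|\KeySet|-1$, and the total number of Patricia nodes is $\mathcal{O}(|\KeySet|) = \mathcal{O}(\bitsSym)$ (using $|\KeySet|\le \bitsSym$ since each key is nonempty). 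Next, each Patricia node $v$ stores a constant number of string-valued fields: the label $b(v)$ (implicitly, via the DHT hash), the parent edge $p_-(v)$, the two child edges $p_0(v),p_1(v)$, and possibly $key(v)$, $key_2(v)$, $r(v)$. The standard accounting trick from \cite{HashedPatriciaTrie} applies: the parent edge $p_-(v)$ is a substring of $b(v)$, and the concatenation of parent-edge labels along any root-to-leaf branch telescopes to the leaf's key, so summing edge-label lengths over the whole trie is $\mathcal{O}(\bitsSym)$. The child edges of $v$ are the parent edges of $v$'s children, so they are counted as well. Thus the Patricia part contributes $\mathcal{O}(\bitsSym)$ bits — except for the reference fields $key_2(v)$ and $r(v)$, which I handle next.

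For the key$_2$ and $r$ fields: by \cref{lemma:main:Key_Two_R_Value_References_Only_Eligible} and \cref{lemma:main:Key_Two_R_Value_References_Made_Consistent}, in a legal state each key$_2$ node $v$ stores $key_2(v)=b(w)$ for a leaf $w$, and each leaf $w$ stores $r(w)=b(v)$ for its unique key$_2$ node (the root may hold two key$_2$ values). Each such stored string is a full key, of length $\le \bitsSym$, but there are only $\mathcal{O}(|\KeySet|)$ of them by \cref{fact:enough_key2_spaces_for_leaves} (at most $L+1$ key$_2$ pointers and $L$ reverse pointers, $L=|\KeySet|$). A priori this gives only $\mathcal{O}(|\KeySet|\cdot\max_k|k|)$, which is not $\mathcal{O}(\bitsSym)$ in general. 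The fix is to charge each $key_2(v)=b(w)$ and $r(w)=b(v)$ pair against the key $b(w)$ itself: since the map $v\mapsto w$ restricted to key$_2$ nodes with $b(v)\sqsubset b(w)$ is injective (each leaf is pointed to by at most one key$_2$ node, by \cref{lemma:main:Key_Two_R_Value_References_Made_Consistent}, with the root's second slot absorbing the $+1$), every key is used as a $key_2$/$r$ value at most a constant number of times, so $\sum_{v\ \mathrm{key}_2} |key_2(v)| + \sum_{w\ \mathrm{leaf}}|r(w)| = \mathcal{O}\!\bigl(\sum_{k\in\KeySet}|k|\bigr) = \mathcal{O}(\bitsSym)$.

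Finally I would handle the Msd nodes. By \cref{lemma:main:Linearization_Works_Msd_Nodes_Removed_And_Inserted}, in a legal state there is at most one Msd node between each pair of closest Patricia nodes $u,w$, and none when its computed label coincides with $b(u)$ or $b(w)$. Hence the number of Msd nodes is at most the number of Patricia-node-to-Patricia-node bidirectional edges, which is $\mathcal{O}(|\KeySet|)$. Each Msd node $m$ between $u$ and $w$ stores its label $b(m)$, which is a prefix of the longer of $b(u),b(w)$, plus a constant number of edges pointing to $u$ and $w$ (labels that are substrings of $b(w)$); by \cref{definition:msd_calculation}, $b(m)$ is itself a prefix of $b(w)$. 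So I would charge $|b(m)|$ to the edge $(u,w)$: since $|b(m)|\le |b(w)|$ and $b(m)$ extends $b(u)$, the increment $|b(m)|-|b(u)|$ is at most $|b(w)|-|b(u)|$, which is the length of the parent edge $p_-(w)$; summing these increments telescopes along each branch to $\mathcal{O}(\bitsSym)$, and the baseline terms $|b(u)|$ over all Msd nodes are absorbed into the Patricia accounting since each Patricia node is the upper endpoint of at most two such pairs. Combining all four contributions — $\mathcal{O}(\bitsSym)$ Patricia structure, $\mathcal{O}(\bitsSym)$ key$_2$/$r$ references, $\mathcal{O}(\bitsSym)$ Msd nodes — gives the upper bound $\mathcal{O}(\bitsSym)$; together with the lower bound $\Omega(\bitsSym)$ this yields $\Theta(\bitsSym)$ as claimed.

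I expect the main obstacle to be the key$_2$/$r$ accounting: the naive bound is too weak, and the argument hinges on the injectivity of the key$_2$-node-to-leaf assignment established in Phase \rom{3}, so the proof must invoke \cref{lemma:main:Key_Two_R_Value_References_Made_Consistent} carefully (including the exceptional behavior of the root's two key$_2$ slots, cf.\ \cref{fact:enough_key2_spaces_for_leaves}). The Patricia-node and Msd-node counts follow comparatively directly from Phases \rom{1} and \rom{2} together with the telescoping argument already present in \cite{HashedPatriciaTrie}.
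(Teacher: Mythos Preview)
Your argument is correct, but it takes a considerably more explicit route than the paper's own proof. The paper disposes of \cref{theorem:storage_overhead} in three sentences: it cites the $\Theta(\bitsSym)$ bound already established in \cite{HashedPatriciaTrie} for the original \HPT, observes that the modification of \cref{section:HashedPatriciaTrie} (redirecting edges so that Patricia nodes point past Msd nodes) does not change the set of Patricia or Msd nodes, and then invokes \cref{lemma:inner_pat_nodes_without_rigths_leave} and \cref{lemma:incorrect_msd_nodes_removed} to conclude that in a legal state exactly the nodes of the standard construction are present. In other words, the paper reduces the theorem entirely to prior work plus the correctness phases. You instead rebuild the $\mathcal{O}(\bitsSym)$ accounting from scratch, with separate telescoping/charging arguments for Patricia edges, for the key$_2$/$r$ pointers (via the injectivity from \cref{lemma:main:Key_Two_R_Value_References_Made_Consistent}), and for Msd labels. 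What your approach buys is self-containment: a reader need not consult \cite{HashedPatriciaTrie} to see why the bound holds, and in particular your key$_2$/$r$ charging makes explicit something the cited paper does not isolate. What the paper's approach buys is brevity and a clean separation of concerns: the memory bound is a property of the target data structure, and the only new content is that the protocol actually reaches that target. Either proof is acceptable; yours would benefit from noting up front that the short route via \cite{HashedPatriciaTrie} is available, with your detailed accounting offered as an alternative for completeness.
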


\bibliography{bib}

\newpage

\appendix

\section{Pseudocode}\label{appendix:pseudo-code}

In the following, we present the Pseudocode for \PN.
For simplicity, we assume the following functions to exist.

\begin{table}[H]
	\center
	\setlength{\tabcolsep}{5pt}
	\renewcommand{\arraystretch}{1.5}
	\begin{tabularx}{\textwidth}{|l|X|}
		\hline
		\textbf{Function Name}                                 & \textbf{Effect when called}                                                                                                                                                         \\\hline
		\textsc{DHT-Search}($x\in\{0,1\}^{*}$)                 & Returns a copy of the \HPT\ node $v$ with $b(v)=x$.                                                                                                                                 \\\hline
		\textsc{DHT-Insert}($x$)                               & Stores the data-item $x$ in the \DHT.                                                                                                                                               \\\hline
		$\textsc{Parent}(v\in\TrieSet)$                        & Provides the bit string $x$ such that $x\concat p_{-}(v)=b(v)$.                                                                                                                     \\\hline
		$\textsc{Children}(v\in\TrieSet)$                      & Returns the number of children of $v$.                                                                                                                                              \\\hline
		\textsc{Edge}($w\in\TrieSet,\,v\in\TrieSet$)           & Provides the bit string of the edge $e$ of $w$ that should point to $v$ if $v$ was $w's$ child node.                                                                                \\\hline
		\textsc{MsdMissing}($w\in\PatSet,\,v\in\PatSet$)       & Returns $true$ if there is a missing Msd node between $w$ and $v$.                                                                                                                  \\\hline
		\textsc{BinaryPrefixSearch}($x\in\{0,1\}^{*}$)         & Executes the binary search phase of the PrefixSearch($x$) algorithm provided by \cite{HashedPatriciaTrie} and returns a copy of the found Patricia node $w$ with $b(w)\sqsubset x$. \\\hline
		\textsc{MsdChildEdge}($v\in\MsdSet$)                   & Provides the only child edge of an Msd node.
		Returns $nil$ if $v$ has more than one child.
		An ordinary Msd node only has one child edge.                                                                                                                                                                                                \\\hline
		\textsc{MsdLabel}($x\in\{0,1\}^{*},\,y\in\{0,1\}^{*}$) & Returns a valid Msd label between the labels $x$ and $y$.                                                                                                                           \\\hline
		\textsc{Bidirectional}($v\in\TrieSet,\,w\in\TrieSet$)  & Returns $true$ if there is a bidirectional edge between $v$ and $u$.                                                                                                                \\\hline
	\end{tabularx}
	\noindent\vspace{0.4cm}
	\caption{Assumed existing functions}
\end{table}

\newpage
\begin{algorithm}[H]
	\caption{The \DHT\ protocol $\rightarrow$ executed at \DHT\ node $n$}
	\label{algorithm:DHT_node_protocol}
	\begin{algorithmic}[1]
		\Procedure{Timeout}{}
		\State $K\gets \text{set of keys not in Patricia nodes stored at }n$
		\If{$K\neq\emptyset$}\Comment{Integrate keys}
		\State Insert Patricia node $v$ for every key $k\in K$ with $b(v)=k$\label{algoithm:line:save_Key_In_Patricia_Node}
		\EndIf
		\State $v\gets \text{next \HPT\ node stored at }n$
		\If{$n \text{ not responsible for }b(v)$}\Comment{Check position in \DHT}
		\State delegate $v$ towards the correct \DHT\ node\label{algorithm:line:delegation_wrong_saved_info}
		\EndIf
		\State \textsc{CheckNodeInfo}($v$)\Comment{Consistency of stored information}
		\State \textsc{CheckParentEdgeInfo}($v$)\Comment{Check parent edge}
		\State \textsc{CheckChildEdgeInfo}($v$)\Comment{Check child edges}
		\State \textsc{CheckValidity}($v$)\Comment{Check if node is needed}
		\State \textsc{CheckKey$2$Info}($v$)\Comment{Check key$_{2}$ information}
		\State \textsc{LinearizeTimeout}($v$)\Comment{Do Linearization}
		\EndProcedure
	\end{algorithmic}
\end{algorithm}
\noindent
\cref{algorithm:DHT_node_protocol} calls different checks for one \HPT\ node $v$ on a regular basis.\\\\

\begin{table}[H]
	\setlength{\tabcolsep}{5pt}
	\renewcommand{\arraystretch}{1.5}
	\begin{tabular}{|l|l|}
		\hline
		\textbf{Line(s)} & \textbf{Affected Lemmas}                   \\\hline
		$2-4$            & \cref{lemma:keys_saved_at_right_node}      \\\hline
		$6-7$            & \cref{lemma:any_node_saved_correct_in_dht} \\\hline
	\end{tabular}
	\noindent\vspace{0.4cm}
	\caption{Lemmas affected by \cref{algorithm:DHT_node_protocol}}
\end{table}

\newpage
\begin{algorithm}[H]
	\caption{Checking information stored directly at a \HPT\ node}
	\label{algorithm:Trie-Node_local_check}
	\begin{algorithmic}[1]
		\Procedure{CheckNodeInfo}{$v\in\TrieSet$}
		\If{$p_{-}(v) \text{ is not suffix of }b(v)$}\label{algorithm:line:parent_edge_no_suffix}\Comment{Parent edge violates definition}
		\State $p_{-}(v) \gets nil$
		\EndIf
		\If{$p_{0}(v) \text{ does not comply } 0(0\vert 1)^{*}$}\label{algorithm:line:p0_wrong_form}\Comment{Child edge violates definition}
		\State $p_{0}(v)\gets nil$
		\EndIf
		\If{$p_{1}(v) \text{ does not comply } 1(0\vert 1)^{*}$}\Comment{Child edge violates definition}
		\label{algorithm:line:p1_wrong_form}
		\State $p_{1}(v)\gets nil$
		\EndIf
		\If{$v\notin\MsdSet$}
		\If{$key(v) \neq nil \textbf{ and } b(v) \neq key(v)$}\Comment{Wrong label}\label{algorithm:line:wrong_saved_key}
		\State $w\gets\text{new Patricia node}$
		\State $key(w)\gets key(v)$
		\State \textbf{delete} $v$
		\State \textsc{DHT-Insert}($w$)
		\Else
		\If{$key_{2}\neq nil$}
		\If{$b(v) \not\sqsubseteq key_{2}(v)$$\textbf{ or } (\textsc{Children}(v)<2 \textbf{ and } v \neq root)$}\label{algorithm:line:key2_wrong_form}
		\State $key_{2}(v)\gets nil$ \Comment{$v$ not a key$_2$ node}
		\EndIf
		\EndIf
		\If{$r(v) \neq nil \textbf{ and } r(v) \not\sqsubset b(v)$}\label{algorithm:line:r_wrong_form}
		\State $r(v)\gets nil$ \Comment{Wrong reference}
		\EndIf
		\If{$r(v) \neq nil \textbf{ and }( p_{0}(v) \neq nil \textbf{ or } p_{1}(v) \neq nil )$}\label{algorithm:line:r_value_stored_at_inner_node}
			\State $r(v)\gets nil$ \Comment{$v$ not a leaf node}
		\EndIf
		\EndIf
		\EndIf
		\EndProcedure
	\end{algorithmic}
\end{algorithm}
\noindent
\cref{algorithm:Trie-Node_local_check} locally checks if the information stored at \HPT\ node $v$ is consistent.
The algorithm removes information at $v$ which cannot be correct.\\\\

\begin{table}[H]
	\setlength{\tabcolsep}{5pt}
	\renewcommand{\arraystretch}{1.5}
	\begin{tabular}{|l|l|}
		\hline
		\textbf{Line(s)} & \textbf{Affected Lemmas}                                                                                                                               \\\hline
		$2-3$            & \cref{lemma:initially_wrong_reference_information_vanishes}                                                                                            \\\hline
		$4-5$            & \cref{lemma:initially_wrong_reference_information_vanishes}                                                                                            \\\hline
		$6-7$            & \cref{lemma:initially_wrong_reference_information_vanishes}                                                                                            \\\hline
		$8-13$           & \cref{lemma:keys_saved_at_right_node}, \cref{lemma:no_more_deletions_happen}                                                                           \\\hline
		$15-17$          & \cref{lemma:initially_wrong_reference_information_vanishes}, \cref{lemma:not_eligible_nodes_drop_key2}, \cref{lemma:strange_references_key2_disappear} \\\hline
		$18-19$          & \cref{lemma:initially_wrong_reference_information_vanishes}, \cref{lemma:not_eligible_nodes_drop_key2}, \cref{lemma:strange_references_key2_disappear} \\\hline
		$20-21$          & \cref{lemma:not_eligible_nodes_drop_key2}                                                                                                              \\\hline
	\end{tabular}
	\noindent\vspace{0.4cm}
	\caption{Lemmas affected by \cref{algorithm:Trie-Node_local_check}}
\end{table}

\newpage
\begin{algorithm}[H]
	\caption{Checking parent edge of $v\in\TrieSet$}
	\label{algorithm:Trie-Node_parent_edge_check}
	\begin{algorithmic}[1]
		\Procedure{CheckParentEdgeInfo}{$v\in\TrieSet$}
		\If{$p_{-}(v) = nil \textbf{ and } b(v) \neq \varepsilon \textbf{ and } v\notin\MsdSet$}
		\State $w \gets \textsc{BinaryPrefixSearch}(b(v))$
		\If{$w = nil$} \Comment{Root does not exist}\label{algorithm:line:root_creation}
		\State $root \gets \text{new Patricia node}$
		\State $b(root) \gets \varepsilon$
		\State \textsc{DHT-Insert}($root$)
		\Else
		\State $p_{-}(v) \gets x \text{ [s.t.\ } b(w) \concat x = b(v) ]$\Comment{New parent}\label{algorithm:line:new_parent_due_to_binary_prefix_search}
		\EndIf
		\ElsIf{$b(v) \neq \varepsilon \textbf{ and } v\notin\MsdSet$}
		\State $par \gets \textsc{DHT-Search}(\textsc{Parent}(v))$\Comment{Acquire parent}
		\If{$par = nil$\textbf{ or } $par \in \MsdSet$}\label{algorithm:line:non_existing_parent_node_edge_deleted}
		\State $p_{-}(v) \gets nil$
		\Else\Comment{Compare edges}
		\State $e_{par} \gets \textsc{Edge}(par,\,v)$
		\If{$e_{par} \neq p_{-}(v) \textbf{ and } e_{par} \not\sqsubseteq p_{-}(v) \textbf{ and } p_{-}(v) \not\sqsubseteq e_{par}$}\label{algorithm:line:parent_should_be_patricia_node_with_two_children}
		\State $n \gets \text{new Patricia node}$
		\State $b(n) \gets \ell cp(b(v), b(par) \concat e_{par})$ \Comment{Node between $v$ and its parent}
		\State $n'\gets \textsc{DHT-Search}(b(n))$
		\If{$n\neq nil \textbf{ and } n\notin\MsdSet $}
		\State $n' \gets \textsc{Linearize}(v)$\Comment{Present $v$}
		\Else
		\State $n \gets \textsc{MultiLinearize}(\{v,\, par,\, par \concat e_{par}\})$ \Comment{Create $n$ in \DHT}
		\State \textsc{DHT-Insert}($n$)
		\EndIf
		\Else
		\If{$v,\,par\in\PatSet \textbf{ and } e_{par}=p_{-}(v)$}
		\If{$\textsc{MsdMissing}(v,\,par)$}\label{algorithm:line:missing_msd_node_insertion}
		\State $m \gets \text{new Msd node}$\Comment{Insert Msd node}
		\State $b(m) \gets \textsc{MsdLabel}(par, v)$
		\State $m \gets \textsc{MultiLinearize}(\{v,\,par\})$
		\State $m'\gets\textsc{DHT-Search}(b(m))$
		\If{$m'=nil \textbf{ or } (m'\text{ has different edges than } m$\\\hspace{5.2cm}$\textbf{and } m'\in\MsdSet)$}
		\State \textsc{DHT-Insert}($m$)\Comment{Overwrite with correct Msd node}
		\EndIf
		\EndIf
		\EndIf
		\EndIf
		\EndIf
		\EndIf
		\EndProcedure
	\end{algorithmic}
\end{algorithm}
\noindent
\cref{algorithm:Trie-Node_parent_edge_check} checks if the parent edge of the \HPT\ node $v$ is correct.
If necessary, the edge is corrected, a root is created or new nodes are inserted.

\newpage
\begin{table}[H]
	\setlength{\tabcolsep}{5pt}
	\renewcommand{\arraystretch}{1.5}
	\begin{tabular}{|l|l|}
		\hline
		\textbf{Line(s)} & \textbf{Affected Lemmas}                                                                                                                                                       \\\hline
		$4-7$            & \cref{lemma:root_node_exists}                                                                                                                                                  \\\hline
		$8-9$            & \cref{lemma:existing_parent_nodes_established}                                                                                                                                 \\\hline
		$12-13$          & \cref{lemma:deleted_not_rebuilt_nodes_references_vanish}, \cref{lemma:root_node_exists}, \cref{lemma:existing_parent_nodes_established}, \cref{lemma:No_Patricia_Point_To_Msd} \\\hline
		$16-24$          & \cref{lemma:inner_pat_nodes_without_rigths_leave}, \cref{lemma:existing_parent_nodes_established}, \cref{lemma:non_leaving_nodes_reached_from_root}                            \\\hline
		$26-34$          & \cref{lemma:incorrect_msd_nodes_removed}, \cref{lemma:all_missing_msd_nodes_created}                                                                                           \\\hline
	\end{tabular}
	\noindent\vspace{0.4cm}
	\caption{Lemmas affected by \cref{algorithm:Trie-Node_parent_edge_check}}
\end{table}

\begin{algorithm}[H]
	\caption{Checking child edges of $v\in\TrieSet$}
	\label{algorithm:Trie-Node_child_edges_check}
	\begin{algorithmic}[1]
		\Procedure{CheckChildEdgeInfo}{$v\in\TrieSet$}\Comment{Check each child}
		\If{$v\in\PatSet$}
		\State \textsc{CheckChild}($v,\, \textsc{DHT-Search}(b(v) \concat p_{0}(v)), 0$)
		\State \textsc{CheckChild}($v,\, \textsc{DHT-Search}(b(v) \concat p_{1}(v)), 1$)
		\EndIf
		\EndProcedure
		\Procedure{CheckChild}{$v,\, c \in \TrieSet$, $x\in\{0,1\}$}
		\If{$c = nil$\textbf{ or }$c\in\MsdSet$}\label{algorithm:line:non_existing_child_node_edge_deleted}
		\State $p_{x}(v) = nil$\Comment{Child non-existing or Msd node}
		\EndIf
		\EndProcedure
	\end{algorithmic}
\end{algorithm}
\noindent
\cref{algorithm:Trie-Node_child_edges_check} checks if the child edges of $v$ point to valid Patricia nodes.\\\\

\begin{table}[H]
	\setlength{\tabcolsep}{5pt}
	\renewcommand{\arraystretch}{1.5}
	\begin{tabular}{|l|l|}
		\hline
		\textbf{Line(s)} & \textbf{Affected Lemmas}                                                                                                                          \\\hline
		$6-7$            & \cref{lemma:deleted_not_rebuilt_nodes_references_vanish}, \cref{lemma:last_nodes_with_empty_child_deleted}, \cref{lemma:No_Patricia_Point_To_Msd} \\\hline
	\end{tabular}
	\noindent\vspace{0.4cm}
	\caption{Lemmas affected by \cref{algorithm:Trie-Node_child_edges_check}}
\end{table}

\newpage
\begin{algorithm}[H]
	\caption{Checking whether $v\in\TrieSet$ should exist}
	\label{algorithm:Trie-Node_validity_check}
	\begin{algorithmic}[1]
		\Procedure{CheckValidity}{$v\in\TrieSet$}
		\If{$v\in\MsdSet$}
		\If{$p_{-}(v)\neq nil \textbf{ and } \textsc{MsdChildEdge}(v)\neq nil$}
		\State $p \gets \textsc{DHT-Search}(\textsc{Parent}(v))$
		\State $c \gets \textsc{DHT-Search}(b(v)\concat \textsc{MsdChildEdge}(v))$
		\If{$\textbf{not }(\textsc{Bidirectional}(p,\,c)$ \textbf{ and } $p,\,c\in\PatSet$\\\hspace{2.6cm}\textbf{and } $b(v)=\textsc{MsdLabel}(b(p),b(c)))$}
			\State $\textbf{delete }v$\label{algorithm:line:msd_node_not_patricia_between_bidirected_correct_form_deleted}\Comment{Incorrect Msd node}
			\EndIf
			\Else
			\State $\textbf{delete }v$\label{algorithm:line:msd_node_no_child_parent_deleted}\Comment{No parent edge or not one child edge}
			\EndIf
			\Else
			\If{$key(v) = nil \textbf{ and } \textsc{Children}(v)<2 \text{ and } v \neq root$}\label{algorithm:line:patricia_node_no_key_no_two_children_no_root_deleted}
			\State \textbf{delete} $v$ \Comment{Unnecessary Patricia node}
		\EndIf
		\EndIf
		\EndProcedure
	\end{algorithmic}
\end{algorithm}
\noindent
\cref{algorithm:Trie-Node_validity_check} checks if $v$ should exist at all.
In case that $v$ lacks child edges, it is removed.\\\\

\begin{table}[H]
	\setlength{\tabcolsep}{5pt}
	\renewcommand{\arraystretch}{1.5}
	\begin{tabular}{|l|l|}
		\hline
		\textbf{Line(s)} & \textbf{Affected Lemmas}                                                                                                                                                                                           \\\hline
		$3-8$            & \cref{lemma:no_more_deletions_happen}, \cref{lemma:incorrect_msd_nodes_removed}, \cref{lemma:all_missing_msd_nodes_created}                                                                                        \\\hline
		$9-10$           & \cref{lemma:empty_subtree_nodes_vanish}, \cref{lemma:no_more_deletions_happen}, \cref{lemma:incorrect_msd_nodes_removed}, \cref{lemma:all_missing_msd_nodes_created}                                               \\\hline
		$12-13$          & \cref{lemma:empty_subtree_nodes_vanish}, \cref{lemma:inner_pat_nodes_without_rigths_leave}, \cref{lemma:last_nodes_with_empty_child_deleted}, \cref{lemma:no_more_deletions_happen}, \cref{lemma:root_node_exists} \\\hline
	\end{tabular}
	\noindent\vspace{0.4cm}
	\caption{Lemmas affected by \cref{algorithm:Trie-Node_validity_check}}
\end{table}

\newpage
\begin{algorithm}[H]
	\caption{Checking if $v\in\TrieSet$ has valid key$_{2}$ information}
	\label{algorithm:Trie-Node_key2_check}
	\begin{algorithmic}[1]
		\Procedure{CheckKey$2$Info}{$v\in\TrieSet$}
		\If{$v\in\PatSet$}
		\If{$\textsc{Children}(v)=2$}\Comment{Inner node}
		\If{$key_{2}(v) \neq nil$}
		\State $k \gets \textsc{DHT-Search}(key_{2}(v))$
		\If{$k = nil \textbf{ or } k \in\MsdSet \textbf{ or } \textsc{Children}(k) > 0 \textbf{ or } b(v) \sqsubset r(k)$}\label{algorithm:line:delete_key2_value_when_not_leaf_pointer}
		\State $key_{2}(v) \gets nil$\Comment{$v$ should not store $k$}
		\ElsIf{$r(k)=nil$\textbf{ or }$r(k) \sqsubset b(v)$}\label{algorithm:line:check_if_r_should_set}
		\State $r(k) \gets b(v)$\Comment{$v$ references $k$}
		\EndIf
		\EndIf
		\If{$key_{2}(v) = nil$}\label{algorithm:line:capture_key2_from_above}
		\State $k \gets \text{ next Patricia node above }v\text{ such that }b(v)\sqsubset key_2(k)$
		\State $key_{2}(v) \gets key_{2}(k)$
		\EndIf
		\ElsIf{$\textsc{Children}(v)=0$}\Comment{leaf node}\label{algorithm:line:leaf_node_check_for_r_values}
		\If{$r(v) \neq nil$}
		\State $k \gets \textsc{DHT-Search}(r(v))$
		\If{$k = nil \textbf{ or } \textsc{Children}(k)<2 \textbf{ or } key_{2}(k) \neq b(v)$}\label{algorithm:line:r_value_connected_to_not_suitable_node}
		\State $r(v) \gets nil$\Comment{$k$ not suitable}
		\ElsIf{$key_{2}(k) = nil$}\label{algorithm:line:repair_key2_if_reference_points}
		\State $key_{2}(k) \gets b(v)$\Comment{Repair reference}
		\EndIf
		\Else
		\State $k \gets w\in\PatSet \text{ with } \textsc{Children}(w) = 2,\text{ } key_{2}(w)\in\{nil,b(v)\}$\\\hspace*{2.5cm}$\text{ above }v\text{ with }|b(w)|\text{ maximal}$\label{algorithm:line:find_good_key2_node}
		\State $r(v) \gets k$\Comment{Find node for $r(v)$}
		\State $key_{2}(k) \gets b(v)$
		\EndIf
		\EndIf
		\EndIf
		\EndProcedure
	\end{algorithmic}
\end{algorithm}
\noindent
\cref{algorithm:Trie-Node_key2_check} checks if the key$_2$/$r$ information stored at $v$ is valid.
If necessary, a new key$_2$ or $r$ value is obtained.\\\\

\begin{table}[H]
	\setlength{\tabcolsep}{5pt}
	\renewcommand{\arraystretch}{1.5}
	\begin{tabular}{|l|l|}
		\hline
		\textbf{Line(s)} & \textbf{Affected Lemmas}                                                                             \\\hline
		$6-7$            & \cref{lemma:strange_references_key2_disappear}, \cref{lemma:inner_nodes_with_two_children_save_key2} \\\hline
		$8-9$            & \cref{lemma:not_eligible_nodes_drop_key2}, \cref{lemma:inner_nodes_with_two_children_save_key2}      \\\hline
		$10-12$          & \cref{lemma:strange_references_key2_disappear}, \cref{lemma:inner_nodes_with_two_children_save_key2} \\\hline
		$16-17$          & \cref{lemma:strange_references_key2_disappear}, \cref{lemma:r_values_corrected}                      \\\hline
		$18-19$          & \cref{lemma:not_eligible_nodes_drop_key2}, \cref{lemma:strange_references_key2_disappear}            \\\hline
		$20-24$          & \cref{lemma:not_eligible_nodes_drop_key2}, \cref{lemma:inner_nodes_with_two_children_save_key2}      \\\hline
	\end{tabular}
	\noindent\vspace{0.4cm}
	\caption{Lemmas affected by \cref{algorithm:Trie-Node_key2_check}}
\end{table}

\newpage
\noindent
For better readability, we denote by $v \leftarrow \textsc{Linearize}(u)$ a call of $\textsc{Linearize}(v,\,u)$ at the \DHT\ node responsible for $v$.
Similar, $v \leftarrow \textsc{MultiLinearize}(U)$ translates to $\textsc{MultiLinearize}(v,\, U)$ at the \DHT\ node responsible for $v$.

\begin{algorithm}[H]
	\caption{Linearization methods at \DHT\ node $n$}
	\label{algorithm:Linearization}
	\begin{algorithmic}[1]
		\Procedure{LinearizeTimeout}{$v\in\TrieSet$}
		\If{$v\in\PatSet$}
		\State $\textsc{Parent}(v) \gets \textsc{Linearize}(b(v))$
		\State $b(v) \concat p_{0}(v) \gets \textsc{Linearize}(v)$
		\State $b(v) \concat p_{1}(v) \gets \textsc{Linearize}(v)$
		\EndIf
		\EndProcedure
		\Procedure{MultiLinearize}{$v\in\TrieSet ,\, U\in \TrieSet^{*}$}
		\ForEach{$u\in U$}
		\State $\textsc{Linearize}(v,\,u)$
		\EndFor
		\EndProcedure
		\Procedure{Linearize}{$v,\,u\in \TrieSet$}\Comment{Present $u$ to $v$}
		\If{$ \textbf{not }(v\in\MsdSet\textbf{ or } u\in\MsdSet)$ \textbf{and} $v\neq u$}\label{algorithm:line:no_edge_when_presenting_another_msd}
		\If{$\vert \ell cp(b(u), b(v)) \vert < \vert b(v)\vert$}
		\If{$p_{-}(v)=nil \textbf{ and } b(u)\sqsubseteq b(v)$} \Comment{$u$ above $v$}
		\State $p_{-}(v) \gets x \text{ [where }b(u) \concat x = b(v)]$
		\Else
		\If{$\textsc{Parent}(v) \neq b(u)$}
		\State $\textsc{Parent}(v) \gets \textsc{Linearize}(u)$
		\EndIf
		\If{$\textsc{Parent}(v) \sqsubset b(u) \textbf{ and } b(u)\sqsubset b(v)$} \Comment{$u$ in between}
		\State $p_{-}(v) \gets x \text{ [where }b(u) \concat x = b(v)]$
		\EndIf
		\EndIf
		\ElsIf{$b(v) \sqsubset b(u)$} \Comment{$v$ above $u$}
		\State $x \gets b(u)\text{ at position }\vert b(v) \vert +1$
		\State $c \gets b(v) \concat p_{x}(v)$ \Comment{Respective child edge}
		\If{$p_{x}(v)=nil$}
		\State $p_{x}(v) \gets y \text{ [where } b(v) \concat y = b(u)]$
		\Else
		\If{$b(c) \sqsubset b(u)$} \Comment{$c$ above $u$}
		\State $c \gets \textsc{Linearize}(u)$
		\ElsIf{$b(u) \sqsubset b(c)$} \Comment{$c$ below $u$}
		\State $p_{x}(v) \gets y \text{ [where } b(v) \concat y = b(u)]$
		\State $c \gets \textsc{Linearize}(u)$
		\Else \Comment{Common parent for $c,\, u$ needed}
		\State $u \gets \textsc{Linearize}(v)$
		\EndIf
		\EndIf
		\EndIf
		\EndIf
		\EndProcedure
	\end{algorithmic}
\end{algorithm}
\noindent
\cref{algorithm:Linearization} denotes the Linearization procedure of our protocol.
A Patricia node $v$ tries to maintain connections to the closest Patricia nodes of the form $w$ where either $b(v)\sqsubset b(w)$ or $b(w)\sqsubset b(v)$.
Presentations of other Patricia nodes are delegated and Msd nodes are not presented nor do they delegate messages.

\newpage

\section{Correctness Proof}\label{appendix:correctness_proof}

In the following, a complete correctness proof follows.
Assuming that the underlying \DHT\ is in a legal state, we show that our protocol converges the system to a state in which the hashed Patricia Trie is in a legal state with respect to \cref{definition:Correctness_of_HPT} and our modification as presented in \cref{section:HashedPatriciaTrie}.
Further, once such a legal state for the \HPT\ is reached, closure holds and the \HPT\ stays legal in consecutive states.
For this, we show a sequence of lemmas, each representing properties that the system converges to and that are preserved.

In the proofs, we often define \emph{potential functions}.
We denote by a potential function a function depending on the current system state of which the value decreases over time, i.e., in following states, and is not increased.
Often, our potential functions depend on a set of elements.
For convenience, we assume that as soon as the corresponding set is empty, the function is defined to be zero.
In some cases, there may be messages in the system that present a non-existent \HPT\ node.
These messages may temporarily increase our potential functions.
However, on the long term, they are removed from the system.
Therefore, we often implicitly assume in our arguments that this has happened and ignore temporary increases of potential functions.
We aim at proving \cref{theorem:HPTCorrect} with respect to the complete definition of a legal state of the \HPT\ below.

\setcounter{PotentialIndex}{0}

\CorrectnessTheorem*

\begin{restatable}[Legal state of the \HPT]{definition}{DefinitionCorrectHPT}
	\label{definition:Correctness_of_HPT}
	We say a \HPT\ is \text{in a legal state}, if the following holds.
	Only Patricia nodes given by the following conditions exist.
	Every $k\in\KeySet$ is stored in a Patricia node $v$ with $b(v)=k$.
	For every pair of keys $k_1,k_2$, there exists an inner Patricia node $v$ with $b(v)=\ell cp(k_1,k_2)$.
	Only Msd nodes given by the following exist:
	Between every pair $u,v$ of closest Patricia nodes, there is exactly one Msd node $m$ (if needed) with $b(m)$ as given by \cref{definition:msd_calculation}, if $b(u)\sqsubset b(m) \sqsubset b(v)$.
	$m$ has a parent edge to $u$ and exactly one child edge to $v$.
	In general, no \HPT\ node $v$ exists with $key(v)=nil$ and there is no $k\in\KeySet$ such that $b(v)\sqsubseteq k$.
	Also, every \HPT\ node is stored at the \DHT\ node responsible for it.

	The edges of the \HPT\ need to be correct as defined in the following.
	For every Branch Set $S$, there are bidirectional valid edges between every pair of closest Patricia nodes of $S$.
	Every leaf node $v\in\PatSet$ stores $r(v)=w$ for a $w\in\PatSet$ with two children, such that $key_{2}(w)=v$ and $b(w)\sqsubseteq b(v)$.
	No other nodes than those affected by this statement store a value in $r$ or key$_2$.
\end{restatable}

\lemmaMainMiscRepair*

\begin{proof}
	The proof of the lemma is given by the proofs of \cref{lemma:keys_saved_at_right_node}, \cref{lemma:any_node_saved_correct_in_dht}, \cref{corollary:keys_searchable} and \cref{lemma:deleted_not_rebuilt_nodes_references_vanish}.
\end{proof}

\stepcounter{PotentialIndex}
\begin{restatable}{lem}{lemmaKeysSavedAtRightNode}
	\label{lemma:keys_saved_at_right_node}
	Let $\phi_{\thePotentialIndex}=\vert\{k\in\KeySet\,\vert\, k \text{ is not stored in a } v\in\PatSet,\text{ or } b(v)\neq k \}\vert$ be the number of keys which are stored by a Patricia node of wrong label.
	$\phi_{\thePotentialIndex}$ is a potential function.
\end{restatable}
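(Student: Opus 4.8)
The plan is to show that $\phi_{\thePotentialIndex}$ never increases and strictly decreases while it is positive, so it eventually reaches $0$ and stays there. First I would argue \emph{non-increase}: a key $k$ contributes to $\phi_{\thePotentialIndex}$ only if no $v\in\PatSet$ with $b(v)=k$ stores it. The only way to make such a $v$ disappear is by deletion of a Patricia node, but a node storing a key is deleted neither by \cref{algorithm:Trie-Node_validity_check} line~\ref{algorithm:line:patricia_node_no_key_no_two_children_no_root_deleted} (that line requires $key(v)=nil$) nor by the Msd-handling branches (those apply only to $v\in\MsdSet$); and \cref{algorithm:Trie-Node_local_check} lines~12--13 only delete $v$ after re-inserting a fresh Patricia node carrying $key(v)$ with the correct label. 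Also no action ever alters a key value $key(v)$ to an incorrect string once $b(v)=key(v)$. Hence once $k$ leaves the ``bad'' set it cannot re-enter, so $\phi_{\thePotentialIndex}$ is non-increasing. One subtlety: a spurious item claiming to be a Patricia node for $k$ with the wrong label could be floating around, but by the convention stated in the proof preamble such transient inconsistencies are eventually flushed and do not cause a genuine increase.

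Next I would argue \emph{strict decrease while positive}. Suppose $k$ is currently counted. There are two cases. Either $k$ is stored at some \DHT\ node but only inside a node $v$ with $b(v)\neq k$ (a ``mislabeled'' Patricia node). Then by weakly fair action execution, \cref{algorithm:Trie-Node_local_check} is eventually invoked on $v$, its guard on line~\ref{algorithm:line:wrong_saved_key} ($key(v)\neq nil$ and $b(v)\neq key(v)$) holds, and a new Patricia node $w$ with $key(w)=key(v)=k$ is created and \textsc{DHT-Insert}ed; by the DHT assumption this insertion succeeds (it does not collide with a key-storing Patricia node of label $k$, since by hypothesis none exists), and once $w$ is placed it is never deleted, so $b(w)=k$ holds thereafter and $k$ leaves the bad set. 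Or $k$ is not stored in any Patricia node at all but sits as a raw key at some \DHT\ node $n$; then \cref{algorithm:DHT_node_protocol} line~\ref{algoithm:line:save_Key_In_Patricia_Node} (guarded by $K\neq\emptyset$, with $K$ the set of keys not in Patricia nodes at $n$) eventually fires and inserts a Patricia node $v$ with $b(v)=k$. In both cases $\phi_{\thePotentialIndex}$ strictly decreases, and by non-increase it never comes back up. Since $\phi_{\thePotentialIndex}$ is a nonnegative integer, it reaches $0$ in finite time, which is exactly the first assertion of \cref{lemma:main:misc_repair}.

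The main obstacle I anticipate is not the monotonicity bookkeeping but the interaction with the DHT-insertion semantics: I need the protocol's insertion rule (``$i$ is only inserted if it does not collide with an already stored Patricia node that stores a key'') to guarantee that the freshly created key-bearing node actually gets stored despite an arbitrary initial garbage item possibly already hashed to the same position. Handling this requires appealing to the stated property that every inserted \HPT\ node is presented at least once and that raw keys are never blocked, so that even if a non-key node occupies the slot, it will eventually be corrected or overwritten and the key-bearing node established. A second, more minor gap to close carefully is that a key could simultaneously be ``stored'' at multiple \DHT\ nodes (because the initial state is adversarial); here I would note that delegation of mis-placed data (\cref{algorithm:DHT_node_protocol} line~\ref{algorithm:line:delegation_wrong_saved_info}, covered by \cref{lemma:any_node_saved_correct_in_dht}) together with the key-integration step ensures at least one correctly labeled, correctly located Patricia node for $k$ persists, which is all $\phi_{\thePotentialIndex}=0$ requires — and it is consistent with, rather than a prerequisite for, the uniqueness and placement guarantees proved in the companion lemmas.
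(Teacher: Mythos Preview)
Your proposal is correct and follows essentially the same approach as the paper's proof: the same two cases for strict decrease (raw key at a \DHT\ node handled by \cref{algorithm:DHT_node_protocol} \cref{algoithm:line:save_Key_In_Patricia_Node}; mislabeled Patricia node handled by \cref{algorithm:Trie-Node_local_check} \cref{algorithm:line:wrong_saved_key}) and the observation that the protocol never places a key at a wrong-label node for non-increase. Your version is simply more detailed, in particular your non-increase argument explicitly rules out deletion of a correctly labeled key-bearing node and discusses insertion collisions, whereas the paper dispatches closure in a single sentence.
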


\begin{proof}
	Consider any key $k$ not stored in a Patricia node.
	In finite time, $k$ is detected and stored as a Patricia node by the respective \DHT\ node $n$ (see \cref{algorithm:DHT_node_protocol} \cref{algoithm:line:save_Key_In_Patricia_Node}).
	Let $k \in \KeySet$ be a key stored at $v \in \PatSet$ with $b(v) \neq k$.
	In finite time our algorithm performs a check of $v$ detecting that $b(v) \neq k$.
	As a result, $k$ is reinserted in a node $w \in \PatSet$ with $b(w) = k$ such that $\phi_{\thePotentialIndex}$ is reduced (see \cref{algorithm:Trie-Node_local_check} \cref{algorithm:line:wrong_saved_key}).
	Our protocol does not insert a key $k$ at node $v$ where $b(v)\neq k$.
	Hence, $\phi_{\thePotentialIndex}$ cannot increase.
\end{proof}

We call a \HPT\ node \emph{searchable} if it is stored at the \DHT\ node responsible for it and can be retrieved by \textsc{DHT-Search}.

\begin{restatable}{lem}{lemmaAnyNodeSavedCorrectInDht}
	\label{lemma:any_node_saved_correct_in_dht}
	In finite time every \HPT\ node is searchable.
\end{restatable}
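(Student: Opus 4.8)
The plan is to exhibit a potential function counting the ``misplaced'' \HPT\ data items and argue it is non-increasing and eventually hits zero, using the repair action on \cref{algorithm:line:delegation_wrong_saved_info} of \cref{algorithm:DHT_node_protocol} together with the assumed reliability of the underlying \DHT. First I would fix, for each \HPT\ node $v$, the unique \DHT\ node $n_v$ that is responsible for the hash value of $b(v)$; since the \DHT\ is in a legal state and the hash function is collision-free and available locally at every peer, $n_v$ is well-defined and every peer can compute it. Define the potential $\phi_{\thePotentialIndex}$ to be the number of \HPT\ nodes $v$ currently stored at some \DHT\ node $n \neq n_v$. I would then show that when a \DHT\ node $n$ runs \textsc{Timeout} and picks up a stored \HPT\ node $v$ with $n$ not responsible for $b(v)$, it delegates $v$ toward $n_v$ via the \DHT, so after finitely many steps $v$ arrives at $n_v$; meanwhile the protocol never \emph{creates} a new copy of an \HPT\ node at a peer not responsible for it (new nodes are produced only via \textsc{DHT-Insert}, which routes them to the responsible peer), so $\phi_{\thePotentialIndex}$ never increases. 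Hence $\phi_{\thePotentialIndex}$ is a potential function and reaches $0$ in finite time, and once it is $0$ every \HPT\ node sits at $n_v$ and is therefore returned by \textsc{DHT-Search}$(b(v))$.

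The second half of the argument is that weakly fair action execution and fair message receipt actually force each misplaced node to be moved: the guard of \textsc{Timeout} is always true, so every \DHT\ node executes \textsc{Timeout} infinitely often, and on line~\ref{algorithm:line:delegation_wrong_saved_info} it delegates its stored \HPT\ nodes one per invocation. Since a \DHT\ node holds only finitely many items at any time, each misplaced $v$ is eventually selected and delegated; by the assumed reliability of the \DHT, the delegation operation is carried out and is never canceled, so $v$ reaches $n_v$ after finitely many hops. I would also note that \cref{lemma:keys_saved_at_right_node} already handles the special case of keys not yet wrapped in a Patricia node: once such a key is turned into a Patricia node $v$ with $b(v)=k$ by \cref{algoithm:line:save_Key_In_Patricia_Node}, that new node is inserted through \textsc{DHT-Insert} and hence routed to $n_v$, so it contributes $0$ to $\phi_{\thePotentialIndex}$ from the moment it exists.

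The main obstacle is the interaction with node creation and deletion: because the protocol aggressively deletes nodes that look unnecessary and recreates them later (as discussed in \cref{section:SHPT_protocol}), I must argue that $\phi_{\thePotentialIndex}$ is not perpetually refilled by freshly created misplaced copies. The key point to nail down is that \emph{every} way the protocol writes an \HPT\ node to storage --- whether a brand-new Patricia node, a recreated root, an inserted Msd node, or a re-inserted key-carrying node --- goes through \textsc{DHT-Insert}, which by assumption routes the item to the responsible \DHT\ node, so a newly written node is never misplaced at birth; the only source of misplacement is corruption present in the initial state, which is finite. A secondary subtlety is the \DHT 's own insertion policy: when a data item is inserted at a \DHT\ node that already stores it, the item is kept (and not overwritten by a colliding Patricia node storing a key). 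This does not affect the count $\phi_{\thePotentialIndex}$, since it only concerns which of two copies at the \emph{same} (correct) peer survives, but I would remark on it to keep the accounting clean. With these observations the potential-function argument goes through and the lemma follows.
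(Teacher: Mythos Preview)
Your proposal is correct and rests on exactly the same mechanism as the paper's proof, namely the periodic check in \textsc{Timeout} (line~\ref{algorithm:line:delegation_wrong_saved_info} of \cref{algorithm:DHT_node_protocol}) that delegates a misplaced \HPT\ node toward the responsible \DHT\ node. The paper dispatches the lemma in three sentences without introducing a potential function and without discussing the creation/deletion interactions you carefully address; your additional bookkeeping (that every write goes through \textsc{DHT-Insert}, so newly created nodes are never misplaced) is sound and makes explicit what the paper leaves implicit, but it is not a different approach.
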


\begin{proof}
	Consider any node $v \in \TrieSet$ stored at a \DHT\ node $n$ which is not responsible for $v$.
	In finite time, our protocol checks if $v$ is stored correctly in the \DHT\ (see \cref{algorithm:DHT_node_protocol} \cref{algorithm:line:delegation_wrong_saved_info}).
	As this is not the case, $v$ is moved to the corresponding \DHT\ node and becomes searchable.
\end{proof}

\noindent
From \cref{lemma:keys_saved_at_right_node} and \cref{lemma:any_node_saved_correct_in_dht} it directly follows:

\begin{restatable}{corollary}{corollaryKeysSearchable}
	\label{corollary:keys_searchable}
	In finite time, every $k \in \KeySet$ is searchable.
\end{restatable}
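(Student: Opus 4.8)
The plan is to derive the corollary by composing the two immediately preceding results and checking that their finite-time guarantees are compatible. By \cref{lemma:keys_saved_at_right_node}, the potential function of \cref{lemma:keys_saved_at_right_node} counting keys stored in a Patricia node of wrong label never increases and reaches $0$ in finite time; hence there is a finite time $t_1$ after which every $k\in\KeySet$ is stored in a node $v\in\PatSet$ with $b(v)=k$, and this property persists. By \cref{lemma:any_node_saved_correct_in_dht}, there is a finite time $t_2$ after which every \HPT\ node is stored at the \DHT\ node responsible for it; this property also persists, since a node is relocated only when it sits at the wrong \DHT\ node.

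First I would set $t=\max\{t_1,t_2\}$ and observe that from $t$ onward both invariants hold simultaneously. Then, for an arbitrary $k\in\KeySet$, the node $v\in\PatSet$ holding $k$ satisfies $b(v)=k$ and is stored at the \DHT\ node responsible for $b(v)$, so \textsc{DHT-Search}$(k)$ returns a copy of $v$; by the definition of searchability, $k$ is therefore searchable. Since the argument applies to every key, the corollary follows.

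The only point needing a brief justification is that the two stabilization processes do not interfere: relocating a node to its responsible \DHT\ node leaves its label unchanged, so it does not recreate a misplaced key, and re-inserting a key into a correctly labeled Patricia node does not move an already well-placed node off its responsible peer. Consequently there is no real obstacle here — the statement is an immediate corollary once one notes that both underlying invariants are stable, which is precisely why it is stated without a separate proof beyond the pointer to \cref{lemma:keys_saved_at_right_node} and \cref{lemma:any_node_saved_correct_in_dht}.
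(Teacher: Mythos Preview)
Your proposal is correct and matches the paper's own approach: the paper simply states that the corollary follows directly from \cref{lemma:keys_saved_at_right_node} and \cref{lemma:any_node_saved_correct_in_dht}, which is precisely the composition you spell out. Your extra paragraph on non-interference is more detail than the paper gives, but it is sound and does not change the argument.
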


\begin{restatable}{lem}{lemmaDeletedNotRebuiltNodesReferencesVanish}
	\label{lemma:deleted_not_rebuilt_nodes_references_vanish}
	Consider any $v\in \TrieSet$ that is deleted.
	As long as $v$ is not reconstructed, in finite time it holds:
	\begin{enumerate}
		\item There is no presentation message for $b(v)$.
		\item There is no edge pointing towards $b(v)$ in the system.
	\end{enumerate}
\end{restatable}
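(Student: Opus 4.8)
The plan is to prove the two claims in turn, using throughout that, since $v$ is never reconstructed, $\textsc{DHT-Search}(b(v))$ returns $nil$ in all but finitely many states: by \cref{lemma:any_node_saved_correct_in_dht} no existing \HPT\ node carries the label $b(v)$, and the few stray misplaced copies of $v$ left by the initial state drain out of the system. I would also fix the set $\MsgSet_v\subseteq\MsgSet$ of presentation messages whose presented label is $b(v)$ and use $\phi=\vert\MsgSet_v\vert$ as a potential function.

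For the first claim, I would first show that no message is ever added to $\MsgSet_v$ after $v$ is deleted. A fresh presentation is produced only by $\textsc{LinearizeTimeout}$ or by the explicit $\textsc{Linearize}$/$\textsc{MultiLinearize}$ calls in \cref{algorithm:DHT_node_protocol} through \cref{algorithm:Trie-Node_key2_check}, and in every case the presented object is a copy of the \HPT\ node currently being processed or returned by a $\textsc{DHT-Search}$, hence an existing node; since no existing node has label $b(v)$, no new message carries $b(v)$. Thus $\phi$ can change only when a message of $\MsgSet_v$ is processed by $\textsc{Linearize}(w,u)$ with $b(u)=b(v)$. Going through the cases of $\textsc{Linearize}$, every such step either (i) consumes the message while installing a (dangling) edge of $w$ towards $b(v)$, or (ii) turns it into a message presenting $w$ (not $b(v)$) to the vacant position $b(v)$ --- this leaves $\MsgSet_v$, and such a message is a no-op when processed and is removed by fair message receipt --- or (iii) relays the message to a neighbour of $w$. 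Cases (i) and (ii) decrease $\phi$ by one and case (iii) leaves it unchanged, so $\phi$ is non-increasing. To obtain $\phi=0$, I would bound the number of relays a single message can suffer: tracking the label length of its current holder, relays towards the root strictly decrease it, relays away from the root strictly increase it while keeping it below $\vert b(v)\vert$, and the only relays that can turn the direction back towards the root are followed, within a constant number of further relays, by absorption of type (i) or (ii), because a node on the return path has by then set a child edge equal to $b(v)$ (or a concurrent $\textsc{Timeout}$ has cleared that edge, which forces absorption anyway). Since label lengths in the system are bounded, each message of $\MsgSet_v$ is relayed only finitely often and then leaves $\MsgSet_v$; together with $\phi$ non-increasing and no new arrivals, $\phi=0$ in finite time.

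For the second claim, the only fields that can reference $b(v)$ are the parent edge $p_-$, the child edges $p_0,p_1$, and the $key_2$/$r$ fields of \HPT\ nodes. Each is inspected on every $\textsc{Timeout}$ of its holder: \cref{algorithm:Trie-Node_parent_edge_check} clears a parent edge whose target $\textsc{DHT-Search}$ does not return, \cref{algorithm:Trie-Node_child_edges_check} clears such a child edge, \cref{algorithm:Trie-Node_key2_check} (with \cref{algorithm:Trie-Node_local_check}) clears a $key_2$ or $r$ value pointing to a vacant position, and an Msd node whose single parent or child points to a vacant position is deleted outright by \cref{algorithm:Trie-Node_validity_check}. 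As $b(v)$ is unsearchable, each such reference is removed (or its Msd holder deleted) at the next $\textsc{Timeout}$ of the holder, and by weakly fair action execution every node is processed infinitely often. It remains to see that references to $b(v)$ cannot be created indefinitely: a new edge towards $b(v)$ arises only when a message of $\MsgSet_v$ is processed, which by the first claim ceases; a new $key_2$/$r$ reference to $b(v)$ is only copied from an existing such reference held by a node strictly above the position $b(v)$, and these references can only migrate to strictly longer-labelled nodes (bounded by $\vert b(v)\vert$) while being cleared from their previous holder, so they too vanish. Hence after finitely many states no reference to $b(v)$ exists, and none reappears.

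I expect the relay-termination argument inside the first claim to be the main obstacle: making the ``towards/away from the root'' label-length potential rigorous in the presence of concurrent modifications --- in particular the pattern ``relay away from the root, install an edge to $b(v)$, relay back towards the root'' and the way the companion $\textsc{Timeout}$ sweeps close it off after finitely many further relays --- is the delicate point; the rest is routine bookkeeping over the $\textsc{Timeout}$ procedures.
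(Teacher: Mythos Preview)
Your decomposition is the same as the paper's: first show that no new presentation messages for $b(v)$ are generated and that each existing one is consumed after finitely many relays, then conclude that dangling edges are cleared on the next \textsc{Timeout} and never re-installed. Where you diverge is only in the level of detail. The paper disposes of relay-termination in a single sentence---``a delegation in the \HPT\ only happens finitely often, because there is only a finite number of nodes and a presentation is not delegated in circles''---and defers to the Linearization paper \cite{Onus:2007:LLS:2791188.2791198} for this fact; so the label-length potential you flag as ``the main obstacle'' is exactly the step the paper outsources to a citation rather than proving directly. For part~(b) the paper treats only the tree edges $p_{-},p_0,p_1$ (via \cref{algorithm:Trie-Node_parent_edge_check} \cref{algorithm:line:non_existing_parent_node_edge_deleted} and \cref{algorithm:Trie-Node_child_edges_check} \cref{algorithm:line:non_existing_child_node_edge_deleted}); your additional argument that $key_2$/$r$ references to $b(v)$ also vanish is sound but not required for the downstream uses of the lemma.
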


\begin{proof}
	$ $\\
	a) Since $v$ does not exist and is not reconstructed, no new presentation messages for $b(v)$ are generated.
	Since the \DHT\ is in a legal state, every message containing $b(v)$ eventually arrives at its destination \DHT\ node.
	Furthermore, a delegation in the \HPT\ only happens finitely often, because there is only a finite number of nodes and a presentation is not delegated in circles.
	This directly follows from the applied Linearization procedure.
	For details on Linearization, we refer to \cite{Onus:2007:LLS:2791188.2791198}.
	Eventually, a presentation is processed and perhaps a new edge towards $v$ in the \HPT\ is created.
	In any case the number of messages is reduced.
	\smallskip\\
	b) From a) it follows that in finite time no presentation messages concerning $b(v)$ are present.
	Let $w\in\TrieSet$ be a node pointing to $v$.
	In finite time, $w$ checks the corresponding edge and determines that $v$ is non-existing (see \cref{algorithm:Trie-Node_parent_edge_check} \cref{algorithm:line:non_existing_parent_node_edge_deleted} and \cref{algorithm:Trie-Node_child_edges_check} \cref{algorithm:line:non_existing_child_node_edge_deleted}).
	The edge is deleted and will never be rebuilt, because there are no more presentation messages concerning $v$.
\end{proof}

\subsubsection*{Phase \rom{1} -- Deletion of Patricia nodes}

\lemmaMainEmptySubtreesNotNeededNodesRemoved*

\begin{proof}
	The lemma is correct by \cref{lemma:initially_wrong_reference_information_vanishes}, \cref{lemma:empty_subtree_nodes_vanish} and \cref{lemma:inner_pat_nodes_without_rigths_leave}.
\end{proof}

\stepcounter{PotentialIndex}
\begin{restatable}{lem}{lemmaInitiallyWrongReferenceInformationVanishes}
	\label{lemma:initially_wrong_reference_information_vanishes}
	Consider the number of initially wrong reference information
	\begin{align*}
		\phi_{\thePotentialIndex} = & \vert \{ v \in \TrieSet \,\vert\, \nexists\, x\in\{0,1\}^{*} \text{ such that } x \concat p_{-}(v) = b(v) \} \vert \\
		                            & + \vert \{ v \in \TrieSet \,\vert\, p_{0}(v) \text{ does not comply } 0(0\vert 1)^{*} \} \vert                     \\
		                            & + \vert \{ v \in \TrieSet \,\vert\, p_{1}(v) \text{ does not comply } 1(0\vert 1)^{*} \} \vert                     \\
		                            & + \vert \{ v \in \PatSet \,\vert\, r(v) \neq nil \text{ and } r(v) \not\sqsubseteq b(v) \} \vert                   \\
		                            & + \vert \{ v \in \PatSet \,\vert\, key_{2} \neq nil \text{ and } b(v) \not\sqsubseteq key_{2}(v) \} \vert .
	\end{align*}
	$\phi_{\thePotentialIndex}$ is a potential function.
\end{restatable}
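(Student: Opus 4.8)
The claim is that $\phi_{\thePotentialIndex}$, counting the five kinds of locally-detectable malformed reference information, never increases and eventually reaches zero (i.e., is a potential function in the sense defined just above). The plan is to prove this by inspecting each of the five summands separately and showing (a) that the protocol strictly reduces the offending count whenever a violating node is processed, and (b) that no action of the protocol can ever create a new violation of any of the five kinds. Since each summand is bounded below by $0$ and the total is a sum of finitely many non-negative integers, monotone non-increase plus eventual progress on every nonempty summand gives the result.

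First I would handle reduction. For a node $v$ whose parent edge $p_{-}(v)$ is not a suffix of $b(v)$: by \cref{algorithm:Trie-Node_local_check} \cref{algorithm:line:parent_edge_no_suffix}, the next \textsc{Timeout} at the responsible \DHT\ node detects this and sets $p_{-}(v)\gets nil$, after which $v$ trivially admits $x$ with $x\concat p_-(v)=b(v)$ (take $x=b(v)$), so the first summand drops by one. The second and third summands are handled identically via \cref{algorithm:line:p0_wrong_form} and \cref{algorithm:line:p1_wrong_form}, which clear $p_0(v)$ or $p_1(v)$ to $nil$ (and $nil$ vacuously complies). The fourth summand ($r(v)\neq nil$ and $r(v)\not\sqsubseteq b(v)$, really $\not\sqsubset$) is reduced by \cref{algorithm:line:r_wrong_form}, which sets $r(v)\gets nil$; the fifth ($key_2(v)\neq nil$ and $b(v)\not\sqsubseteq key_2(v)$) by \cref{algorithm:line:key2_wrong_form}, which sets $key_2(v)\gets nil$. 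By \emph{weakly fair action execution}, every such $v$ is eventually processed, so as long as $\phi_{\thePotentialIndex}>0$ it strictly decreases; hence in finite time $\phi_{\thePotentialIndex}=0$.

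Next, and this is where the argument needs care, I would verify that $\phi_{\thePotentialIndex}$ is never increased by \emph{any} protocol action. The potentially worrying moves are: (i) creation of new \HPT\ nodes (root creation, the "node between $v$ and its parent" in \cref{algorithm:Trie-Node_parent_edge_check}, Msd-node insertion) — but each such node is created with its label set first and its edges then derived from that label (via $\ell cp$, \textsc{MsdLabel}, or \textsc{MultiLinearize}, which only writes edges consistent with the stored labels), so no malformed reference is introduced; (ii) the $key_2$/$r$ assignments in \cref{algorithm:Trie-Node_key2_check} — here one must check that $key_2(v)$ is only ever set to a value $b(k)$ with $b(v)\sqsubset b(k)$ (so $b(v)\sqsubseteq key_2(v)$ holds) and $r(v)$ only to a value $b(k)$ with $b(k)\sqsubset b(v)$, which follows because the captured node is always "above $v$" in the search, and likewise $key_2(k)\gets b(v)$ is done only when $v$ is a leaf below $k$; (iii) the Linearization writes in \cref{algorithm:Linearization}, where every assignment to $p_{-}(v)$, $p_0(v)$, $p_1(v)$ is of the form $x$ with $b(u)\concat x=b(v)$ or $y$ with $b(v)\concat y=b(u)$ for the presented node $u$, hence automatically a valid suffix / prefix of the correct form. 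Going through these cases and confirming that the guard of each write already forces the stored value into the "well-formed" shape is the main obstacle — it is a bookkeeping argument across all of \cref{algorithm:Trie-Node_parent_edge_check,algorithm:Trie-Node_key2_check,algorithm:Linearization}, rather than a single clean estimate.

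Finally I would note the standing caveat from the preamble: transient messages presenting non-existent nodes could momentarily cause an edge to be written that later looks wrong, but such messages are finite in number and drain out of the system (by \cref{lemma:deleted_not_rebuilt_nodes_references_vanish}), so after finitely many steps no write can reintroduce a malformed reference, and from that point $\phi_{\thePotentialIndex}$ is genuinely monotone non-increasing and reaches $0$. Assembling (a) and (b), $\phi_{\thePotentialIndex}$ is a potential function, which is exactly the statement.
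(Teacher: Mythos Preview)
Your proposal is correct and follows essentially the same approach as the paper: each malformed reference is cleared to $nil$ by the local checks in \cref{algorithm:Trie-Node_local_check} (\cref{algorithm:line:parent_edge_no_suffix,algorithm:line:p0_wrong_form,algorithm:line:p1_wrong_form,algorithm:line:key2_wrong_form,algorithm:line:r_wrong_form}), and closure holds because no protocol action ever writes a reference of the malformed shapes counted by $\phi_{\thePotentialIndex}$. Your version is considerably more explicit than the paper's two-sentence proof in verifying the closure direction across \cref{algorithm:Trie-Node_parent_edge_check,algorithm:Trie-Node_key2_check,algorithm:Linearization}, but the underlying argument is identical.
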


\begin{proof}
	All the information types gathered in $\phi_{\thePotentialIndex}$ are checked locally and without acquiring other nodes.
	When some of the types appear, the corresponding edges are set to $nil$, reducing $\phi_{\thePotentialIndex}$ (see \cref{algorithm:Trie-Node_local_check} \cref{algorithm:line:parent_edge_no_suffix}, \cref{algorithm:line:p0_wrong_form}, \cref{algorithm:line:p1_wrong_form}, \cref{algorithm:line:key2_wrong_form}, \cref{algorithm:line:r_wrong_form}).
	The protocol never stores information in a way as defined by $\phi_{\thePotentialIndex}$, such that closure holds.
\end{proof}

\stepcounter{PotentialIndex}
\begin{restatable}{lem}{lemmaEmptySubtreeNodesVanish}
	\label{lemma:empty_subtree_nodes_vanish}
	Let $\phi_{\thePotentialIndex}=\max_{ v \in \mathcal{U} } \vert b(v) \vert$, where:
	\begin{align*}
		\mathcal{U}=\{v\in \TrieSet \,\vert\, & key(v)=nil \text{ and } \nexists\, w\in\PatSet                      \\
		                                      & \text{with } key(w) \neq nil \text{ and } b(v) \sqsubseteq b(w) \}.
	\end{align*}
	$\mathcal{U}$ is the set of all \HPT\ nodes in subtrees that can safely be deleted.
	$\phi_{\thePotentialIndex}$ is a potential function.
\end{restatable}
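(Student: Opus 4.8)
The plan is to show that $\phi_{\thePotentialIndex}$ cannot increase (closure) and that it strictly decreases whenever $\mathcal{U}\neq\emptyset$ (convergence), so that since $\phi_{\thePotentialIndex}$ takes values in $\mathbb{N}_{\ge 0}$ it eventually reaches $0$, i.e. $\mathcal{U}=\emptyset$. First I would observe the basic structural fact that $\mathcal{U}$ is \emph{downward closed under the ``below'' relation among nodes in $\mathcal{U}$}: if $v\in\mathcal{U}$ and $w\in\TrieSet$ with $b(v)\sqsubseteq b(w)$ and $key(w)=nil$, then $w\in\mathcal{U}$ as well, since any key-storing node below $w$ would also be below $v$, contradicting $v\in\mathcal{U}$. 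In particular, a node $v$ attaining the maximum $\phi_{\thePotentialIndex}=|b(v)|$ in $\mathcal{U}$ has \emph{no valid child edge pointing to a Patricia node in $\mathcal{U}$ that is itself below it with a key}; more importantly, I claim such a maximal $v$ has no valid child edges at all once the earlier lemmas have taken effect. Here I would invoke \cref{lemma:main:misc_repair} (specifically \cref{lemma:any_node_saved_correct_in_dht} and the part about vanishing edges to deleted nodes, \cref{lemma:deleted_not_rebuilt_nodes_references_vanish}) and \cref{lemma:initially_wrong_reference_information_vanishes} to assume we have reached a state in which all remaining edge information is syntactically well-formed and every edge either points to a genuinely existing \HPT\ node or is eventually cleared.

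The core argument for the decrease is then: let $v\in\mathcal{U}$ with $|b(v)|=\phi_{\thePotentialIndex}$ maximal. Since $v\in\mathcal{U}$ we have $key(v)=nil$, and since $v$ is a \HPT\ node with $key(v)=nil$ whose subtree contains no key, $v$ cannot have two valid child edges pointing to existing Patricia nodes — if it did, say pointing to $c_0,c_1$, then each $c_i$ has $b(v)\sqsubset b(c_i)$; if $c_i$ had a key we would need $c_i \in \PatSet$ with $key(c_i)\ne nil$ and $b(v)\sqsubseteq b(c_i)$, contradicting $v\in\mathcal{U}$, and if $c_i$ had no key then $c_i\in\mathcal{U}$ with a strictly longer label, contradicting maximality of $|b(v)|$. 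So after the syntactic cleanup of \cref{lemma:initially_wrong_reference_information_vanishes} and the edge-pruning of \cref{algorithm:Trie-Node_child_edges_check}, within finite time $v$ has at most one valid child edge; if $v$ is not the root, then \cref{algorithm:Trie-Node_validity_check} \cref{algorithm:line:patricia_node_no_key_no_two_children_no_root_deleted} fires and deletes $v$ (and if $v\in\MsdSet$ with no valid parent/single-child structure, \cref{algorithm:line:msd_node_no_child_parent_deleted} deletes it instead). By \cref{lemma:deleted_not_rebuilt_nodes_references_vanish}, $v$ is not reconstructed — newly created Patricia nodes are only ever created \emph{below} existing ones and inherit valid children, so they do not land in $\mathcal{U}$ — hence after $v$ is removed the new maximum label length in $\mathcal{U}$ is strictly smaller, and $\phi_{\thePotentialIndex}$ decreases. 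For closure: the protocol never inserts a node into $\mathcal{U}$ with label at least the current maximum, because (i) keys are only inserted as Patricia nodes $v$ with $b(v)=k\in\KeySet$, which are outside $\mathcal{U}$, (ii) Msd nodes are only inserted between two bidirectionally-connected Patricia nodes that lie on a common branch to a key, so they are not in $\mathcal{U}$, and (iii) new Patricia nodes created during reconstruction are created below existing nodes storing keys (this is exactly the design choice described in the ``Removal/Creation of Nodes'' section that avoids creation/deletion loops), so they too have a key below them.

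The main obstacle I anticipate is handling the root and the corner case where $v$ attaining the maximum happens to \emph{be} the root: the root is never deleted by \cref{algorithm:line:patricia_node_no_key_no_two_children_no_root_deleted}, so I must argue that the root is never in $\mathcal{U}$ once keys are stabilized — but as soon as $\KeySet\neq\emptyset$ there is a key $k$ with $\varepsilon = b(root) \sqsubseteq k$, so $root\notin\mathcal{U}$; and if $\KeySet=\emptyset$ there are no keys at all and $\mathcal{U}$ may contain only the root forever, in which case $\phi_{\thePotentialIndex}=0$ anyway since $|b(root)|=|\varepsilon|=0$. A second subtlety is the transient phase before \cref{lemma:initially_wrong_reference_information_vanishes} and \cref{lemma:deleted_not_rebuilt_nodes_references_vanish} have taken hold: during this phase a maximal node in $\mathcal{U}$ might still carry a ``phantom'' child edge to a nonexistent node, preventing its deletion. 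The fix, as already flagged in the paper's remark about temporary increases of potential functions, is to note these phantom edges are cleared in finite time by \cref{algorithm:Trie-Node_child_edges_check} \cref{algorithm:line:non_existing_child_node_edge_deleted}, after which the deletion argument applies; so the decrease of $\phi_{\thePotentialIndex}$ is guaranteed only eventually, which is exactly what ``in finite time'' requires and suffices to conclude.
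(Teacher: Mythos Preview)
Your approach is essentially the paper's: pick a node of maximal label length in $\mathcal{U}$, argue it has no existing children (any child would either carry a key, contradicting $v\in\mathcal{U}$, or be in $\mathcal{U}$ with longer label, contradicting maximality), so its child edges are cleared and it is deleted by \textsc{CheckValidity}; closure holds because any newly created node has a creator strictly below it, which either lies in $\mathcal{U}$ (contradicting maximality) or witnesses a key below the new node.

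Two small points to tighten. First, your sentence ``after $v$ is removed the new maximum label length in $\mathcal{U}$ is strictly smaller'' is not literally true when several nodes share the maximal length; the paper handles this by introducing $\mathcal{T}=\{v\in\mathcal{U}\mid |b(v)|=\phi\}$ and showing $|\mathcal{T}|$ decreases, which forces $\phi$ to drop once $\mathcal{T}$ empties. Your argument applies verbatim to every element of $\mathcal{T}$, so the fix is immediate, but as written the step fails. Second, your closure argument has the direction reversed: new \HPT\ nodes are created \emph{above} (with strictly shorter label than) the Patricia node that triggers their creation, not below it; this is why the new node ``inherits valid children'' and has a key below it. Your conclusions (``inherit valid children'', ``have a key below them'') are the right ones, so this is a slip in wording rather than in understanding, and item (iii) should not require the creator itself to store a key---it suffices that the creator has label longer than $\phi$ and hence lies outside $\mathcal{U}$.
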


\begin{proof}
	Let $\mathcal{T}=\{ v\in \mathcal{U} \,\vert\, \vert b(v) \vert = \phi_{\thePotentialIndex} \}$.
	We show that $\vert \mathcal{T} \vert$ is reduced and never increased as long as $\phi_{\thePotentialIndex}$ stays the same.
	When $\vert \mathcal{T} \vert$ reaches zero, $\phi_{\thePotentialIndex}$ is reduced and $\mathcal{T}$ changes.
	We show that $\phi_{\thePotentialIndex}$ does not increase and follow that $\phi_{\thePotentialIndex}$ is a potential function.

	Consider a $v\in \mathcal{T}$.
	If $v \in \MsdSet$, then $v$ has no existing child.
	Else, the child of $v$ would have a longer label than $v$, such that $v\notin \mathcal{T}$.
	$v$ will be deleted in finite time by the protocol (see \cref{algorithm:Trie-Node_validity_check} \cref{algorithm:line:msd_node_no_child_parent_deleted}).
	In the other case, let $v \in \PatSet$.
	Again, $v$ has no existing child since $\vert b(v) \vert =\phi_{\thePotentialIndex}$ and any child would be a member of $\mathcal{U}$ with a longer label.
	According to \cref{lemma:deleted_not_rebuilt_nodes_references_vanish}, any child edge of $v$ will be deleted in finite time.
	Afterwards, $v$ deletes itself since $key(v)=nil$ (see \cref{algorithm:Trie-Node_validity_check} \cref{algorithm:line:patricia_node_no_key_no_two_children_no_root_deleted}).
	In any case, $\vert \mathcal{T} \vert$ is reduced.
	Also, $\vert \mathcal{T} \vert$ and $\phi_\thePotentialIndex$ will never be increased, because a node $u$ can only be created by a node $w$ where $b(u) \sqsubset b(w)$.
	The existence of $w$ would contradict the maximality of $\mathcal{T}$ concerning $\phi_\thePotentialIndex$.
\end{proof}

\stepcounter{PotentialIndex}
\begin{restatable}{lem}{lemmaInnerPatNodesWithoutRigthsLeave}
	\label{lemma:inner_pat_nodes_without_rigths_leave}
	Let $\phi_{\thePotentialIndex}=\vert\, \mathcal{F} \,\vert$, where
	\begin{align*}
		\mathcal{F} = \{ v\in\PatSet \,\vert\, & key(v) = nil \text{ and } \nexists\, u,\, w \in \PatSet \\
		                                       & \text{with } key(u), \, key(w) \neq nil                 \\
		                                       & \text{and } b(v) = \ell cp(b(u), b(w)) \}.
	\end{align*}
	$\mathcal{F}$ is the set of all remaining Patricia nodes that are not needed in the \HPT.
	$\phi_{\thePotentialIndex}$ is a potential function.
\end{restatable}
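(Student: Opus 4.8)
The plan is to verify the two defining properties of a potential function for $|\mathcal{F}|$: it never increases (closure), and whenever $\mathcal{F}\neq\emptyset$ it strictly decreases within finite time (convergence). Since $|\mathcal{F}|$ is a nonnegative integer, the two together force $\mathcal{F}=\emptyset$ in finite time. Throughout I would freely assume that the earlier potentials have already reached $0$; in particular, by \cref{lemma:initially_wrong_reference_information_vanishes} every parent edge is a genuine suffix (so $\textsc{Parent}(v)\sqsubseteq b(v)$), by \cref{corollary:keys_searchable} every key sits in a Patricia node of the correct label, and by \cref{lemma:empty_subtree_nodes_vanish} the set $\mathcal{U}$ is empty, i.e.\ every remaining $\TrieSet$ node lies in a subtree containing a key. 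I would also use that, by \cref{algorithm:Trie-Node_child_edges_check} \cref{algorithm:line:non_existing_child_node_edge_deleted} together with \cref{lemma:deleted_not_rebuilt_nodes_references_vanish}, a Patricia node's child edges eventually point only to existing Patricia nodes (never to Msd nodes, never to nothing).

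For convergence I would look at the deepest layer of $\mathcal{F}$. Put $\ell^{*}=\max_{v\in\mathcal{F}}|b(v)|$ and $\mathcal{T}=\{v\in\mathcal{F}: |b(v)|=\ell^{*}\}$; it suffices to show $|\mathcal{T}|$ drops and, while $\ell^{*}$ is unchanged, never grows, since emptying $\mathcal{T}$ lowers $\ell^{*}$. Fix $v\in\mathcal{T}$ (one may assume $v\neq root$, as the root is never removed and is excluded by the guard in \cref{algorithm:Trie-Node_validity_check}). The geometric heart of the argument is that $v$ cannot in the long run have both child edges valid: if $p_{0}(v)$ and $p_{1}(v)$ both pointed to Patricia nodes then, since $\mathcal{U}=\emptyset$, each of the two subtrees would contain a key, say $k_{0}$ below the $0$-child and $k_{1}$ below the $1$-child; these keys agree on all of $b(v)$ and then differ in the next bit, so $\ell cp(k_{0},k_{1})=b(v)$, and since $k_{0},k_{1}$ are stored in Patricia nodes this contradicts $v\in\mathcal{F}$. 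Hence, after finite time, at least one of $p_{0}(v),p_{1}(v)$ is $nil$; as $key(v)=nil$ and $v\neq root$, the validity check (\cref{algorithm:Trie-Node_validity_check} \cref{algorithm:line:patricia_node_no_key_no_two_children_no_root_deleted}) deletes $v$, and by \cref{lemma:deleted_not_rebuilt_nodes_references_vanish} all edges toward $b(v)$ vanish and are not rebuilt. So $|\mathcal{T}|$ decreases.

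For closure I would check that the protocol never turns a fresh node into a lasting member of $\mathcal{F}$. Re-storing a key (\cref{algorithm:Trie-Node_local_check} \cref{algorithm:line:wrong_saved_key}) produces a node with $key\neq nil$, hence outside $\mathcal{F}$; the only other creations are the root (excluded) and the ``node between $v$ and its parent'' in \cref{algorithm:Trie-Node_parent_edge_check}, where the new node $n$ satisfies $\textsc{Parent}(v)\sqsubset b(n)\sqsubset b(v)$ and is created by $v$, which lies strictly below it and becomes a valid child of $n$ via the \textsc{MultiLinearize} call. I expect the main obstacle to lie exactly here: ruling out an infinite creation--deletion loop in which spurious inner nodes keep being produced and re-produced. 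I would handle it with the ``created strictly from below'' invariant: every created keyless node is the $\ell cp$ of two nodes already present and attaches immediately to a strictly deeper node, so it is never recognised as childless and deleted at once; moreover deletions triggered within this lemma only propagate bottom-up, and the key-bearing Patricia nodes (never deleted, by \cref{lemma:keys_saved_at_right_node}) form a fixed anchor at the bottom, so only finitely many such nodes can ever be created. After that point $|\mathcal{F}|$ is monotone non-increasing, and the deepest-layer argument drives it to $0$; temporary bumps caused by presentation messages for non-existent nodes are ignored following the paper's convention.
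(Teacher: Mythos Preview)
Your convergence argument is essentially the paper's, though the deepest-layer machinery with $\mathcal{T}$ is unnecessary. The paper simply takes an arbitrary $v\in\mathcal{F}$ and observes that, because $\mathcal{U}=\emptyset$ (\cref{lemma:empty_subtree_nodes_vanish}), one of the two child directions below $v$ contains no key and therefore no \HPT\ node whatsoever; by \cref{lemma:deleted_not_rebuilt_nodes_references_vanish} the corresponding child edge is eventually cleared, after which \cref{algorithm:Trie-Node_validity_check} \cref{algorithm:line:patricia_node_no_key_no_two_children_no_root_deleted} removes $v$. No stratification by label length is needed, since this works for every element of $\mathcal{F}$ simultaneously.

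The real discrepancy is in closure, and there your argument has a gap. You correctly record that a freshly created inner Patricia node $n$ has $b(n)=\ell cp(b(v),b(par)\concat e_{par})$, the longest common prefix of two labels belonging to present nodes; but you then detour into a creation--deletion loop discussion whose key claim (``only finitely many such nodes can ever be created'') you do not justify, and which in any case does not by itself show that $|\mathcal{F}|$ is non-increasing. The paper's argument is one line and closes the issue directly: since $\mathcal{U}=\emptyset$, each of those two present nodes has a key beneath it, so $b(n)$ is in fact $\ell cp(k_1,k_2)$ for two keys $k_1,k_2$, and hence $n\notin\mathcal{F}$ by definition. Consequently $|\mathcal{F}|$ never increases and the entire loop worry evaporates. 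If you insert exactly this step right after ``the $\ell cp$ of two nodes already present'', your closure paragraph collapses to the paper's proof and the remainder becomes superfluous.
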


\begin{proof}
	Consider a node $v \in \mathcal{F}$.
	We have that $v \notin \mathcal{U}$ as $\mathcal{U}$ is empty in finite time due to \cref{lemma:empty_subtree_nodes_vanish}.
	Hence, $v$ has at least one non-existing child node $w$ which will never be created.
	This holds, because by definition there is no $u \in \PatSet$ with $key(u)\neq nil$ and $b(w) \sqsubseteq b(u)$.
	According to \cref{lemma:deleted_not_rebuilt_nodes_references_vanish}, the edge to $w$ will be deleted at $v$ in finite time such that $v$ deletes itself resulting in a reduction of $\phi_{\thePotentialIndex}$ (see \cref{algorithm:Trie-Node_validity_check} \cref{algorithm:line:patricia_node_no_key_no_two_children_no_root_deleted}).
	A new Patricia node $v$ is only created if there are two Patricia nodes $u,\, w$ with $key(u),\, key(w)\neq nil$ and $b(v)=\ell cp(b(u),b(w))$ (see \cref{algorithm:Trie-Node_parent_edge_check}).
	This results in $v \notin \mathcal{F}$ and therefore, $\phi_{\thePotentialIndex}$ is not increased.
\end{proof}

\lemmaMainDeletionStops*

\begin{proof}
	The lemma holds due to the correctness of \cref{lemma:last_nodes_with_empty_child_deleted} and \cref{lemma:no_more_deletions_happen}.
\end{proof}

\stepcounter{PotentialIndex}
\begin{restatable}{lem}{lemmaLastNodesWithEmptyChildDeleted}
	\label{lemma:last_nodes_with_empty_child_deleted}
	Let $\mathcal{N}$ be the set of nodes, defined by their labels, which do not exist in the \HPT.
	Let $\phi_{\thePotentialIndex} = \max_{ v \in \mathcal{D}} \vert b(v) \vert$, where
	\begin{align*}
		\mathcal{D} = & \{ v\in\MsdSet\cup\mathcal{N} \,\vert\, \text{there is a } v\in\PatSet \text{ with } b(w)\sqsubseteq b(v) \text{ that points to } v \} \\
		              & \cup \{ m \in \MsgSet \,\vert\, b(m) = b(v), \, v \in\MsdSet\cup\mathcal{N}\cup\mathcal{L} \} \cup \mathcal{L}                         \\
		\mathcal{L} = & \{ v\in\PatSet \,\vert\, key(v)=nil \text{ and } p_{0}(v)=nil \text{ or } p_{1}(v)=nil \}
	\end{align*}
	$\mathcal{L}$ is the set of Patricia nodes that are unnecessary in their local perspective.
	$\mathcal{D}$ is the set of nodes and messages that may result in a deletion of a Patricia node.
	$\phi_{\thePotentialIndex}$ is a potential function.
\end{restatable}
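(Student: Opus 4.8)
The plan is to show that $\phi_{\thePotentialIndex}$ never increases and that, whenever $\mathcal{D}\neq\emptyset$, the quantity $\phi_{\thePotentialIndex}$ (or a secondary measure at the same level) strictly decreases in finite time. As in the earlier deletion lemmas, I would first argue that messages in the system presenting a non-existent label eventually vanish: by \cref{lemma:deleted_not_rebuilt_nodes_references_vanish} there are in finite time no presentation messages for labels of nodes that do not exist, so the $\mathcal{M\!S\!G}$-contribution to $\mathcal{D}$ is transient and I may ignore temporary bumps it causes (exactly the convention set up at the start of the appendix). After that, $\mathcal{D}$ consists only of genuine $\MsdSet$/$\mathcal{N}$ nodes pointed at by a Patricia node above them, together with the set $\mathcal{L}$ of Patricia nodes that locally look unnecessary (no key, missing a child edge).

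The core of the argument is a top element analysis. Let $\mathcal{T}=\{v\in\mathcal{D}\mid |b(v)|=\phi_{\thePotentialIndex}\}$ and show $|\mathcal{T}|$ strictly decreases while $\phi_{\thePotentialIndex}$ is constant, and that $\phi_{\thePotentialIndex}$ itself cannot go up. Take $v\in\mathcal{T}$. If $v\in\MsdSet\cup\mathcal{N}$ and is pointed at by some Patricia node $w$ with $b(w)\sqsubseteq b(v)$, then since edges to Msd nodes and to non-existent nodes are treated as corrupt, $w$ clears that edge in finite time (\cref{algorithm:Trie-Node_parent_edge_check} line~\ref{algorithm:line:non_existing_parent_node_edge_deleted}, \cref{algorithm:Trie-Node_child_edges_check} line~\ref{algorithm:line:non_existing_child_node_edge_deleted}); by \cref{lemma:deleted_not_rebuilt_nodes_references_vanish} it is never rebuilt, so $v$ leaves $\mathcal{D}$. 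If $v\in\mathcal{L}$, then $v\in\PatSet$, $key(v)=nil$, and $v$ is missing a child edge; one has to observe that by maximality of $|b(v)|$ the missing child's label is not in $\mathcal{D}$ — but more to the point, any such child either does not exist or is an Msd node, and in either case the edge stays absent, so in finite time \textsc{CheckValidity} deletes $v$ (\cref{algorithm:Trie-Node_validity_check} line~\ref{algorithm:line:patricia_node_no_key_no_two_children_no_root_deleted}); when $v$ is deleted its own label becomes a member of $\mathcal{N}$, but it has strictly smaller length than\ldots\ no — it has the same length, so care is needed here: deleting $v$ removes it from $\mathcal{L}$, and any edge of a higher node now pointing at $b(v)$ contributes only at length $|b(v)|=\phi_{\thePotentialIndex}$ or higher, so I must bundle these into the count and show the bundle shrinks. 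This bookkeeping is why $\mathcal{D}$ is defined to include both the dangling edges and $\mathcal{L}$ simultaneously.

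Finally, monotonicity: $\phi_{\thePotentialIndex}$ cannot increase because (i) a new $\MsdSet$ or $\mathcal{N}$ entry pointed at from above can only be created by edge corruption, which the protocol never introduces; Msd nodes are only (re)inserted between two bidirectionally connected Patricia nodes, so a freshly inserted Msd node is not in $\mathcal{D}$; (ii) a new member of $\mathcal{L}$ would be a keyless Patricia node missing a child edge, but new Patricia nodes are created only as $\ell cp$ of two key-carrying nodes with both child edges valid (\cref{algorithm:Trie-Node_parent_edge_check}), hence not in $\mathcal{L}$; and (iii) new presentation messages for labels of $\MsdSet\cup\mathcal{N}\cup\mathcal{L}$ nodes are not generated once those messages have drained, by \cref{lemma:deleted_not_rebuilt_nodes_references_vanish}. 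Combining, $\phi_{\thePotentialIndex}$ is nonincreasing and strictly decreases whenever positive, so it is a potential function. The main obstacle I anticipate is exactly the length-preserving step above: when a node $v\in\mathcal{L}$ is deleted, its label enters $\mathcal{N}$ at the same length, so the strict decrease must be tracked on the refined measure $(\phi_{\thePotentialIndex},|\mathcal{T}|)$ with the understanding that edges pointing at the just-deleted label are removed only after a further finite delay — getting the ordering of these finite-time events right, and showing no adversarial interleaving can keep $|\mathcal{T}|$ from draining, is the delicate part.
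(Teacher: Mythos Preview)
Your overall strategy matches the paper's: define $\mathcal{T}=\{e\in\mathcal{D}\mid |b(e)|=\phi_{\thePotentialIndex}\}$, show $|\mathcal{T}|$ drains while $\phi_{\thePotentialIndex}$ is held fixed, and verify that neither quantity can increase under edge changes, node insertions, or deletions. You also correctly flag the length-preserving step (a deleted $v\in\mathcal{L}$ becoming a label in $\mathcal{N}$ at the same length) as the point where bookkeeping on $(\phi_{\thePotentialIndex},|\mathcal{T}|)$ is needed.

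There are, however, two genuine gaps. First, you cannot dispose of all messages up front via \cref{lemma:deleted_not_rebuilt_nodes_references_vanish}: that lemma only concerns labels of nodes that have been deleted and not rebuilt, whereas $\mathcal{D}$ also contains messages presenting labels of \emph{existing} Msd nodes and of existing Patricia nodes in $\mathcal{L}$. The paper does not try to drain messages first; it keeps them in $\mathcal{D}$ and handles them as separate cases in the top-element analysis (a message for a non-existent/Msd label eventually turns into at most one dangling edge at the same length; a message for a label in $\mathcal{L}$ is subsumed by the analysis of that $\mathcal{L}$-node). Second, your case $v\in\mathcal{L}$ is incomplete: you assert that the missing child ``either does not exist or is an Msd node, and in either case the edge stays absent,'' and conclude $v$ is deleted. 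But an empty child slot at $v$ can be filled by a Linearization presentation of an existing Patricia node below $v$; in that event $v$ leaves $\mathcal{L}$ (and $|\mathcal{T}|$ drops) without being deleted. The paper explicitly splits case $v\in\mathcal{L}$ into ``$v$ is presented an existing Patricia child'' versus ``$v$ is not and gets deleted.''

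A smaller point on monotonicity: your argument that a freshly inserted Msd node is not in $\mathcal{D}$ because it sits between bidirectionally connected Patricia nodes does not by itself rule out that some Patricia node above already has a child edge to that label. The paper's argument is different: if a Patricia node points to the new Msd label, it must have pointed to that (previously non-existent) label before the insertion, so the element was already counted in $\mathcal{D}$ and nothing increases. With these two case distinctions and the message handling fixed, your plan goes through and coincides with the paper's proof.
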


\begin{proof}
	Similar to the proof of \cref{lemma:empty_subtree_nodes_vanish}, we define $\mathcal{T} = \{ e \in \mathcal{D} \,\vert\, \vert b(e) \vert = \phi_{\thePotentialIndex} \}$ to be the set of all elements in $\mathcal{D}$ of longest label.
	We will show that as long as $\phi_{\thePotentialIndex}$ stays the same, $\vert \mathcal{T} \vert$ is reduced and never increases.
	When $\vert \mathcal{T} \vert$ reaches zero, $\phi_{\thePotentialIndex}$ is decreased and $\mathcal{T}$ contains a new set of elements.
	We also show that $\phi_{\thePotentialIndex}$ is never increased and follow that $\phi_{\thePotentialIndex}$ is a potential function.
	For $e\in S$, we distinguish the following cases:
	\begin{enumerate}
		\item $e \in \MsdSet\cup\mathcal{N}$ and there is a finite set $\mathcal{A}$ of Patricia nodes which have a child edge pointing to $e$.
		      Consider any Patricia node $v$ out of this set.
		      Either $e$ is overwritten by a Patricia node, $v$ is presented a Patricia node (see \cref{algorithm:Linearization}) or $v's$ respective child edge $p_{x}(v)$ is checked in finite time.
		      In the first case, $e$ is no longer in $\mathcal{T}$ such that $|\mathcal{T}|$ reduced.
		      In the other two cases, $p_{x}(v)$ is set to $nil$ (see \cref{algorithm:Trie-Node_child_edges_check} \cref{algorithm:line:non_existing_child_node_edge_deleted}).
		      As $|b(v)|<|b(e)|$ and $|\mathcal{A}|$ is finite, this implies a reduction of $|\mathcal{T}|$ in finite time.
		\item $e \in \MsgSet$ with $b(e) = b(v)$ for some node $v$ which does not exist or is an Msd node.
		      If a new Patricia node $v$ is created, we argue below that this node is not in $\mathcal{T}$.
		      Further, $e$ is no longer in $\mathcal{T}$ in this case such that $|\mathcal{T}|$ reduced.
		      Else, due to \cref{lemma:deleted_not_rebuilt_nodes_references_vanish}, $e$ vanishes in finite time and may only result in one element in $\mathcal{T}$ for which case a) applies.
		\item $e \in \MsgSet$ with $b(e)=b(v)$ and $v$ is a Patricia node in $\mathcal{L}$.
		      Then $v \in \mathcal{T}$ and case d) applies to $v$.
		\item $e \in \mathcal{L}$.
		      Either $e$ is presented an existing Patricia node (see \cref{algorithm:Linearization}), or $e$ does not contribute to the system in its local view.
		      If $e$ is presented an existing Patricia node, $|\mathcal{T}|$ reduces and all messages presenting $e$ are no longer in $\mathcal{T}$.
		      If $e$ does not contribute to the system in its local view, it will be deleted in finite time (see \cref{algorithm:Trie-Node_validity_check} \cref{algorithm:line:patricia_node_no_key_no_two_children_no_root_deleted}) and $|\mathcal{T}|$ reduces.
	\end{enumerate}
	In all cases $\vert \mathcal{T} \vert$ is eventually reduced.
	It is left to show that $\phi_\thePotentialIndex$ is not increased and $\vert \mathcal{T}\vert$ is not increased.
	Both could be increased if a Patricia node changes its child edge, if a node is inserted or if a Patricia node is deleted.

	Assume a node $v \in \PatSet$ changes its child edge such that $\phi_\thePotentialIndex$ or $\vert\mathcal{T}\vert$ increases.
	If $v$ deletes its child edge, the edge pointed to a node $n$ which was either non-existing or an Msd node.
	As the deletion increased $\phi_\thePotentialIndex$ or $|\mathcal{T}|$, $|b(n)|$ was greater than $\phi_\thePotentialIndex$ which is a contradiction because $n\in\mathcal{D}$.
	If $v$ changes its edge, let $m$ be the responsible message presenting $\vert b(w) \vert \geq \phi_{\thePotentialIndex}$.
	It holds $m\in\mathcal{D}$.
	As $m$ was processed when $v's$ child edge was created, $\vert\mathcal{T}\vert$ did not increase.
	An increase of $\phi_\thePotentialIndex$ poses a contradiction, because $m$ was already in $\mathcal{D}$.

	Consider the insertion of a node.
	Assume the creation of an Msd node $m$ with $\vert b(m) \vert \geq \phi_{\thePotentialIndex}$.
	Msd nodes are not presented and no Patricia node is overwritten by an Msd node.
	Hence, the existence of a node $v\in\PatSet$ above $m$ that has a child edge to $m$ implies that $v$ previously had a child edge to a non-existing node $n$ with $|b(n)|=|b(m)|$.
	If $\phi_\thePotentialIndex$ increased due to $m$, this poses a contradiction.
	Also, $|\mathcal{T}|$ was not increased because $n$ does not count towards $|\mathcal{T}|$ after insertion of $m$.

	Assume the creation of a Patricia node $v$ with $\vert b(v) \vert \geq \phi_{\thePotentialIndex}$.
	$v$ has been created, because there are two Patricia nodes $u,\,w$ with $b(v)=\ell cp(b(u), b(w))$.
	By definition $u,\, w \notin \mathcal{T}$ since their labels are longer than $\vert b(v) \vert \geq \phi_{\thePotentialIndex}$.
	Assume $w$ created $v$.
	Then $w$ is a Patricia node.
	$v$ could only be in $\mathcal{D}$ if there was a Patricia node above $v$ having a child edge pointing to a node $m\in\MsdSet\cup\mathcal{N}$.
	$v$ has in this case both $m$ and $w$ as initial children such that $|b(m)|>|b(v)|$.
	Since $m\in\mathcal{D}$ this implies that $|b(v)|<\phi_\thePotentialIndex$ resulting in a contradiction.

	Consider the deletion of a node $v$.
	The deletion of an Msd node does not influence $\phi_\thePotentialIndex$ or $|\mathcal{T}|$, therefore assume $v\in\PatSet$.
	If $v\in\mathcal{D}$, then $|b(v)|\leq\phi_\thePotentialIndex$ and the deletion of $v$ does not increase $\phi_\thePotentialIndex$ or $|\mathcal{T}|$.
	Assuming that $\phi_\thePotentialIndex$ or $|\mathcal{T}|$ increased, $v\notin\mathcal{D}$.
	Thus, $v$ either stores a key or has two existing nodes as children.
	This implies that $v$ is not deleted posing a contradiction (see \cref{algorithm:Trie-Node_validity_check} \cref{algorithm:line:patricia_node_no_key_no_two_children_no_root_deleted}).

	Hence, $\phi_{\thePotentialIndex}$ and $|\mathcal{T}|$ do not increase and decreases over time which proves the lemma.
\end{proof}

\begin{restatable}{lem}{lemmaNoMoreDeletionsHappen}
	\label{lemma:no_more_deletions_happen}
	In finite time no further Patricia nodes will be deleted during stabilization and all messages concerning non-existing Patricia nodes vanished.
\end{restatable}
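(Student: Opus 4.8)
The plan is to locate every point in the pseudocode at which a \emph{Patricia} node can actually be removed, show that each of these fires only finitely often, and then argue separately that messages naming non-existing Patricia nodes disappear. A Patricia node is deleted only in two spots: the wrong-label branch of \textsc{CheckNodeInfo} (\cref{algorithm:Trie-Node_local_check}, \cref{algorithm:line:wrong_saved_key}) and the unnecessary-node branch of \textsc{CheckValidity} (\cref{algorithm:Trie-Node_validity_check}, \cref{algorithm:line:patricia_node_no_key_no_two_children_no_root_deleted}); every other \textbf{delete} in the protocol targets an Msd node and is irrelevant here. So it suffices to bound the number of times these two lines delete a Patricia node, and to show the message part.

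For the unnecessary-node deletions I would invoke \cref{lemma:last_nodes_with_empty_child_deleted}: its potential $\phi$ is non-increasing and eventually $0$, so in finite time the set $\mathcal{D}$ -- which contains $\mathcal{L}$, all messages naming an Msd, non-existing, or locally-unnecessary label, and all ``child-edge-to-a-ghost'' witnesses -- is empty and, $\phi$ never increasing, stays empty (the only residual element could be the root, which is never deleted). Once $\mathcal{L}=\emptyset$, every Patricia node $v\neq root$ with $key(v)=nil$ has two children, so the guard of \cref{algorithm:line:patricia_node_no_key_no_two_children_no_root_deleted} is never again satisfied. The delicate, slightly self-referential part is closure: I must rule out a Patricia node re-entering $\mathcal{L}$ during the later phases. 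This holds because every Patricia node the protocol creates (\cref{algorithm:Trie-Node_parent_edge_check}) is produced by \textsc{MultiLinearize} together with the two nodes flanking its branching point, hence is born with two valid child edges and lies outside $\mathcal{L}$; and a Patricia node that already has two children cannot lose one, since each such child is a Patricia node that either stores a key (never deleted) or, by the same claim applied to strictly shorter labels, is itself never deleted. I would make this precise by a simultaneous induction over label length, using that $\phi$ only decreases.

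For the wrong-label deletions, \cref{lemma:keys_saved_at_right_node} gives that in finite time every $k\in\KeySet$ is stored at the Patricia node labelled $k$, so no genuine key triggers this branch afterwards. A Patricia node may still hold some corrupted string $s\notin\KeySet$ in its $key$ field; when detected, the node is replaced by a fresh Patricia node labelled $s$, which then satisfies $key(\cdot)=b(\cdot)=s$ and is never deleted for this reason again, and the protocol never writes a mismatching $key$ field anew. Hence the number of Patricia nodes $v$ with $key(v)\neq nil$ and $key(v)\neq b(v)$ is itself a potential function reaching $0$, after which this branch never fires. Finally, a message presenting a label $b(v)$ with $v\notin\TrieSet$ is by definition counted in the $\MsgSet$-part of $\mathcal{D}$ in \cref{lemma:last_nodes_with_empty_child_deleted}, so it vanishes in finite time once $\mathcal{D}$ is empty and stays so -- consistently with \cref{lemma:deleted_not_rebuilt_nodes_references_vanish}, which handles those spawned before a deletion. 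Combining the three finiteness statements with the message statement yields the lemma.

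I expect the main obstacle to be exactly the closure argument of the second paragraph: certifying that, once the potential of \cref{lemma:last_nodes_with_empty_child_deleted} has bottomed out, no Patricia node ever becomes locally unnecessary again. Everything else is bookkeeping over the two delete-lines, but this step genuinely needs the structural observation that all later-created Patricia nodes arrive equipped with both children and that nobody's children are subsequently removed -- which is what makes the induction over label length go through rather than loop.
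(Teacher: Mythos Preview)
Your proposal is essentially the paper's proof: both rest on \cref{lemma:last_nodes_with_empty_child_deleted} to reach a state in which every Patricia node is either a key node or an inner node with two existing Patricia children, observe that neither deletion guard (\cref{algorithm:Trie-Node_local_check} \cref{algorithm:line:wrong_saved_key} and \cref{algorithm:Trie-Node_validity_check} \cref{algorithm:line:patricia_node_no_key_no_two_children_no_root_deleted}) can then fire, invoke closure of that lemma for newly created nodes, and use \cref{lemma:deleted_not_rebuilt_nodes_references_vanish} for the message part. The paper is terser: it simply cites ``closure of the state of \cref{lemma:last_nodes_with_empty_child_deleted}'' rather than re-deriving it, whereas you spell out an explicit structural induction.

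One slip to fix: in your closure paragraph you write that a child of $v$ is handled ``by the same claim applied to strictly \emph{shorter} labels''. Children in a trie have strictly \emph{longer} labels than $v$, so the induction must run the other way---start at the leaves (which store keys and are never deleted) and proceed toward the root, i.e.\ assume the claim for all labels of length $>\vert b(v)\vert$ and deduce it for $v$. With the direction corrected, your argument goes through and coincides with what the paper's closure appeal encapsulates. A second, minor point: your separate treatment of ``corrupted $key$ fields with $s\notin\KeySet$'' is unnecessary in the paper's framing, since $\KeySet$ is by definition the set of keys stored by the \HPT, so any non-$nil$ $key(v)$ is in $\KeySet$ and \cref{lemma:keys_saved_at_right_node} already covers it.
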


\begin{proof}
	Due to \cref{lemma:last_nodes_with_empty_child_deleted}, in finite time every Patricia node falls in one of the following categories:
	\begin{enumerate}
		\item Leaf nodes storing a key.
		\item Inner Patricia nodes storing a key.
		\item Inner Patricia nodes with two existing Patricia nodes as child nodes.
	\end{enumerate}
	For each of these categories it holds, that the conditions leading to deletion are not fulfilled (deletions happen in \cref{algorithm:Trie-Node_local_check} \cref{algorithm:line:wrong_saved_key} and \cref{algorithm:Trie-Node_validity_check} \cref{algorithm:line:msd_node_not_patricia_between_bidirected_correct_form_deleted}, \cref{algorithm:line:msd_node_no_child_parent_deleted}, \cref{algorithm:line:patricia_node_no_key_no_two_children_no_root_deleted}).
	Furthermore, no Patricia nodes which do not fall into one of the categories will be created by the protocol, because we proved closure of the state of \cref{lemma:last_nodes_with_empty_child_deleted}.
	\cref{lemma:deleted_not_rebuilt_nodes_references_vanish} assures that all messages concerning non-existing nodes vanish in finite time.
\end{proof}

\subsubsection*{Phase \rom{2} -- Reconstruction}

\lemmaMainReconstructionRootCorrectPointing*

\begin{proof}
	We prove the lemma by proving \cref{lemma:root_node_exists}, \cref{lemma:existing_parent_nodes_established}, \cref{corollary:path_from_every_node_to_root}, \cref{lemma:No_Patricia_Point_To_Msd}, \cref{lemma:non_leaving_nodes_reached_from_root} and \cref{lemma:missing_patricia_nodes_created}.
\end{proof}

\begin{restatable}{lem}{lemmaRootNodeExists}
	\label{lemma:root_node_exists}
	In finite time, a root node exists.
\end{restatable}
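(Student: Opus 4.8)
The plan is to argue that, once Phase~\rom{1} has finished its deletions, a Patricia node of \emph{minimal} label length is forced to trigger the creation of the root.

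First I would collect the facts from Phase~\rom{1}. By \cref{lemma:main:misc_repair}, in finite time every key is stored in a Patricia node, so (there being at least one key) from some point on $\PatSet\neq\emptyset$; by \cref{lemma:main:Deletion_Stops} there is a time $t_0$ after which no Patricia node is ever deleted, and by \cref{lemma:no_more_deletions_happen} after $t_0$ no presentation message refers to a non-existent Patricia node. Since after $t_0$ the set of Patricia nodes can only grow, the quantity $\ell^\ast := \min\{\,|b(v)| : v\in\PatSet\,\}$ is non-increasing thereafter; being a non-negative integer it stabilizes at some value $\ell^\ast_\infty$. If $\ell^\ast_\infty = 0$ then a Patricia node with label $\varepsilon$ is present --- that is, by definition, the root --- and we are done. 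So assume for contradiction that $\ell^\ast_\infty \geq 1$.

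Fix a time past $t_0$ after which $\ell^\ast = \ell^\ast_\infty$, and let $v\in\PatSet$ with $|b(v)| = \ell^\ast_\infty \geq 1$; then $b(v)\neq\varepsilon$ and, by minimality of $\ell^\ast_\infty$, no Patricia node has a label that is a proper prefix of $b(v)$. Since $v$ is never deleted and \textsc{Timeout} at its \DHT\ node runs infinitely often, \textsc{CheckParentEdgeInfo}$(v)$ (\cref{algorithm:Trie-Node_parent_edge_check}) is executed infinitely often. I claim $p_-(v)$ becomes and stays $nil$: if $p_-(v)\neq nil$, then $\textsc{Parent}(v)$ is a string strictly shorter than $b(v)$, so $\textsc{DHT-Search}$ of it returns either $nil$ or an \MsdSet\ node (no Patricia node has so short a label), and the parent edge is cleared (\cref{algorithm:line:non_existing_parent_node_edge_deleted}); and the only mechanism that could re-populate $p_-(v)$ is a presentation of a Patricia node whose label is a prefix of $b(v)$, which does not exist and of which, by \cref{lemma:no_more_deletions_happen}, no spurious message remains. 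Once $p_-(v)=nil$, the next execution of \textsc{CheckParentEdgeInfo}$(v)$ calls $\textsc{BinaryPrefixSearch}(b(v))$; since there is no Patricia node with label $\sqsubset b(v)$, this returns $nil$, and the protocol creates a node with label $\varepsilon$ (\cref{algorithm:line:root_creation}). This makes $\ell^\ast = 0$, contradicting $\ell^\ast_\infty\geq 1$. Hence $\ell^\ast_\infty = 0$ and the root exists in finite time.

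The main obstacle is pinning down the contract of $\textsc{BinaryPrefixSearch}$ inherited from \cite{HashedPatriciaTrie}: one must be sure that on input $b(v)$ it returns $nil$ precisely when no Patricia node with a strictly shorter, prefix label is present, even while the rest of the trie is still only partially repaired. A minor technicality is the degenerate self-referential parent edge $p_-(v)=\varepsilon$ at a node with $b(v)\neq\varepsilon$: since no protocol action ever produces such an edge, it is regarded as malformed and may be assumed absent once the initial corruption has been cleaned up (\cref{algorithm:Trie-Node_local_check}), after which the argument above applies unchanged.
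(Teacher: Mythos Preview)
Your argument is correct and follows the same line as the paper's own proof: pick a Patricia node of minimal label length, argue its parent edge must become $nil$, and then \textsc{BinaryPrefixSearch} triggers root creation; the paper is simply terser, not invoking \cref{lemma:main:Deletion_Stops} or \cref{lemma:no_more_deletions_happen} explicitly but instead just noting that the root, once created, is never deleted by \cref{algorithm:Trie-Node_validity_check}. Your additional care about $p_-(v)$ not being repopulated and about $\ell^\ast$ stabilizing is sound extra rigor that the paper leaves implicit, and your flagged technicalities (the precise contract of \textsc{BinaryPrefixSearch} and the degenerate $p_-(v)=\varepsilon$ case) are real but are equally glossed over in the original.
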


\begin{proof}
	Assume that there is no root node.
	Our first observation is, that there must be a node $v\in\PatSet$ with shortest label.
	Hence, either $p_{-}(v) = nil$ or $v$ has a parent node $w$ which is non-existent or an Msd node.
	In the latter case, the edge $p_{-}(v)$ is checked and set to $nil$ in finite time (see \cref{algorithm:Trie-Node_parent_edge_check} \cref{algorithm:line:non_existing_parent_node_edge_deleted}).
	If $p_{-}(v)=nil$, $v$ checks in finite time for a parent using \textsc{BinaryPrefixSearch}.
	The resulting node is $nil$, such that $v$ restores the root node (see \cref{algorithm:Trie-Node_parent_edge_check} \cref{algorithm:line:root_creation}).
	Furthermore, the root node is never deleted (see \cref{algorithm:Trie-Node_validity_check} \cref{algorithm:line:patricia_node_no_key_no_two_children_no_root_deleted}).
\end{proof}

\stepcounter{PotentialIndex}
\begin{restatable}{lem}{lemmaExistingParentNodesEstablished}
	\label{lemma:existing_parent_nodes_established}
	Let $\phi_{\thePotentialIndex} = \max_{k\in\KeySet} \vert k \vert - \min_{v\in\mathcal{P}} \vert b(v) \vert$, where:
	\begin{align*}
		\mathcal{P} = \{ v \in \PatSet \,\vert\, & p_{-}(v) = nil \text{ or } \nexists\, u \in \PatSet, \, b(u) \concat p_{-}(v) = b(v) \}.
	\end{align*}
	$\mathcal{P}$ is the set of all Patricia nodes which do not have an existing Patricia node as parent node.
	$\phi_{\thePotentialIndex}$ is a potential function, i.e., in finite time every Patricia node has a parent edge to an existing Patricia node.
\end{restatable}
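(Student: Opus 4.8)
The plan is to show that $\phi_{\thePotentialIndex}$ is a non-negative, integer-valued potential function that strictly decreases until $\mathcal{P}$ becomes empty, by the nested-set argument already used for \cref{lemma:empty_subtree_nodes_vanish} and \cref{lemma:last_nodes_with_empty_child_deleted}. First I would collect the facts available at this stage: by \cref{lemma:root_node_exists} the root exists, by \cref{lemma:any_node_saved_correct_in_dht} every node is searchable, and by \cref{lemma:no_more_deletions_happen} from some state on no Patricia node is ever deleted and no message referring to a non-existing Patricia node remains. Moreover every Patricia node satisfies $\vert b(v)\vert\le\max_{k\in\KeySet}\vert k\vert$: this holds after Phase~\rom{1} and is preserved during reconstruction, since a new Patricia node is only ever created with a label that is a prefix of an existing Patricia node's label, equal to a key, or equal to $\varepsilon$. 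Hence $\phi_{\thePotentialIndex}\ge 0$. I treat the root as never lying in $\mathcal{P}$ -- it is the designated anchor of the trie and requires no parent edge -- so the task reduces to emptying the non-root part of $\mathcal{P}$.

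To see that $\phi_{\thePotentialIndex}$ decreases, I would fix its current value, set $\mathcal{T}=\{v\in\mathcal{P}:\vert b(v)\vert=\max_{k}\vert k\vert-\phi_{\thePotentialIndex}\}$ (the shortest-label elements of $\mathcal{P}$), and argue that each $v\in\mathcal{T}$ leaves $\mathcal{P}$ in finite time. If $p_-(v)$ is not a suffix of $b(v)$, or $\textsc{DHT-Search}(\textsc{Parent}(v))$ yields $nil$ or an Msd node, then in finite time \cref{algorithm:Trie-Node_local_check} or \cref{algorithm:Trie-Node_parent_edge_check} sets $p_-(v)\gets nil$, so we may assume $p_-(v)=nil$. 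By weakly fair execution $v$ then runs the parent-edge check and computes $w\gets\textsc{BinaryPrefixSearch}(b(v))$; since the root exists and is searchable, $w$ is a Patricia node with $b(w)\sqsubset b(v)$ (or, if $nil$ is returned, the root is recreated and the next call succeeds), and $v$ sets $p_-(v)$ to the $x$ with $b(w)\concat x=b(v)$. Now the existing Patricia node $w$ witnesses $b(w)\concat p_-(v)=b(v)$, so $v\notin\mathcal{P}$; and since no Patricia node is deleted afterwards, and every later write to $p_-(v)$ (only in \cref{algorithm:Linearization}, which is guarded against Msd nodes and, as no stale messages remain, against non-existing ones) again makes it point to an existing Patricia node that is a prefix of $b(v)$, $v$ stays out of $\mathcal{P}$.

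It remains to show that no node ever enters $\mathcal{P}$, which gives both that $\phi_{\thePotentialIndex}$ never increases and that $\vert\mathcal{T}\vert$ never increases while $\phi_{\thePotentialIndex}$ is fixed; combined with the previous step, $\vert\mathcal{T}\vert$ strictly decreases, hence $\phi_{\thePotentialIndex}$ strictly decreases whenever $\mathcal{T}$ empties, and being a non-negative integer it stabilizes, which is possible only when $\mathcal{P}=\emptyset$, i.e.\ every Patricia node then has a parent edge to an existing Patricia node. A node can join $\mathcal{P}$ only by losing a valid parent edge, by being created without one, or by having its parent deleted or overwritten. The last cannot happen: Patricia nodes are no longer deleted (\cref{lemma:no_more_deletions_happen}) and the Msd insertion of \cref{algorithm:Trie-Node_parent_edge_check} overwrites only nodes already in $\MsdSet$ (an Msd label lies strictly between the labels of the two closest Patricia nodes). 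For the first case: the only steps that clear $p_-(v)$ are those cited above, and in each of them $v$ already belongs to $\mathcal{P}$ before the clearing, since no $u\in\PatSet$ satisfies $b(u)\concat p_-(v)=b(v)$ in the first place, so membership is unchanged. For the second: a new Patricia node is created only as the node $n$ with $b(n)=\ell cp(b(v),b(\mathit{par})\concat e_{\mathit{par}})$ (apart from the root), and since $b(\mathit{par})\sqsubseteq b(n)\sqsubset b(v)$ and the creating code immediately presents $\mathit{par}$ to $n$ through $\textsc{MultiLinearize}$, $n$ is born with a parent edge to the existing Patricia node $\mathit{par}$; furthermore none of $v$, $\mathit{par}$, and the node at label $b(\mathit{par})\concat e_{\mathit{par}}$ has an edge cleared in that step, so none of them is forced into $\mathcal{P}$.

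The main obstacle I anticipate is exactly this closure argument: the edge-disagreement branch of \cref{algorithm:Trie-Node_parent_edge_check} and the rewrites inside \cref{algorithm:Linearization} both insert nodes and overwrite parent edges, so one must trace through every assignment to $p_-(\cdot)$ and every freshly inserted label to confirm that $\mathcal{P}$ can only shrink. The convergence half -- the shortest-label element of $\mathcal{P}$ acquiring a parent via $\textsc{BinaryPrefixSearch}$ -- is comparatively routine once \cref{lemma:root_node_exists} and \cref{lemma:no_more_deletions_happen} are in hand.
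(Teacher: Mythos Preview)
Your proposal is correct and follows essentially the same route as the paper: the same nested-set argument on $\mathcal{T}$, the same convergence step via \textsc{BinaryPrefixSearch} once the root exists, and the same closure reasoning that parent-edge rewrites only target existing Patricia nodes and that freshly created Patricia nodes inherit a valid parent. Your closure analysis is in fact more explicit than the paper's (you separately trace deletions, overwrites by Msd insertion, and edge clearings), but the underlying argument is identical.
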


\begin{proof}
	First we observe that $\max_{k\in \KeySet} \vert k \vert \geq \vert b(v) \vert$ for all $v\in\PatSet$ such that $\phi_{\thePotentialIndex} \geq 0$.
	We define $\mathcal{T}=\{v\in\mathcal{P}\,\vert\, \max_{k\in \KeySet} \vert k \vert - \vert b(v) \vert = \phi_{\thePotentialIndex} \}$.
	We will show that $\vert \mathcal{T}\vert$ decreases and never increases as long as $\phi_{\thePotentialIndex}$ stays the same.
	When $\vert \mathcal{T}\vert$ reaches zero, $\phi_{\thePotentialIndex}$ is reduced.
	$\phi_{\thePotentialIndex}$ is never increased and it follows that $\phi_{\thePotentialIndex}$ is a potential function.

	Consider a Patricia node $v \in \mathcal{T}$.
	If the node $u$ with $b(u) \concat p_{-}(v) = b(v)$ does not exist or is an Msd node, the value of $p_{-}(v)$ is deleted in finite time by the protocol (see \cref{algorithm:Trie-Node_parent_edge_check} \cref{algorithm:line:non_existing_parent_node_edge_deleted}).
	If $p_{-}(v) = nil$, $v$ will be presented a Patricia node $w$ with $b(w)\sqsubset b(v)$ either due to a presentation message or due to a check of $v$ (see \cref{algorithm:Linearization} and \cref{algorithm:Trie-Node_parent_edge_check} \cref{algorithm:line:new_parent_due_to_binary_prefix_search}), because at least the root node exists as stated in \cref{lemma:root_node_exists}.
	In any case, $\vert \mathcal{T}\vert$ has reduced.

	Furthermore, $\vert \mathcal{T}\vert$ and $\phi_\thePotentialIndex$ do not increase.
	Consider any Patricia node $v$ with $\max_{k\in \KeySet} \vert k \vert - \vert b(v) \vert \geq \phi_{\thePotentialIndex}$.
	If the edge $p_{-}(v)$ changes, $v$ received a presentation message from an existing Patricia node $w$ with $b(w) \sqsubset b(v)$.
	This holds as non-existing nodes and Msd nodes are not presented and, due to \cref{lemma:no_more_deletions_happen}, no Patricia node is deleted any more.

	If a new Patricia node $v$ is created it node is initialized with edges provided by a Patricia node $w$ with $b(v) \sqsubset b(w)$ (see \cref{algorithm:Trie-Node_parent_edge_check} \cref{algorithm:line:parent_should_be_patricia_node_with_two_children}).
	$v \notin \mathcal{P}$ holds, because $w$ only creates a node if it has an existing Patricia node as parent such that $w\notin\mathcal{P}$ holds.

	Hence, $\phi_{\thePotentialIndex}$ does not increase and is reduced over time.
\end{proof}

\begin{restatable}{corollary}{corollaryPathFromEveryNodeToRoot}
	\label{corollary:path_from_every_node_to_root}
	In finite time, there exists a path from every node to the root.
\end{restatable}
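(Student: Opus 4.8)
The plan is to obtain \cref{corollary:path_from_every_node_to_root} as an immediate consequence of \cref{lemma:root_node_exists} and \cref{lemma:existing_parent_nodes_established} by a descent argument on label lengths. By \cref{lemma:existing_parent_nodes_established}, in finite time $\mathcal{P}$ contains only the root, i.e., every Patricia node $v$ with $b(v)\neq\varepsilon$ stores a parent edge $p_-(v)$ for which there is an \emph{existing} node $u\in\PatSet$ with $b(u)\concat p_-(v)=b(v)$; in particular $b(u)\sqsubset b(v)$ and hence $\vert b(u)\vert<\vert b(v)\vert$. By \cref{lemma:root_node_exists} the root exists, and since the state of \cref{lemma:existing_parent_nodes_established} is closed, these parent edges are not destroyed afterwards.

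Given this, I would fix an arbitrary Patricia node $v$ and simply follow parent edges: set $v_0=v$ and let $v_{i+1}$ be the node that $p_-(v_i)$ points to. As long as $v_i\neq root$, the node $v_{i+1}$ is well defined and lies in $\PatSet$ by the above, and $\vert b(v_{i+1})\vert<\vert b(v_i)\vert$. Since label lengths are non-negative integers, the chain is finite and can terminate only at a Patricia node without a valid parent edge; by \cref{lemma:existing_parent_nodes_established} the unique such node is the root. Hence $v_0,v_1,\dots,v_\ell=root$ is a directed path of parent edges from $v$ to the root, and by closure it persists. Here ``node'' is understood as Patricia node, matching \cref{lemma:main:Reconstruction_Root_Correct_Pointing}; reachability of the root from Msd nodes --- equivalently, that every Msd node is eventually either removed or equipped with a valid parent edge to a Patricia node, after which the same descent applies --- is only established later in \cref{lemma:main:Linearization_Works_Msd_Nodes_Removed_And_Inserted}.

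Since the argument is just a strictly decreasing walk, there is no real technical obstacle; the only points that need care are (i) that the walk cannot cycle or get stuck anywhere but at the root, which is precisely the content of \cref{lemma:existing_parent_nodes_established} (an edge to a non-existent or Msd node already counts towards $\mathcal{P}$, so surviving parent edges genuinely point to strictly-shorter existing Patricia nodes); (ii) that ``in finite time there exists a path'' must be read together with the closure part of \cref{lemma:existing_parent_nodes_established}, so that the path, once present, remains; and (iii) that transient messages or edges which could momentarily recreate a bad parent edge have already been eliminated by \cref{lemma:deleted_not_rebuilt_nodes_references_vanish} and \cref{lemma:no_more_deletions_happen}.
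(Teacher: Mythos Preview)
Your proposal is correct and matches the paper's approach: the paper simply states that the corollary follows from \cref{lemma:existing_parent_nodes_established} and \cref{lemma:no_more_deletions_happen}, and your write-up is exactly the unpacked version of that one-line justification (the descent on label lengths via valid parent edges, with closure guaranteed because no further Patricia deletions occur). Your remark that ``node'' here means Patricia node, with Msd nodes handled only later, is also in line with how the paper uses the corollary.
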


The corollary follows from \cref{lemma:existing_parent_nodes_established} and \cref{lemma:no_more_deletions_happen}.

\begin{restatable}{lem}{lemmaNoPatriciaPointToMsd}
	\label{lemma:No_Patricia_Point_To_Msd}
	In finite time, no Patricia node has an edge pointing to an Msd node.
\end{restatable}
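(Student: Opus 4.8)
The plan is to argue by a potential function counting Patricia nodes that still have at least one edge (parent or child) pointing to an Msd node, and show that this count goes to zero and stays there. The key structural facts we may invoke are: by \cref{lemma:no_more_deletions_happen}, in finite time no Patricia node is ever deleted again and all messages for non-existent Patricia nodes have vanished; by \cref{lemma:existing_parent_nodes_established}, eventually every Patricia node has a parent edge to an \emph{existing Patricia node}; and by \cref{lemma:root_node_exists} the root exists. After these states are reached, the only way a Patricia node $v$ can point to an Msd node is through a child edge $p_x(v)$, since the parent edge is guaranteed (by the potential of \cref{lemma:existing_parent_nodes_established}) to point to a Patricia node. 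So the remaining work is entirely about child edges.

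First I would observe that an edge $p_x(v)$ of a Patricia node $v$ pointing to an Msd node $m$ is treated exactly like an edge to a non-existing node: the check \textsc{CheckChild} in \cref{algorithm:Trie-Node_child_edges_check} \cref{algorithm:line:non_existing_child_node_edge_deleted} fires on $c \in \MsdSet$ and sets $p_x(v) \gets nil$. Hence in finite time every such offending child edge is cleared. It then remains to show closure: such an edge is never recreated. The only mechanisms that can write a child edge $p_x(v)$ are (i) the Linearization procedure in \cref{algorithm:Linearization}, and (ii) the initialization of a freshly inserted Patricia node in \cref{algorithm:Trie-Node_parent_edge_check}. For (i), the guard in \cref{algorithm:Linearization} \cref{algorithm:line:no_edge_when_presenting_another_msd} explicitly refuses to set any edge when the presented node $u$ is an Msd node, and Msd nodes never send or delegate presentation messages; so Linearization can only install a child edge toward another Patricia node. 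For (ii), by \cref{lemma:no_more_deletions_happen} (closure of its state) any newly created Patricia node arises from two Patricia nodes $u,w$ with $b(v) = \ell cp(b(u),b(w))$, and it is initialized with $u$ (or $w$) and $par$ as children — all Patricia nodes, not Msd nodes. Moreover an Msd node can never overwrite a Patricia node in the \DHT\ (the insertion logic in \cref{algorithm:Trie-Node_parent_edge_check} only overwrites with an Msd node when the existing item is itself an Msd node), so a child pointer that once named a Patricia node cannot later resolve to an Msd node.

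The combination of finite-time clearing of all Msd-pointing child edges and the closure argument gives the lemma. The main obstacle I anticipate is handling the transient regime cleanly: while \cref{lemma:existing_parent_nodes_established} is still converging, a parent edge $p_{-}(v)$ might momentarily point to an Msd node, but line \cref{algorithm:line:non_existing_parent_node_edge_deleted} of \cref{algorithm:Trie-Node_parent_edge_check} treats $par \in \MsdSet$ exactly like $par = nil$ and clears it, so this is already subsumed by the earlier lemma and needs only a remark, not a separate potential. The genuinely delicate point is ruling out a ``ping-pong'' in which $v$ clears an Msd-pointing child edge but a stale Linearization message reinstates it; this is closed off precisely because Msd nodes neither generate nor relay presentation messages, so no message presenting an Msd label survives (and those presenting non-existent labels vanish by \cref{lemma:deleted_not_rebuilt_nodes_references_vanish}). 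Once that is spelled out, the argument closes.
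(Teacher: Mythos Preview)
Your proposal is correct and follows essentially the same route as the paper: parent edges are disposed of via \cref{lemma:existing_parent_nodes_established}, child edges via \textsc{CheckChild} in \cref{algorithm:Trie-Node_child_edges_check}, and closure via the guard in \cref{algorithm:Linearization} together with the initialization of freshly created Patricia nodes. The paper differs only cosmetically in that it splits the child-edge case by whether the node stores a key, noting that a non-key Patricia node with an Msd child cannot survive the state of \cref{lemma:no_more_deletions_happen}, whereas you invoke \textsc{CheckChild} uniformly.

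One small slip to fix in your closure step (ii): when a new Patricia node $v$ is created between $par$ and the triggering node $w$, its \emph{parent} is $par$ and its two \emph{children} are $w$ and the former child $par\concat e_{par}$ of $par$---not ``$par$ as a child''. The argument that this second child is not an Msd node then rests on the invariant already holding at $par$, which is exactly the inductive closure the paper uses.
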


\begin{proof}
	Due to \cref{lemma:no_more_deletions_happen}, no Patricia node $v$ not storing a key which has a child edge pointing to an Msd node exists.
	Else, $v$ would delete itself which is a contradiction.
	\cref{lemma:existing_parent_nodes_established} assures that no Patricia node has a parent edge pointing to an Msd node.
	Furthermore, every Patricia node storing a key which points to an Msd node will delete the respective edge in finite time (see \cref{algorithm:Trie-Node_parent_edge_check} \cref{algorithm:line:non_existing_parent_node_edge_deleted} and \cref{algorithm:Trie-Node_child_edges_check} \cref{algorithm:line:non_existing_child_node_edge_deleted}).
	Patricia nodes do not create edges pointing to Msd nodes (see \cref{algorithm:Linearization}).
	When a Patricia node $v$ is created, it is inserted between $u\in\PatSet$ and $w\in\PatSet$ with initial edges to $u$ and $w$.
	Additionally, $v$ has another child node $p$ which was formerly a child of $u$ and is thus no Msd node.
	Therefore, newly inserted Patricia nodes do not point to Msd nodes as well.
\end{proof}

\begin{restatable}{lem}{lemmaNonLeavingNodesReachedFromRoot}
	\label{lemma:non_leaving_nodes_reached_from_root}
	In finite time, every Patricia node can be reached over a path starting at the root.
\end{restatable}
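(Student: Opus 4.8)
The plan is to dualize the proof of \cref{lemma:existing_parent_nodes_established}: there the obstruction was Patricia nodes without an existing parent, measured by shortest label; here it is Patricia nodes not reachable from the root \emph{along child edges}. By \cref{lemma:no_more_deletions_happen} I may assume no Patricia node is ever deleted again and no message referring to a non-existing Patricia node exists; by \cref{lemma:root_node_exists} the root exists; by \cref{lemma:existing_parent_nodes_established} every Patricia node has a parent edge to an existing Patricia node; by \cref{lemma:No_Patricia_Point_To_Msd} no Patricia node points to an Msd node; and by \cref{corollary:path_from_every_node_to_root} there is a child-to-parent path from every Patricia node up to the root. I would also first record that at this stage every Patricia node label is a prefix of some key (bottom-up induction on label length, using the three categories of \cref{lemma:no_more_deletions_happen}: a longest-label node must be a leaf storing a key, and every inner node either stores a key or has two Patricia children whose labels are prefixes of keys), so every label has length at most $\max_{k\in\KeySet}\vert k\vert$.

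Let $\mathcal{Q}$ be the set of Patricia nodes not reachable from the root along child edges, and put $\phi=\max_{k\in\KeySet}\vert k\vert-\min_{v\in\mathcal{Q}}\vert b(v)\vert$ (defined as $0$ once $\mathcal{Q}=\emptyset$); this is a non-negative integer, and I would show it is a potential function, using as inner counter $\vert\mathcal{T}\vert$ with $\mathcal{T}=\{v\in\mathcal{Q}\,\vert\,\vert b(v)\vert=\min_{w\in\mathcal{Q}}\vert b(w)\vert\}$, exactly as in \cref{lemma:empty_subtree_nodes_vanish}. For convergence, take $v\in\mathcal{T}$ and let $u$ be the existing Patricia node with $b(u)\concat p_{-}(v)=b(v)$; since $b(u)\sqsubset b(v)$, its label is shorter than the current minimum in $\mathcal{Q}$, hence $u\notin\mathcal{Q}$, i.e.\ $u$ is reachable from the root. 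Through \textsc{LinearizeTimeout} node $v$ keeps presenting $b(v)$ to $\textsc{Parent}(v)$, and by the behaviour of \textsc{Linearize} (the Linearization of \cite{Onus:2007:LLS:2791188.2791198} restricted to a branch) the receiver either installs a child edge pointing at $v$, or forwards the presentation to the Patricia node closest to it towards $v$ — which is again reachable, being a child of a reachable node and, by \cref{lemma:No_Patricia_Point_To_Msd}, not an Msd node — and so on; since labels are bounded and no node is deleted any more, after finitely many delegations some reachable Patricia node installs a child edge to $v$, so $v$ leaves $\mathcal{Q}$ and $\vert\mathcal{T}\vert$ drops. When $\mathcal{T}$ empties, the minimum label in $\mathcal{Q}$ has strictly grown and $\phi$ has strictly decreased.

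Finally I would check that $\phi$ never increases and $\vert\mathcal{T}\vert$ never grows while $\phi$ is fixed, and that closure holds. A reachable node can drop out of the reachable set only if some node $w$ on its current root-path redirects the relevant child edge; but \textsc{Linearize} redirects a child edge of $w$ only to a Patricia node $c$ with $b(w)\sqsubset b(c)$ and $b(c)\sqsubset b(c')$, where $c'$ is the former child, and it simultaneously forwards $c'$'s presentation to $c$, so $c$ stays reachable and downward reachability is re-established level by level, while the label of the redirecting node $w$ is a proper prefix of the labels of the nodes it could disconnect; hence no node of label at most the current minimum of $\mathcal{Q}$ is lost. A Patricia node created in \cref{algorithm:Trie-Node_parent_edge_check} is initialised with an edge to its intended parent and is therefore reachable as soon as that parent is, so it never enters $\mathcal{Q}$ at a short label either. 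As $\phi\le\max_{k\in\KeySet}\vert k\vert$ is bounded, $\mathcal{Q}$ is empty in finite time, and the same observations yield closure. I expect the main obstacle to be precisely this last step: rigorously showing that the delegation of a presentation along a chain of Patricia nodes keeps every intermediate node reachable and never transiently disconnects a short-label node — i.e.\ importing the invariants of the Linearization protocol cleanly into the reachability potential.
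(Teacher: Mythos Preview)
Your plan takes a genuinely different route from the paper. The paper's proof is short and delegates the hard work: it observes that every Patricia node already has a path to the root (\cref{corollary:path_from_every_node_to_root}), that Msd nodes can be ignored (\cref{lemma:No_Patricia_Point_To_Msd}), and then invokes the Linearization theorem of \cite{Onus:2007:LLS:2791188.2791198} as a black box on each Branch Set, checking only that insertion of a new Patricia node at a point where two Branch Sets collide preserves connectivity (the new node is initialised with edges to both neighbours). No reachability potential is introduced; the sorted-list conclusion of Linearization immediately yields bidirectional edges and hence root-to-leaf reachability.

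Your approach instead builds an explicit potential on the shortest unreachable label and tries to drive it down via the mechanics of \textsc{Linearize}. This is more self-contained, but two issues arise. First, the obstacle you flag at the end is real and is exactly the non-trivial content of the Linearization correctness proof: showing that a redirect at $w$ from $c'$ to $c$ never disconnects a node of label length at most the current minimum of $\mathcal{Q}$ amounts to re-deriving the sorted-list invariant of \cite{Onus:2007:LLS:2791188.2791198} in this setting, which the paper simply cites. Second, there is a gap in your convergence step that the paper handles and you do not: when $u$'s relevant child $c$ and the presented node $v$ are prefix-incomparable (the ``common parent for $c,\,u$ needed'' branch of \textsc{Linearize}), the protocol does \emph{not} install $v$ nor forward it downward --- it presents $u$ back to $v$, which leaves $v$ unreachable. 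That case is resolved only by the node-insertion mechanism in \cref{algorithm:Trie-Node_parent_edge_check} (line~\ref{algorithm:line:parent_should_be_patricia_node_with_two_children}), which the paper explicitly invokes as part of its argument; you mention insertion only in the closure paragraph, not as part of why $\vert\mathcal{T}\vert$ must drop.
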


\begin{proof}
	We already know that every Patricia node has a path to the root node in finite time.
	The correctness of \cref{lemma:non_leaving_nodes_reached_from_root} follows as we use a technique called Linearization \cite{Onus:2007:LLS:2791188.2791198}.
	In \cite{Onus:2007:LLS:2791188.2791198}, the authors show that their technique creates in finite time a sorted list.
	We apply Linearization for every path consisting of Patricia nodes from the root to a Patricia node storing a key.
	If Patricia nodes are missing at positions where two Branch Sets collide, such nodes are inserted (see \cref{algorithm:Trie-Node_parent_edge_check} \cref{algorithm:line:parent_should_be_patricia_node_with_two_children}).
	As stated in \cref{lemma:No_Patricia_Point_To_Msd}, no Patricia node points to an Msd node, so we can ignore Msd nodes when considering Linearization.
	It is left to show that insertion of Patricia nodes does not harm the Linearization process.
	Consider the case when a Patricia node $v$ is inserted between $u\in\PatSet$ and $w\in\PatSet$.
	Initially, $v$ has valid edges to $u$ and $w$ such that the connectivity between them is not destroyed.
\end{proof}

\stepcounter{PotentialIndex}
\begin{restatable}{lem}{lemmaMissingPatriciaNodesCreated}
	\label{lemma:missing_patricia_nodes_created}
	Let $\phi_{\thePotentialIndex} = \vert \mathcal{M} \vert$ where:
	\begin{align*}
		\mathcal{M} = \{ v \in \PatSet \,\vert\, & \nexists\, v \text{ and } \exists\, u,\,w \in \PatSet \\
		                                         & \text{with } key(u),\, key(w) \neq nil                \\
		                                         & \text{and } b(v) = \ell cp(b(u), b(w)) \}.
	\end{align*}
	$\mathcal{M}$ denotes the set of all Patricia nodes that are needed in the \HPT\ but currently non-existing.
	$\phi_{\thePotentialIndex}$ is a potential function.
\end{restatable}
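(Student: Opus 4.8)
The plan is to show that the claimed potential $|\mathcal{M}|$ is non-increasing, closed at $0$, and reaches $0$ in finite time. Non-increase is immediate: the ``needed'' part of $\mathcal{M}$ depends only on $\KeySet$, which is fixed, and by \cref{lemma:no_more_deletions_happen} no Patricia node is deleted any more, so the set of existing Patricia nodes only grows. Closure at $0$: a node with label $\ell cp(k_1,k_2)$ for keys $k_1,k_2$ has these two keys in distinct subtrees below it, hence is never unnecessary and, with \cref{lemma:no_more_deletions_happen}, never deleted. Moreover, every Patricia node inserted in this phase strictly decreases $|\mathcal{M}|$: it is inserted by \textsc{CheckParentEdgeInfo} (\cref{algorithm:Trie-Node_parent_edge_check}) as a node $n$ with $b(n)=\ell cp(b(y),\,b(par)\concat e_{par})$, where $y$ is the checked node, $par$ its computed parent, and $e_{par}=\textsc{Edge}(par,y)$; by \cref{lemma:main:Deletion_Stops} all child edges are valid, so $b(par)\concat e_{par}=b(c)$ for an existing $c\in\PatSet$, after \cref{lemma:empty_subtree_nodes_vanish} both $y$ and $c$ have a key at or below them, and the insertion guard makes $b(n)$ a proper prefix of both $b(y)$ and $b(c)$ with $b(y),b(c)$ disagreeing right after it. Hence those two keys disagree exactly at position $|b(n)|+1$, so $b(n)$ is the longest common prefix of two keys, i.e.\ $n\in\mathcal{M}$ just before creation; afterwards $n$ exists and is never deleted. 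Since $|\mathcal{M}|$ is a non-increasing non-negative integer, only finitely many insertions occur.

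It remains to show $\mathcal{M}$ becomes empty. Suppose not; then $|\mathcal{M}|>0$ in all but finitely many states, so by the above no node is inserted from some point on, and by \cref{lemma:no_more_deletions_happen} none is deleted, so the Patricia-node set is fixed thereafter. Then, reusing the convergence of Linearization exactly as in the proof of \cref{lemma:non_leaving_nodes_reached_from_root} (Msd nodes do not interfere by \cref{lemma:No_Patricia_Point_To_Msd}), the parent and child edges of the \HPT\ stop changing; call such a stable configuration $s^\ast$. Pick $v\in\mathcal{M}$ with $b(v)=\ell cp(b(u),b(w))$, $u,w\in\KeySet$; since the root exists and $v$ does not, $b(v)\neq\varepsilon$. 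Follow parent edges from $u$ and from $w$ up to the root; by \cref{lemma:existing_parent_nodes_established} these are finite chains of existing Patricia nodes with strictly decreasing label length, and no label on them equals $b(v)$ (that would be $v$). Let $z$ be their deepest common node: $b(z)\sqsubseteq\ell cp(b(u),b(w))=b(v)$ and $b(z)\neq b(v)$, so $b(z)\sqsubset b(v)$, and $z$ has two distinct children $c_u,c_w$ (the next nodes on the two chains) with $b(z)\sqsubset b(c_u)\sqsubseteq b(u)$ and $b(z)\sqsubset b(c_w)\sqsubseteq b(w)$. If $b(c_u)$ and $b(c_w)$ disagreed already at position $|b(z)|+1$, then so would $b(u)$ and $b(w)$, making $\ell cp(b(u),b(w))$ no longer than $b(z)\sqsubset b(v)$, a contradiction; hence $c_u,c_w$ agree on that bit $x$, i.e.\ both are reached via the single child edge $p_x(z)$. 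Since $p_x(z)$ points to at most one node, after renaming we may assume $c_w$ is not that node. Then processing at $c_w$, respectively the presentation of $c_w$ sent upward by \textsc{LinearizeTimeout} (\cref{algorithm:Linearization}), must change an edge in $s^\ast$: either the labels $p_{-}(c_w)$ and $e_{par}=p_x(z)$ are incomparable and \textsc{CheckParentEdgeInfo} inserts the merge node $\ell cp(b(c_w),\,b(par)\concat e_{par})=b(v)$, or one of $c_w$ and the target of $p_x(z)$ lies strictly between $z$ and the other, in which case Linearization re-routes $c_w$'s parent pointer (or $z$'s child edge) one level down. Either way $s^\ast$ is not stable, a contradiction, so $\mathcal{M}=\emptyset$ in finite time.

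The main obstacle is the step asserting that on a fixed Patricia-node set the Linearization of \cref{algorithm:Linearization} really drives all parent and child edges to an inert configuration. This is the same stabilization fact already underlying \cref{lemma:non_leaving_nodes_reached_from_root}, and a fully rigorous treatment must walk through the sub-cases of \textsc{Linearize}, checking that each either strictly shortens some edge (progress toward the per-branch sorted list) or is impossible given the valid-child-edge property (\cref{lemma:main:Deletion_Stops}) and \cref{lemma:No_Patricia_Point_To_Msd}; against this, the remaining case distinction in the last paragraph—identifying which action fires at the collision label $b(v)$ and confirming it alters the configuration—is routine bookkeeping.
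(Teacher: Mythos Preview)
Your proof is correct and substantially more thorough than the paper's, but it takes a different route. The paper dispatches the lemma in four lines: once \cref{corollary:path_from_every_node_to_root} and \cref{lemma:non_leaving_nodes_reached_from_root} are in force, for any $v\in\mathcal{M}$ both key nodes $u,w$ are reachable from the root along child edges that (by the earlier lemmas) go only to existing Patricia nodes; since each node has a single $p_x$-edge per bit, the two downward paths can first diverge only at a node with label $\ell cp(b(u),b(w))=b(v)$, so $v$ already exists---contradiction. Non-increase is then one sentence via \cref{lemma:no_more_deletions_happen}.

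You instead (i) show that every insertion in this phase strictly decreases $|\mathcal{M}|$, hence only finitely many occur; (ii) freeze the Patricia-node set and let Linearization settle into an inert configuration $s^\ast$; and (iii) trace \emph{parent} chains upward from $u$ and $w$ to their deepest common ancestor $z$ and argue that the situation at $z$ cannot be inert. This is more self-contained---you do not lean on \cref{lemma:non_leaving_nodes_reached_from_root} as a black box---at the price of re-arguing Linearization convergence and a longer case analysis. One minor over-claim: in your final paragraph the merge node produced by \textsc{CheckParentEdgeInfo} at $c_w$ need not have label exactly $b(v)$, since the target of $p_x(z)$ is not guaranteed to be $c_u$ (it may be a third Patricia node); this is harmless, however, because the contradiction only requires that \emph{some} insertion or edge change fires in $s^\ast$, and your earlier point (i) already shows any such insertion lands in $\mathcal{M}$.
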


\begin{proof}
	Assume that there are paths from every node to the root and from the root to every node as stated in \cref{corollary:path_from_every_node_to_root} and \cref{lemma:non_leaving_nodes_reached_from_root}.
	Additionally, let $\phi_{\thePotentialIndex} > 0$.
	Consider a node $v \in \mathcal{M}$.
	For the nodes $u,\, w\in \PatSet$ with $key(u), \, key(w) \neq nil$ and $b(v) = \ell cp(b(u), b(w))$ it holds that they can be reached over a path starting from the root and the root can be reached over a path starting at $u$ or $w$.
	This implies either the existence of $v$ or an edge at a Patricia node pointing to a non-existent node or an Msd node.
	In all cases we have a contradiction.

	According to \cref{lemma:no_more_deletions_happen}, no deletions happen any more.
	Hence, $\phi_{\thePotentialIndex}$ is not increased.
\end{proof}

\lemmaMainLinearizationWorksMsdNodesRemovedAndInserted*

\begin{proof}
	The lemma holds by the correctness of \cref{lemma:Every_Branch_Set_Bidirectional_Edges}, \cref{lemma:incorrect_msd_nodes_removed}, \cref{lemma:all_missing_msd_nodes_created} and \cref{corollary:structure_rebuilt}.
\end{proof}

\begin{restatable}{lem}{lemmaEveryBranchSetBidirectionalEdges}
	\label{lemma:Every_Branch_Set_Bidirectional_Edges}
	In finite time for every Branch Set $S$ it holds:
	Between every pair of closest Patricia nodes $u,\,w\in S$ there is a bidirectional edge.
\end{restatable}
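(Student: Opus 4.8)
The plan is to reduce the statement to the known convergence guarantee of the Linearization technique of Onus et al.\ \cite{Onus:2007:LLS:2791188.2791198}, applied separately to each Branch Set, which by \cref{definition:branch_Set} is just a root-to-leaf chain of Patricia nodes ordered by the prefix relation (equivalently, by label length).

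First I would collect the invariants already established. By \cref{lemma:no_more_deletions_happen} no Patricia node is deleted any more and all presentation messages for non-existent Patricia nodes have vanished; by \cref{lemma:existing_parent_nodes_established} and \cref{lemma:missing_patricia_nodes_created} every needed Patricia node exists, no further Patricia node is created, and every Patricia node has a parent edge to an existing Patricia node of shorter label; by \cref{lemma:No_Patricia_Point_To_Msd} no Patricia node points to an Msd node; and by \cref{lemma:any_node_saved_correct_in_dht} every node is searchable. Hence from now on the set of Patricia nodes is fixed, the ``closest Patricia node above $v$'' and the ``closest Patricia node in each existing child direction of $v$'' are well defined and stable, and following parent edges always reaches the root, so every Branch Set is and stays weakly connected.

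Next I would argue that, restricted to Patricia nodes, \textsc{LinearizeTimeout} and \textsc{Linearize} (\cref{algorithm:Linearization}) implement exactly the linearization rule: a Patricia node keeps explicit edges only to its nearest Patricia neighbours --- one in the parent direction and, per existing child, one in that child's subtree --- and any presentation of a farther Patricia node is delegated one hop toward the nearer neighbour, while Msd nodes are neither presented nor do they delegate and are therefore transparent. Since weak connectivity of each Branch Set is an invariant (parent edges always point to existing Patricia nodes, and a freshly inserted node is born with valid edges to both flanking nodes, exactly as in the proof of \cref{lemma:non_leaving_nodes_reached_from_root}), each Branch Set satisfies the hypotheses of the linearization convergence theorem of \cite{Onus:2007:LLS:2791188.2791198}. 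Applying that theorem to the parent slot and to each child slot of every node, in finite time the edges stabilize into a sorted, doubly linked list along every chain, which is precisely the assertion that between every pair of closest Patricia nodes of a Branch Set there is a bidirectional edge. Formally this is packaged as a potential function, e.g.\ the number of ordered pairs of closest Patricia nodes $u,\,w$ (over all Branch Sets) for which the edge from $u$ toward $w$ is missing or points elsewhere, together with the label-length progress argument from linearization, shown to be non-increasing and eventually zero once the node set is fixed.

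The hard part will be verifying that the bookkeeping in \cref{algorithm:Linearization} really matches the abstract linearization model: that delegation never stalls or cycles (it does not, since a presentation only ever travels along an edge toward a label closer to the presented one and the node set is finite), that the simultaneous linearization of the parent slot and the two child slots at a branching node does not interfere, and that transient presentation messages for Msd nodes or for nodes still inserted during this phase cannot permanently install a bad edge --- the first two points follow from the invariants above, and the last from \cref{lemma:deleted_not_rebuilt_nodes_references_vanish} (such messages vanish) together with the fact that after \cref{lemma:missing_patricia_nodes_created} no further insertions occur. Once these points are checked, closure is immediate: in a state where every adjacent pair of closest Patricia nodes already carries its bidirectional edge, no \textsc{Linearize} action changes any edge.
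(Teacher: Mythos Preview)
Your proposal is correct and follows essentially the same approach as the paper: establish that the Patricia node set is fixed (\cref{lemma:no_more_deletions_happen}, \cref{lemma:missing_patricia_nodes_created}), that each Branch Set is weakly connected (\cref{corollary:path_from_every_node_to_root}, \cref{lemma:non_leaving_nodes_reached_from_root}), that Msd nodes are excluded from the procedure (\cref{lemma:No_Patricia_Point_To_Msd}), and then invoke the convergence result for Linearization from \cite{Onus:2007:LLS:2791188.2791198}. Your write-up is considerably more detailed than the paper's --- in particular your explicit verification that delegation cannot cycle, that the parent and child slots do not interfere, and your closure argument --- but these elaborations only flesh out what the paper leaves implicit in its appeal to \cite{Onus:2007:LLS:2791188.2791198}.
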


\begin{proof}
	Consider any Branch Set $S$.
	\cref{lemma:non_leaving_nodes_reached_from_root} and \cref{corollary:path_from_every_node_to_root} assure that a weak connectivity is given for the set of Patricia nodes in $S$.
	Observe that due to \cref{lemma:No_Patricia_Point_To_Msd}, Msd nodes do not influence the stabilization of $S$.
	Furthermore, \cref{lemma:no_more_deletions_happen} and \cref{lemma:missing_patricia_nodes_created} assure that no more Patricia nodes are deleted or created such that $S$ does not change.
	All Patricia nodes in $S$ perform a Linearization procedure derived from the proposition in \cite{Onus:2007:LLS:2791188.2791198}.
	Therefore, in finite time, a sorted list is created for all nodes in $S$.
	This implies the lemma.
\end{proof}

\stepcounter{PotentialIndex}
\begin{restatable}{lem}{lemmaIncorrectMsdNodesRemoved}
	\label{lemma:incorrect_msd_nodes_removed}
	We call an Msd node $m$ \emph{incorrect}, if $m$ does not have a parent edge to $u\in\PatSet$, $m$ does not have a child edge to $w\in\PatSet$, $u$ and $w$ are not connected by a bidirectional edge, or $b(m)$ has incorrect length (see \cref{definition:msd_calculation}).
	Let $\mathcal{I}$ be the set of incorrect Msd nodes in the system.
	$\phi_\thePotentialIndex = |\mathcal{I}|$ is a potential function.
\end{restatable}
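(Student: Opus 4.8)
The plan is to show that, once the system has entered the stable structure guaranteed by Phase~\rom{2}, $\phi_{\thePotentialIndex}=|\mathcal{I}|$ never increases and strictly decreases while it is positive, so that $\mathcal{I}=\emptyset$ in finite time. (Before Phase~\rom{2} completes, $|\mathcal{I}|$ can genuinely grow — e.g.\ when \cref{lemma:main:Empty_Subtrees_Not_Needed_Nodes_Removed} deletes a Patricia node that was the neighbor of a correct Msd node — so, as is done throughout the paper, I would argue from the point onward at which the earlier lemmas hold.) Concretely I would first invoke \cref{lemma:main:Reconstruction_Root_Correct_Pointing} and \cref{lemma:Every_Branch_Set_Bidirectional_Edges} (together with \cref{lemma:no_more_deletions_happen} and \cref{lemma:missing_patricia_nodes_created}): in finite time $\PatSet$ is fixed, no Patricia node ever points to an Msd node, and in every Branch Set the closest Patricia nodes are joined by bidirectional edges that stay so by closure. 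Hence the only data deciding whether an Msd node $m$ is incorrect — whether the nodes at the far ends of $p_{-}(m)$ and of $m$'s child edge exist and lie in $\PatSet$, whether those two Patricia nodes are bidirectionally connected, and whether $b(m)$ equals $\textsc{MsdLabel}$ of their labels — is stable apart from possible changes to $m$'s own fields.

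For the \emph{non-increase} part I would use that the protocol performs no operation on a node $m\in\MsdSet$ other than \textsc{CheckNodeInfo}$(m)$ and \textsc{CheckValidity}$(m)$: \cref{algorithm:Trie-Node_parent_edge_check}, \cref{algorithm:Trie-Node_child_edges_check} and \cref{algorithm:Trie-Node_key2_check} skip Msd nodes, and every \textsc{Linearize} call with an Msd argument is a no-op (\cref{algorithm:Linearization} \cref{algorithm:line:no_edge_when_presenting_another_msd}). So after creation $m$'s fields can only be \emph{cleared} by \textsc{CheckNodeInfo}, never redirected. A correct Msd node has $p_{-}(m)$ a suffix of $b(m)$ and a single well-formed child edge, so \textsc{CheckNodeInfo} leaves it alone; its two neighbors are consecutive Patricia nodes of a stabilized Branch Set, hence persist and stay bidirectionally connected, and $b(m)=\textsc{MsdLabel}$ of their labels stays correct — so a correct Msd node never becomes incorrect. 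New Msd nodes are created only in the \textsc{MsdMissing} branch of \cref{algorithm:Trie-Node_parent_edge_check} (\cref{algorithm:line:missing_msd_node_insertion}), where $v,par\in\PatSet$ are bidirectionally connected (the branch requires $e_{par}=p_{-}(v)$) and the new node receives a parent edge to $par$, its single child edge to $v$, and label $\textsc{MsdLabel}(b(par),b(v))$, i.e.\ it is correct; moreover the \textsc{DHT-Insert} there overwrites the item at that label only when it is absent or an Msd node with different edges, so it never turns a correct Msd node into an incorrect one. Finally, by now the only deletions are of Msd nodes (\cref{lemma:no_more_deletions_happen}); deleting an Msd node removes it from $\mathcal{I}$ if it was there and cannot affect the status of any other Msd node, since Msd nodes point only to Patricia nodes. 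Hence $\phi_{\thePotentialIndex}$ does not increase.

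For the \emph{decrease}, take any $m\in\mathcal{I}$. By the stability of the Phase~\rom{2} structure and the no-redirection property, $m$ stays incorrect until it is checked, and by weakly fair action execution \textsc{CheckValidity}$(m)$ runs in finite time. There, if $p_{-}(m)=nil$ or $m$ does not have exactly one child edge, the first condition checked for $m$ fails and $m$ is deleted on \cref{algorithm:line:msd_node_no_child_parent_deleted}; otherwise the protocol fetches the node $p$ at the far end of $p_{-}(m)$ and the node $c$ at the far end of $m$'s child edge and tests whether $p,c\in\PatSet$, $\textsc{Bidirectional}(p,c)$, and $b(m)=\textsc{MsdLabel}(b(p),b(c))$ — and since $m\in\mathcal{I}$ one of these fails, so $m$ is deleted on \cref{algorithm:line:msd_node_not_patricia_between_bidirected_correct_form_deleted}. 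Either way $|\mathcal{I}|$ drops, and by the previous paragraph it never rises again, so $\phi_{\thePotentialIndex}$ is a potential function.

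The main obstacle I anticipate is the non-increase direction, and within it the bookkeeping around \emph{creation}: pinning down that the Msd node produced on \cref{algorithm:line:missing_msd_node_insertion} is always correct and that this step never overwrites or damages an already correct Msd node, and showing that no change to the surrounding Patricia nodes can break a correct Msd node. The latter rests entirely on the closure statements of \cref{lemma:no_more_deletions_happen} and \cref{lemma:Every_Branch_Set_Bidirectional_Edges} — a bidirectional edge between Patricia nodes of a stabilized Branch Set means they are closest, and this is preserved — so the work is in invoking those statements precisely rather than in any fresh combinatorics.
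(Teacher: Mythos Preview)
Your proposal is correct and follows essentially the same approach as the paper: argue from the point where \cref{lemma:Every_Branch_Set_Bidirectional_Edges} holds, show that newly created Msd nodes (via \cref{algorithm:line:missing_msd_node_insertion}) are correct and that correct Msd nodes cannot become incorrect since no Patricia node is inserted between their neighbors, then note that each incorrect Msd node is eventually checked by \textsc{CheckValidity} and deleted. Your write-up is in fact considerably more careful than the paper's own proof about the overwriting behavior of \textsc{DHT-Insert} and the no-redirection property of Msd fields, but these are elaborations of the same argument rather than a different route.
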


\begin{proof}
	We observe that the set of incorrect Msd nodes does not increase after every Branch Set consists of bidirectional edges.
	This holds as new Msd nodes are only created if there is a bidirectional edge between two Patricia nodes $u,\, w$ (see \cref{algorithm:Trie-Node_parent_edge_check} \cref{algorithm:line:missing_msd_node_insertion}).
	If such an edge exists, then an inserted Msd node $m$ is correct by the correctness of the creation.
	If $m$ became incorrect, this would mean a Patricia node was inserted between $u$ and $v$.
	This is a contradiction to the assumption that the Branch Set of $u$ and $w$ only had bidirectional edges when $m$ was created.

	Consider an Msd node $m\in\mathcal{I}$.
	$m$ is checked in finite time and is determined to be incorrect (see \cref{algorithm:Trie-Node_validity_check} \cref{algorithm:line:msd_node_not_patricia_between_bidirected_correct_form_deleted}, \cref{algorithm:line:msd_node_no_child_parent_deleted}) and deleted.
	Thus, $|\mathcal{I}|$ is reduced.

	As the set of incorrect Msd nodes is finite, it follows that all incorrect Msd nodes are removed in finite time.
\end{proof}

\begin{restatable}{definition}{definitionMissingMsdNodes}
	\label{definition:missing_msd_nodes}
	Consider $u,\, w\in \PatSet$ with $b(u) \sqsubset b(w)$ and $\nexists\, v\in \PatSet$ such that $b(u) \sqsubset b(v) \sqsubset b(w)$.
	We call an Msd node $m$ of correct form (as given by \cref{definition:msd_calculation}) between $u$ and $w$ \emph{missing} if $m$ does not exist.
\end{restatable}

\begin{restatable}{lem}{lemmaAllMissingMsdNodesCreated}
	\label{lemma:all_missing_msd_nodes_created}
	In finite time all missing Msd nodes are created.
\end{restatable}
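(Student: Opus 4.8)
The plan is to argue that once the system has reached the state guaranteed by the earlier lemmas of Phase II -- in particular \cref{lemma:no_more_deletions_happen} (no Patricia node is ever deleted again), \cref{lemma:missing_patricia_nodes_created} (no Patricia node is ever created again and all needed ones exist), \cref{lemma:No_Patricia_Point_To_Msd} (no Patricia node points to an Msd node) and \cref{lemma:Every_Branch_Set_Bidirectional_Edges} (between every pair of closest Patricia nodes in a Branch Set there is a bidirectional edge) -- the set of pairs of closest Patricia nodes is fixed forever, and for each such pair $u,w$ (with $b(u)\sqsubset b(w)$) the Msd label $b(m)$ between them given by \cref{definition:msd_calculation} is a fixed string. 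So ``missing Msd node'' refers to a finite, from now on static collection of node labels, and it suffices to show each one is created exactly once and then never destroyed.

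First I would fix such a pair of closest Patricia nodes $u,w$ with a missing Msd node $m$ between them, and note that by \cref{lemma:Every_Branch_Set_Bidirectional_Edges} there is, in finite time, a bidirectional edge between $u$ and $w$, i.e.\ $p_{-}(w)$ points to $u$ and the appropriate child edge of $u$ points to $w$. Then, by weakly fair action execution, the \textsc{Timeout} at the \DHT\ node storing $w$ eventually calls \textsc{CheckParentEdgeInfo}$(w)$ (\cref{algorithm:Trie-Node_parent_edge_check}); since $u,w\in\PatSet$ and $e_{par}=p_{-}(w)$, control reaches the \textsc{MsdMissing} test on \cref{algorithm:line:missing_msd_node_insertion}, and because $m$ does not exist the branch is entered, a new Msd node with label $b(m)=\textsc{MsdLabel}(b(u),b(w))$ is built and \textsc{DHT-Insert}ed (the guard $m'=nil$ is satisfied since $m$ is missing). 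By the \DHT\ assumptions the inserted node becomes searchable in finite time, so $m$ exists. I would also remark that the \textsc{MultiLinearize} call giving $m$ its edges only presents $u$ and $w$ to $m$, hence $m$'s parent edge goes to $u$ and its unique child edge to $w$, matching \cref{definition:msd_calculation} -- so the created node is correct in the sense of \cref{lemma:incorrect_msd_nodes_removed}.

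Finally I would argue closure: once $m$ exists it is never deleted. Its only deletion routes are the two lines of \textsc{CheckValidity} for Msd nodes (\cref{algorithm:line:msd_node_not_patricia_between_bidirected_correct_form_deleted} and \cref{algorithm:line:msd_node_no_child_parent_deleted}), and both guards fail precisely because $m$ has a parent edge to $u\in\PatSet$, exactly one child edge to $w\in\PatSet$, $u$ and $w$ are joined by a bidirectional edge that (by \cref{lemma:no_more_deletions_happen} and \cref{lemma:missing_patricia_nodes_created}) is never destroyed, and $b(m)=\textsc{MsdLabel}(b(u),b(w))$; moreover no data item can overwrite $m$ with something incorrect since a Patricia node never hashes to an Msd label between closest Patricia nodes, and the re-insertion guard on \cref{algorithm:line:missing_msd_node_insertion} only fires when $m'=nil$ or $m'$ differs from the correct $m$, which now never happens. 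Hence each missing Msd node is created in finite time and remains; since there are finitely many pairs of closest Patricia nodes, all missing Msd nodes are created in finite time. The main obstacle is the bookkeeping that the pair structure and the bidirectional edges are genuinely frozen -- i.e.\ carefully invoking the earlier Phase~I and Phase~II lemmas so that ``missing'' is well defined and stable -- rather than any new combinatorial difficulty.
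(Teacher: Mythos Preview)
Your proposal is correct and follows essentially the same approach as the paper: fix a pair of closest Patricia nodes $u,w$ with a missing Msd node, use \cref{lemma:Every_Branch_Set_Bidirectional_Edges} to get the bidirectional edge, let the check at $w$ in \cref{algorithm:Trie-Node_parent_edge_check} (\cref{algorithm:line:missing_msd_node_insertion}) insert $m$, and then argue via \cref{lemma:no_more_deletions_happen} and the \textsc{CheckValidity} guards that $m$ is correct and never removed. Your write-up is more explicit about the algorithmic control flow and the closure bookkeeping than the paper's proof, but the underlying argument is identical.
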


\begin{proof}
	Consider two Patricia nodes $u,\, w\in \PatSet$ with $b(u) \sqsubset b(w)$ where an Msd node $m$ is missing.
	This means that there is no $v \in \PatSet$ between $u$ and $w$.
	Due to \cref{lemma:Every_Branch_Set_Bidirectional_Edges}, there is a bidirectional edge between $u$ and $w$.
	When $w$ is checked, it inserts $m$ in finite time (see \cref{algorithm:Trie-Node_parent_edge_check} \cref{algorithm:line:missing_msd_node_insertion}).
	According to \cref{lemma:no_more_deletions_happen}, no more deletions happen.
	In addition, $m$ has a parent edge to $u\in\PatSet$ and a child edge to $w\in\PatSet$, the edge between $u$ and $w$ is bidirectional and $b(m)$ has correct length.
	Therefore, $m$ is not deleted (see \cref{algorithm:Trie-Node_validity_check} \cref{algorithm:line:msd_node_not_patricia_between_bidirected_correct_form_deleted}, \cref{algorithm:line:msd_node_no_child_parent_deleted}), the number of missing Msd nodes does not increase and \cref{lemma:all_missing_msd_nodes_created} follows.
\end{proof}

\begin{restatable}{corollary}{corollaryStructureRebuilt}
	\label{corollary:structure_rebuilt}
	In finite time the structure of the \HPT\ is rebuilt.
\end{restatable}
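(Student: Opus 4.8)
The plan is to assemble the statement from the lemmas already established in Phases I and II, since the corollary is essentially a bookkeeping step. By ``the structure of the \HPT\ is rebuilt'' I take the part of \cref{definition:Correctness_of_HPT} that does \emph{not} concern the key$_2$ and $r$ fields (those are handled in Phase III). Concretely I would check, one at a time: (i) exactly the correct set of Patricia nodes is present; (ii) exactly the correct set of Msd nodes is present; (iii) between every pair of closest Patricia nodes of every Branch Set there is a bidirectional edge, and every surviving Msd node $m$ has a parent edge to its upper and a child edge to its lower Patricia neighbor; and (iv) every \HPT\ node is searchable.

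For (i), \cref{lemma:main:misc_repair} (through \cref{lemma:keys_saved_at_right_node}) gives that every key $k$ is stored in a Patricia node with label $k$; \cref{lemma:main:Empty_Subtrees_Not_Needed_Nodes_Removed} removes every unnecessary Patricia node, i.e.\ every node that neither stores a key nor equals $\ell cp(k_1,k_2)$ for two keys; \cref{lemma:main:Deletion_Stops} guarantees that no further Patricia node is deleted; and \cref{lemma:missing_patricia_nodes_created} guarantees that every still-missing node $v$ with $b(v)=\ell cp(k_1,k_2)$ is eventually created. Conjoining the convergence statements and their closure parts yields that in finite time the Patricia nodes present are exactly those demanded by \cref{definition:Correctness_of_HPT} and stay that way. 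For (ii), \cref{lemma:incorrect_msd_nodes_removed} removes every incorrect Msd node and \cref{lemma:all_missing_msd_nodes_created} inserts every missing one, so eventually the Msd nodes present are exactly the correctly-formed ones between consecutive Patricia nodes; this uses that (i) has already stabilized, since ``incorrect'' and ``missing'' are defined relative to the final set of Patricia nodes.

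For (iii), \cref{lemma:Every_Branch_Set_Bidirectional_Edges} supplies the bidirectional edges between closest Patricia nodes of each Branch Set (using that, by \cref{lemma:No_Patricia_Point_To_Msd}, Msd nodes are invisible to the Linearization), and the definition of a correct Msd node together with \cref{lemma:incorrect_msd_nodes_removed} and \cref{lemma:all_missing_msd_nodes_created} gives the required edges of each surviving Msd node. For (iv), \cref{lemma:any_node_saved_correct_in_dht} makes every \HPT\ node searchable in finite time. Closure for all of these is part of the cited lemmas.

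The only thing requiring care — more a matter of discipline than a genuine obstacle — is the ordering of the lemmas: each Phase~II lemma is proved under the hypothesis that the earlier ones have converged (no more Patricia-node deletions or creations, every Branch Set already bidirectional, etc.), so the corollary is obtained by noting that all these ``in finite time'' guarantees hold simultaneously after finitely many steps, each accompanied by closure. I would therefore finish by conjoining the convergence-plus-closure statements of the lemmas listed above and remarking that what is still outstanding before \cref{definition:Correctness_of_HPT} holds in full is precisely the key$_2$/$r$ consistency treated in Phase~III.
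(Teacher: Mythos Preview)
Your proposal is correct and takes essentially the same approach as the paper: the corollary is just a bookkeeping step assembling the Phase~I and Phase~II lemmas, and the paper's own justification is in fact a single sentence to this effect (bidirectional edges exist, no unnecessary Patricia nodes or incorrect Msd nodes remain, and all missing Msd nodes have been created). Your version is considerably more explicit about which lemma covers which clause of \cref{definition:Correctness_of_HPT} and about the ordering/closure discipline, but the underlying argument is identical.
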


\cref{corollary:structure_rebuilt} follows directly as Patricia nodes are connected by bidirectional edges, no incorrect Msd nodes or unnecessary Patricia nodes exist, and all missing Msd nodes have been created.

\subsubsection*{Phase \rom{3} -- Consistency}

\lemmaMainKeyTwoRValueReferencesOnlyEligible*

\begin{proof}
	The lemma is correct by the correctness of \cref{lemma:not_eligible_nodes_drop_key2} and \cref{lemma:strange_references_key2_disappear}.
\end{proof}

\begin{restatable}{lem}{lemmaNotEligibleNodesDropKeyTwo}
	\label{lemma:not_eligible_nodes_drop_key2}
	In finite time, only key$_2$ nodes store a key$_2$ and only leaf nodes store an $r$ value.
\end{restatable}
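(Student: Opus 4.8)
The plan is a direct convergence-and-closure argument resting on the structural lemmas of Phases~I and~II. First I would fix a time $t_0$ after which, by \cref{lemma:no_more_deletions_happen}, \cref{lemma:main:Reconstruction_Root_Correct_Pointing}, \cref{lemma:missing_patricia_nodes_created} and \cref{lemma:main:Linearization_Works_Msd_Nodes_Removed_And_Inserted}, the set of Patricia nodes and the child edges of every Patricia node are stable: no Patricia node is created or deleted, every child edge is valid and points to the closest Patricia node below, and incorrect Msd nodes have been removed. In particular, for every Patricia node $v$ the value $\textsc{Children}(v)$ is fixed and correct, so the set of key$_2$ nodes (inner Patricia nodes with two children, together with the root) and the set of leaf Patricia nodes do not change after $t_0$. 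I would also invoke \cref{lemma:initially_wrong_reference_information_vanishes} so that from $t_0$ on every node with $key_2(v)\neq nil$ satisfies $b(v)\sqsubseteq key_2(v)$ and every node with $r(v)\neq nil$ satisfies $r(v)\sqsubseteq b(v)$, disposing of the wrong-form cases.

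For convergence, consider after $t_0$ any Patricia node $v$ that is not a key$_2$ node but stores $key_2(v)\neq nil$; then $v\neq root$ and $\textsc{Children}(v)<2$, so the next time $v$ is checked (in finite time, by weakly fair execution of \textsc{Timeout}) the guard on \cref{algorithm:Trie-Node_local_check} \cref{algorithm:line:key2_wrong_form} fires and $key_2(v)$ is cleared. Symmetrically, any Patricia node $v$ that has a child edge (hence, after $t_0$, a genuine child) but stores $r(v)\neq nil$ loses that value via \cref{algorithm:Trie-Node_local_check} \cref{algorithm:line:r_value_stored_at_inner_node}. Since the system is finite, after finitely many such checks no ineligible Patricia node stores a key$_2$ or $r$ value. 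For Msd nodes I would note that \cref{algorithm:Trie-Node_local_check} and \cref{algorithm:Trie-Node_key2_check} never assign $key_2$ or $r$ to a node in $\MsdSet$, and that an Msd node carrying leftover such information is in finite time either deleted as an incorrect Msd node (\cref{lemma:incorrect_msd_nodes_removed}) or overwritten by a freshly created Msd node when its closest Patricia neighbours re-insert it.

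Closure then amounts to checking that no ineligible node re-acquires a $key_2$ or $r$ value after $t_0$. Inspecting the pseudocode, $key_2$ is set to a non-nil value only in \cref{algorithm:Trie-Node_key2_check} (\cref{algorithm:line:capture_key2_from_above}, \cref{algorithm:line:repair_key2_if_reference_points}, and the assignment following \cref{algorithm:line:find_good_key2_node}), and in each of these cases the receiving node has exactly two children, hence is a key$_2$ node; likewise $r$ is assigned only to nodes with zero children (\cref{algorithm:line:check_if_r_should_set}) or to the leaf node performing the check near \cref{algorithm:line:find_good_key2_node}. The Linearization procedure in \cref{algorithm:Linearization} modifies only parent and child edges. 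As the children counts are frozen after $t_0$, these are the only assignment sites, so the desired property, once reached, persists. The step I expect to be the main obstacle is the bookkeeping around $t_0$: I must argue the property only after the structural phases complete, since beforehand $\textsc{Children}(v)$ can fluctuate and a node can move in and out of being ineligible, so a potential-function count taken over all of time would not be monotone; the handling of stale $key_2$/$r$ fields on Msd nodes is a minor point already covered by the Msd-node lemmas.
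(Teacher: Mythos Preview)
Your proposal is correct and follows essentially the same two-step convergence/closure argument as the paper: ineligible nodes clear their $key_2$ or $r$ field via \cref{algorithm:Trie-Node_local_check} (\cref{algorithm:line:key2_wrong_form}, \cref{algorithm:line:r_wrong_form}, \cref{algorithm:line:r_value_stored_at_inner_node}), and the assignment sites in \cref{algorithm:Trie-Node_key2_check} only write these fields to nodes of the right type. The paper's own proof is much terser---it simply cites those code lines and leaves the structural stability of Phases~\rom{1} and~\rom{2} implicit (the lemma sits inside Phase~\rom{3})---whereas you make the dependence on a stable $\textsc{Children}(v)$ explicit by fixing $t_0$; this extra care is justified and not a deviation. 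Your separate treatment of Msd nodes is unnecessary in the paper's model, since $key_2$ and $r$ are fields only of Patricia nodes, but it does no harm.
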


\begin{proof}
	Consider a Patricia node $v$ which is no key$_2$ node but stores a key$_2$.
	In finite time, $v$ is checked and sets $key_2(v)=nil$ (see \cref{algorithm:Trie-Node_local_check} \cref{algorithm:line:key2_wrong_form}).
	No $v\in\PatSet$ which is not a key$_2$ node starts to store a key$_{2}$ due to our protocol (see \cref{algorithm:Trie-Node_key2_check} \cref{algorithm:line:repair_key2_if_reference_points}, \cref{algorithm:line:find_good_key2_node}).

	Consider a Patricia node $v$ which is not a leaf node but stores an $r$ value.
	$v$ sets $r(v)=nil$ in finite time (see \cref{algorithm:Trie-Node_local_check} \cref{algorithm:line:r_wrong_form}, \cref{algorithm:line:r_value_stored_at_inner_node}).
	No inner node stores an $r$ value based on the protocol (see \cref{algorithm:Trie-Node_key2_check} \cref{algorithm:line:check_if_r_should_set}, \cref{algorithm:line:leaf_node_check_for_r_values}).
\end{proof}

\begin{restatable}{lem}{lemmaStrangeReferencesKeyTwoDisappear}
	\label{lemma:strange_references_key2_disappear}
	In finite time, every key$_{2}$ value stored at $v\in\PatSet$ refers to a leaf node below $v$ and every $r$ value stored at $w\in\PatSet$ refers to a key$_2$ node above $w$.
\end{restatable}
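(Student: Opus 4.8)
The plan is to show the statement by a two-part potential argument, handling \(key_2\) references and \(r\) references separately, each time using the fact that the structure is already stable by \cref{corollary:structure_rebuilt} (so no Patricia or Msd nodes are created or deleted any more and every Branch Set forms a sorted list of bidirectionally connected Patricia nodes) and that by \cref{lemma:not_eligible_nodes_drop_key2} only key\(_2\) nodes carry a \(key_2\) value and only leaf nodes carry an \(r\) value. Throughout I would also invoke \cref{lemma:initially_wrong_reference_information_vanishes} to assume that in finite time no node \(v\in\PatSet\) stores a \(key_2(v)\) with \(b(v)\not\sqsubseteq key_2(v)\) nor an \(r(v)\) with \(r(v)\not\sqsubseteq b(v)\), so the only remaining defect is that the referenced label, although a (reverse) prefix, need not correspond to an \emph{existing} node of the right type and the right prefix relation.

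First I would treat the \(key_2\) side. Define a potential counting the ``bad'' \(key_2\) references, e.g.
\[
\phi = \bigl|\{ v\in\PatSet \mid key_2(v)\neq nil \text{ and the node } w \text{ with } b(w)=key_2(v) \text{ is not a leaf node with } b(v)\sqsubset b(w)\}\bigr|.
\]
I would argue that this is a potential function. For a bad reference at \(v\): in finite time \textsc{CheckKey$2$Info}\((v)\) runs (\cref{algorithm:Trie-Node_key2_check} line~\ref{algorithm:line:delete_key2_value_when_not_leaf_pointer}); since by \cref{corollary:structure_rebuilt} the node \(w\) with \(b(w)=key_2(v)\) either does not exist, or is an Msd node, or has a child (hence is not a leaf), or — if it is a genuine leaf below \(v\) — the remaining defect \(b(v)\sqsubset r(w)\) cannot be the cause because we have restricted to the type/existence defect; in every defective case the guard on line~\ref{algorithm:line:delete_key2_value_when_not_leaf_pointer} fires and \(key_2(v)\) is cleared, reducing \(\phi\). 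For closure I would check that no line of \PN\ ever assigns to \(key_2(v)\) a value violating the condition: line~\ref{algorithm:line:capture_key2_from_above} copies \(key_2\) from a Patricia node above with \(b(v)\sqsubset key_2(k)\), and by induction that value already points to a leaf below; lines~\ref{algorithm:line:repair_key2_if_reference_points} and~\ref{algorithm:line:find_good_key2_node} set \(key_2(k)=b(v)\) only when \(v\) is a leaf with \(b(k)\sqsubset b(v)\); and the \textsc{Linearize}/\textsc{BinaryPrefixSearch} machinery never touches \(key_2\). Hence \(\phi\) never increases and decreases until it is \(0\), giving the first half: every \(key_2\) value stored at \(v\) refers to an existing leaf node \(w\) with \(b(v)\sqsubset b(w)\).

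Then, symmetrically, I would define
\[
\psi = \bigl|\{ w\in\PatSet \mid r(w)\neq nil \text{ and the node } v \text{ with } b(v)=r(w) \text{ is not a key}_2 \text{ node with } b(v)\sqsubset b(w)\}\bigr|
\]
and show it is a potential function using \cref{algorithm:Trie-Node_key2_check} line~\ref{algorithm:line:r_value_connected_to_not_suitable_node}: once the structure is fixed, a bad \(r(w)\) points to a node that does not exist, or has fewer than two children (hence is not a key\(_2\) node — note by \cref{fact:enough_key2_spaces_for_leaves} and stability the key\(_2\) nodes are exactly the inner two-child nodes and the root), or whose \(key_2\) does not equal \(b(w)\); in all these cases the guard fires and \(r(w)\) is set to \(nil\), decreasing \(\psi\). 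Closure again follows by inspecting the only assignments to \(r\) (line~\ref{algorithm:line:check_if_r_should_set} and line~\ref{algorithm:line:find_good_key2_node}), which set \(r(k)\) or \(r(v)\) only to a Patricia node that is currently a key\(_2\) node strictly above. Combining the vanishing of \(\phi\) and \(\psi\) with \cref{lemma:not_eligible_nodes_drop_key2} yields exactly the claim of \cref{lemma:strange_references_key2_disappear}.

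The main obstacle I anticipate is the interplay between this lemma and the later consistency lemma \cref{lemma:main:Key_Two_R_Value_References_Made_Consistent}: here I must only eliminate references that are defective in \emph{type or existence or coarse prefix direction}, and I must be careful that case~a) of the \(key_2\) repair (the \(b(v)\sqsubset r(w)\) branch on line~\ref{algorithm:line:delete_key2_value_when_not_leaf_pointer}, which later drives the hand-over of \(key_2\) values) is \emph{not} counted as a defect by \(\phi\) — otherwise \(\phi\) would fail to be monotone, since that branch can repeatedly clear and re-acquire \(key_2\) values during the consistency phase. So the delicate point is choosing the potential so that it ignores the ``legitimate churn'' of \(key_2\)/\(r\) values and only measures genuinely malformed references; once the definitions are pinned down with that care, the argument is a routine potential-function induction mirroring the earlier phases.
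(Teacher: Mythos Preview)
Your approach is essentially the paper's: invoke \cref{lemma:not_eligible_nodes_drop_key2}, observe that bad $key_2$ references are cleared by \cref{algorithm:Trie-Node_local_check} \cref{algorithm:line:key2_wrong_form} and \cref{algorithm:Trie-Node_key2_check} \cref{algorithm:line:delete_key2_value_when_not_leaf_pointer}, bad $r$ references by \cref{algorithm:Trie-Node_local_check} \cref{algorithm:line:r_wrong_form} and \cref{algorithm:Trie-Node_key2_check} \cref{algorithm:line:r_value_connected_to_not_suitable_node}, and then inspect the assignment sites for closure. The paper's proof does exactly this in three sentences, without introducing explicit potentials.

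There is, however, one step where your potential $\phi$ fails to be monotone and your ``by induction'' justification becomes circular. At \cref{algorithm:Trie-Node_key2_check} \cref{algorithm:line:capture_key2_from_above} a key$_2$ node $v$ with $key_2(v)=nil$ copies $key_2(k)$ from a key$_2$ node $k$ above it; the only structural guarantee enforced there is $b(v)\sqsubset key_2(k)$, not that $key_2(k)$ is the label of a leaf. If $k$ still carries a malformed reference when it responds, $v$ acquires a new malformed reference and $\phi$ strictly increases. Saying ``by induction that value already points to a leaf below'' presupposes $\phi=0$, which is the goal. (The paper's one-line closure remark for this line is just as informal, so you are not worse off than the original.) If you want an honest potential, replace the bare count by something that respects the direction of copying: for each bad label $x$, track the \emph{minimum} $|b(v)|$ over nodes currently holding $key_2(v)=x$; copies via \cref{algorithm:line:capture_key2_from_above} go strictly downward along the branch, and once the topmost holder is cleared it cannot re-acquire $x$ from above, so this quantity is monotone. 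Your treatment of $\psi$ is fine once $\phi=0$, since both assignment sites for $r$ (\cref{algorithm:line:check_if_r_should_set} and \cref{algorithm:line:find_good_key2_node}) then point to a genuine key$_2$ node strictly above the leaf.
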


\begin{proof}
	According to \cref{lemma:not_eligible_nodes_drop_key2}, in finite time only inner nodes with two children store key$_{2}$ values and only leaf nodes store $r$ values.
	When a node $v\in \PatSet$ detects that $key_{2}(v)$ is not the label of a leaf node below it, such a reference is deleted (see \cref{algorithm:Trie-Node_local_check} \cref{algorithm:line:key2_wrong_form} and \cref{algorithm:Trie-Node_key2_check} \cref{algorithm:line:delete_key2_value_when_not_leaf_pointer}).
	Also, when a leaf node $v$ detects that $r(v)$ is not the label of a key$_2$ node above it, $r(v)$ is set to $nil$ (see \cref{algorithm:Trie-Node_local_check} \cref{algorithm:line:r_wrong_form} and \cref{algorithm:Trie-Node_key2_check} \cref{algorithm:line:r_value_connected_to_not_suitable_node}).
	Nodes which are not key$_2$ nodes are not considered when propagating keys and only labels of leaf nodes are propagated using parent edges only (see \cref{algorithm:Trie-Node_key2_check} \cref{algorithm:line:leaf_node_check_for_r_values}, \cref{algorithm:line:repair_key2_if_reference_points}).
	Also, key$_2$ nodes only acquire key$_2$ values that point to leaf nodes below them (see \cref{algorithm:Trie-Node_key2_check} \cref{algorithm:line:capture_key2_from_above}).
	Therefore, the state described in the lemma is reached.
\end{proof}

\lemmaMainKeyTwoRValueReferencesMadeConsistent*

\begin{proof}
	The lemma follows from \cref{lemma:inner_nodes_with_two_children_save_key2} and \cref{lemma:r_values_corrected}.
\end{proof}

\stepcounter{PotentialIndex}
\begin{restatable}{lem}{lemmaInnerNodesWithTwoChildrenSaveKeyTwo}
	\label{lemma:inner_nodes_with_two_children_save_key2}
	Let $\phi_{\thePotentialIndex}=\max_{v\in \mathcal{K}} \vert b(v) \vert$, where:
	\begin{align*}
		\mathcal{K} = & \{ v\in\PatSet \,\vert\, \textsc{Children}(v)=2 \text{ or } v=root \}                                                              \\
		              & \cap \{ v\in\PatSet \,\vert\, key_{2}(v)=nil \text{ or } \exists\, w\in\PatSet:\, key_{2}(w)=key_{2}(v),\, b(v) \sqsubset b(w) \}.
	\end{align*}
	$\mathcal{K}$ is the set of all key$_2$ nodes not storing a key$_2$ that stays.
	$\phi_{\thePotentialIndex}$ is a potential function.
\end{restatable}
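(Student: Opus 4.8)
The plan is to argue exactly as in the earlier ``longest label'' potential lemmas (\cref{lemma:empty_subtree_nodes_vanish}, \cref{lemma:last_nodes_with_empty_child_deleted}, \cref{lemma:existing_parent_nodes_established}). Set $\mathcal{T}=\{v\in\mathcal{K}\mid |b(v)|=\phi_{\thePotentialIndex}\}$; since labels are of bounded length it suffices to show that, while $\phi_{\thePotentialIndex}$ stays fixed, $|\mathcal{T}|$ strictly decreases and never grows, and that $\phi_{\thePotentialIndex}$ never grows. Throughout I work after the structural phases: by \cref{corollary:structure_rebuilt} and \cref{lemma:no_more_deletions_happen} the Patricia tree has reached its final shape, so the set of key$_2$ nodes is fixed and finite; by \cref{lemma:main:Key_Two_R_Value_References_Only_Eligible} only key$_2$ nodes store a $key_{2}$ value and only leaves store an $r$ value, every $key_{2}$ value is the label of a leaf strictly below its holder, and every $r$ value is the label of a key$_2$ node strictly above its holder; and (using the convention that lets us ignore temporary increases) I assume stale $r$-pointers have already been cleared by \cref{algorithm:Trie-Node_key2_check} \cref{algorithm:line:r_value_connected_to_not_suitable_node}, so that $r(z)=b(x)$ always implies $key_{2}(x)=b(z)$.

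The structural fact to extract first is $(\star)$: every key$_2$ node with label strictly longer than $\phi_{\thePotentialIndex}$ lies outside $\mathcal{K}$, hence stores a $key_{2}$ value and is the \emph{deepest} key$_2$ node storing it --- because all holders of a fixed value lie on the one root-to-leaf path to that leaf, so any holder but the deepest witnesses its own membership in $\mathcal{K}$. With $(\star)$, the ``no growth'' direction is short. A key$_2$ node joins $\mathcal{K}$ either when its value becomes $nil$, or when a key$_2$ node strictly below it acquires its value. But the acquisitions in \cref{algorithm:Trie-Node_key2_check} (\cref{algorithm:line:capture_key2_from_above}, \cref{algorithm:line:repair_key2_if_reference_points}, \cref{algorithm:line:find_good_key2_node}) only ever overwrite a $nil$ value, so in the second case that deeper node was a $nil$-holder an instant earlier, hence already in $\mathcal{K}$ with a strictly longer label --- contradicting maximality of $\phi_{\thePotentialIndex}$ if its label is $\geq\phi_{\thePotentialIndex}$; and a node drops $key_{2}(v)=b(w)$ only through the test $b(v)\sqsubset r(w)$ of \cref{algorithm:Trie-Node_key2_check} \cref{algorithm:line:delete_key2_value_when_not_leaf_pointer} (the other sub-tests there, and \cref{algorithm:Trie-Node_local_check} \cref{algorithm:line:key2_wrong_form}, are excluded by eligibility), and after cleanup $r(w)=b(x)$ for a key$_2$ node $x$ strictly below $v$ with $key_{2}(x)=b(w)$, so $v$ was already in $\mathcal{K}$. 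Hence neither $|\mathcal{T}|$ nor $\phi_{\thePotentialIndex}$ grows.

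For the strict decrease, fix a deepest $v\in\mathcal{T}$. If $key_{2}(v)=b(w)\neq nil$, then $v\in\mathcal{K}$ forces a key$_2$ node strictly below $v$ to store $b(w)$ too; by $(\star)$ that node is $b(w)$'s deepest holder, so the $r$-repair of \cref{algorithm:Trie-Node_key2_check} \cref{algorithm:line:check_if_r_should_set} drives $r(w)$ to an at-least-as-long label, so in finite time $b(v)\sqsubset r(w)$ and $v$ sets $key_{2}(v)\gets nil$; by the previous paragraph this keeps $\phi_{\thePotentialIndex}$ and $|\mathcal{T}|$ from growing, so assume $key_{2}(v)=nil$. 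Counting in the subtree $T_v$ rooted at $v$ via \cref{fact:enough_key2_spaces_for_leaves}, $T_v$ has more leaves than it has key$_2$ nodes strictly below $v$, and by $(\star)$ those deeper key$_2$ nodes store pairwise distinct values, so at least one leaf $z$ of $T_v$ is the target of no key$_2$ node strictly below $v$. If $z$ has a valid $r$-pointer it names a key$_2$ node storing $b(z)$, which (being neither strictly below $v$ nor $v$ itself, as $key_{2}(v)=nil$) lies strictly above $v$, so $v$ has a source and in finite time pulls the value of the nearest key$_2$ node strictly above it that points strictly below it (\cref{algorithm:Trie-Node_key2_check} \cref{algorithm:line:capture_key2_from_above}); otherwise $z$ presents itself upward and attaches to the deepest key$_2$ node above it that is $nil$ or already stores $b(z)$ (\cref{algorithm:Trie-Node_key2_check} \cref{algorithm:line:find_good_key2_node}), and since any key$_2$ node strictly between $v$ and $z$ has label $>\phi_{\thePotentialIndex}$, hence by $(\star)$ is non-$nil$ and --- if ``suitable'' --- would already store $b(z)$ (impossible, $z$ being unclaimed below $v$), $z$ must attach to $v$ itself. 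In either event $v$ acquires a value $b(z')$ that no key$_2$ node strictly below $v$ stores, so $v$ becomes $b(z')$'s deepest holder and leaves $\mathcal{K}$ --- unless the pulled $z'$ is a leaf claimed by a deeper node, in which case the $r$-repair makes $v$ drop $b(z')$ again while the shallow source also drops it, shrinking the supply of such ``bad'' sources; iterating, $v$ eventually acquires a value it genuinely owns and leaves $\mathcal{T}$.

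The step I expect to be the real obstacle is this last one: certifying that $v$ cannot be trapped forever pulling values owned by deeper key$_2$ nodes. The remedy is that every ``bad'' source strictly above $v$ (a formerly-$nil$ key$_2$ node now holding a contested value) is itself cleared by the deepest-wins $r$-repair, and, since $(\star)$ freezes which deep key$_2$ nodes own which leaves, no unbounded stream of fresh bad sources can be produced; turning the count of bad sources above $v$ into a clean secondary quantity (ordered lexicographically below $\phi_{\thePotentialIndex}$ and $|\mathcal{T}|$) and checking it only decreases is the delicate bookkeeping that remains. Closure of the asserted state is then immediate, since the protocol never hands a $key_{2}$ value to a node that already stores a staying one.
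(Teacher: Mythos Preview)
Your approach matches the paper's: the same potential $\phi$, the same set $\mathcal{T}$ of deepest offenders, the same case split on whether $key_{2}(v)$ is $nil$, and the same appeal to \cref{fact:enough_key2_spaces_for_leaves} for the existence of a leaf below $v$ that is not owned by any key$_2$ node strictly below $v$. You are in fact more careful than the paper on one point: the paper simply asserts that in case (b) the upward request of $v$ reaches the ancestor $w$ holding the ``good'' leaf $b(u)$, whereas you correctly observe that the pull in \cref{algorithm:Trie-Node_key2_check} \cref{algorithm:line:capture_key2_from_above} returns the value of the \emph{nearest} qualifying ancestor, which may be a leaf already owned by a deeper key$_2$ node; your proposed secondary count of such ``bad'' sources above $v$ (which can only migrate downward via pulls and are each eventually cleared by the deepest-wins $r$-repair once $r$ is synced) is the right way to close this small gap the paper leaves implicit.
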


\begin{proof}
	Let $\mathcal{T}=\{ v \in \mathcal{K} \,\vert\, \vert b(v) \vert = \phi_{\thePotentialIndex}\}$.
	We will show that $\vert \mathcal{T}\vert$ is reduced and never increases as long as $\phi_{\thePotentialIndex}$ stays the same.
	Further, $\phi_{\thePotentialIndex}$ does not increase.
	When $\vert \mathcal{T}\vert$ reaches zero, $\phi_{\thePotentialIndex}$ is reduced.
	As a result, the lemma follows.

	Consider $v\in \mathcal{T}$.
	If $key_{2}(v) \neq nil$, then there has to be a node $w$ with $key_{2}(w)=key_{2}(v)$ and $b(v) \sqsubset b(w)$.
	Since $v\in \mathcal{T}$, exactly one such node $w$ exists.
	In finite time, $w$ checks $u:=key_{2}(w)$ and sets $r(u)=w$ (see \cref{algorithm:Trie-Node_key2_check} \cref{algorithm:line:check_if_r_should_set}).
	Afterwards, $v$ detects that $b(v) \sqsubset r(u)$ and sets $key_{2}(v) = nil$ (see \cref{algorithm:Trie-Node_key2_check} \cref{algorithm:line:delete_key2_value_when_not_leaf_pointer}).
	So, assume $key_2(v)=nil$.
	Due to \cref{fact:enough_key2_spaces_for_leaves}, there is a leaf node $u$ below $v$, such that either (a) no inner node stores $b(u)$ as a key$_{2}$, or (b) some key$_2$ node $w$ with $b(w)\sqsubset b(v)$ has $key_{2}(w)=b(u)$.
	Let $w$ be maximal considering $\vert b(w) \vert$.
	In (a) it holds that $b(u)$ is in finite time presented to $v$, because $b(v)\sqsubset b(u)$ and every node between $v$ and $u$ already stores a key$_{2}$ (see \cref{algorithm:Trie-Node_key2_check} \cref{algorithm:line:find_good_key2_node}).
	In (b) $b(w) \sqsubset b(v) \sqsubset b(u)$ holds and a message from $v$ reaches $w$ in finite time such that $v$ sets $b(u)$ as key$_2$ (see \cref{algorithm:Trie-Node_key2_check} \cref{algorithm:line:capture_key2_from_above}).
	In any case, $\vert \mathcal{T}\vert$ reduces.

	No node $v$ with $\vert b(v) \vert > \phi_{\thePotentialIndex}$ deletes its $key_{2}(v)$, because this only happens if there is a node $w$ with $b(v) \sqsubset b(w)$ and $key_{2}(w)=key_{2}(v)$ which is not the case by definition.
	Therefore, $\phi_{\thePotentialIndex}$ is never increased.
	Since no node with a legitimate key$_{2}$ drops this value, no deletions happen (\cref{lemma:no_more_deletions_happen}), and there is no node $v\in \mathcal{K}$ with $\vert b(v) \vert > \phi_{\thePotentialIndex}$, $\vert \mathcal{T}\vert$ is never increased until it reaches zero.
	Hence, the lemma follows.
\end{proof}

\stepcounter{PotentialIndex}
\begin{restatable}{lem}{lemmaRValuesCorrected}
	\label{lemma:r_values_corrected}
	Let $\phi_{\thePotentialIndex}=\vert \mathcal{R} \vert$, where:
	\begin{align*}
		\mathcal{R} = \{v\in \PatSet \,\vert\, \textsc{Children}(v) = 0, \text{ } r(v) = nil \text{ or } key_{2}(r(v)) \neq b(v) \}\setminus\{root\}.
	\end{align*}
	$\mathcal{R}$ denotes the set of leaf nodes without an $r$ value that stays stable.
	$\phi_{\thePotentialIndex}$ is a potential function.
\end{restatable}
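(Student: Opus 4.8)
The plan is to reuse the standard potential argument of the earlier phases, taking as the starting point the state guaranteed once \cref{lemma:inner_nodes_with_two_children_save_key2} applies. First I would fix a point in the computation after which, simultaneously: the structure of the \HPT\ is rebuilt and no Patricia node is ever deleted or created again (\cref{corollary:structure_rebuilt}, \cref{lemma:no_more_deletions_happen}, \cref{lemma:missing_patricia_nodes_created}), so the set of leaf nodes is frozen; only key$_2$ nodes store a key$_2$ value and only leaf nodes store an $r$ value, with every such value of correct form and pointing in the intended direction (\cref{lemma:not_eligible_nodes_drop_key2}, \cref{lemma:strange_references_key2_disappear}); and every key$_2$ node stores a \emph{stable} key$_2$ value (\cref{lemma:inner_nodes_with_two_children_save_key2}), which in particular is never $nil$ from here on. The structural fact I would then extract is that, combining this with \cref{fact:enough_key2_spaces_for_leaves} and the convention that the root may carry two key$_2$ values, the stable assignment of key$_2$ values is a bijection between the (used) key$_2$ slots and the leaf nodes; hence from this point on every leaf node $v$ has exactly one key$_2$ node $w_v$ above it with $key_2(w_v)=b(v)$, and $w_v$ never changes.

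For \emph{closure}, consider a non-root leaf $v\notin\mathcal{R}$, so $r(v)=b(w)$ for a key$_2$ node $w$ with $key_2(w)=b(v)$. The value $r(v)$ is only reset to $nil$ when it has an illegal form (\cref{algorithm:Trie-Node_local_check} \cref{algorithm:line:r_wrong_form}, \cref{algorithm:line:r_value_stored_at_inner_node}) or when $key_2(r(v))\neq b(v)$ (\cref{algorithm:Trie-Node_key2_check} \cref{algorithm:line:r_value_connected_to_not_suitable_node}); here the former guard is false because $r(v)\sqsubset b(v)$ already holds, and the latter because $key_2(w)=b(v)$ is stable by \cref{lemma:inner_nodes_with_two_children_save_key2}. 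Since no new leaf nodes appear, $\phi_{\thePotentialIndex}=|\mathcal{R}|$ is never increased.

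For \emph{convergence}, take $v\in\mathcal{R}$. If $r(v)\neq nil$, then $key_2(r(v))\neq b(v)$, and using \cref{lemma:strange_references_key2_disappear} to handle the remaining degenerate cases, in finite time $v$'s own check resets $r(v)\gets nil$ via \cref{algorithm:line:r_value_connected_to_not_suitable_node} of \cref{algorithm:Trie-Node_key2_check}; this keeps $v$ in $\mathcal{R}$, does not increase $\phi_{\thePotentialIndex}$, and only reduces $v$ to the case $r(v)=nil$. With $r(v)=nil$, two mechanisms repair $v$ in finite time: when its unique matching key$_2$ node $w_v$ is checked it executes \cref{algorithm:line:check_if_r_should_set} and sets $r(v)$ to $b(w_v)$ (the guard is satisfied since $r(v)=nil$); and when $v$ itself is checked it picks out $w_v$ as the unique maximal-label candidate in \cref{algorithm:line:find_good_key2_node} (unique because no key$_2$ node above $v$ carries $nil$) and sets $r(v)$ to $b(w_v)$. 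In either case, in finite time $r(v)=b(w_v)$ with $key_2(w_v)=b(v)$, so $v$ leaves $\mathcal{R}$ and $\phi_{\thePotentialIndex}$ strictly decreases; together with closure this shows $\phi_{\thePotentialIndex}$ is a potential function, so $\mathcal{R}$ is empty in finite time.

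The step I expect to be the main obstacle is the bijection observation in the first paragraph: \cref{lemma:inner_nodes_with_two_children_save_key2} is stated purely in terms of key$_2$ nodes, so deducing that \emph{every} leaf eventually gets referenced (surjectivity of the stable assignment) requires a careful counting argument that ties together \cref{fact:enough_key2_spaces_for_leaves}, the exact number of children of the root, and the fact that after stabilization no available key$_2$ slot, including the root's second slot, is left empty. Once that is in place, the rest is the routine decrease/no-increase bookkeeping that mirrors the earlier phases.
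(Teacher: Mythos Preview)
Your proposal is correct and follows essentially the same route as the paper: assume the state after \cref{lemma:inner_nodes_with_two_children_save_key2}, split on whether $r(v)\neq nil$ (reset to $nil$ via \cref{algorithm:line:r_value_connected_to_not_suitable_node}) or $r(v)=nil$ (the matching key$_2$ node $w_v$ repairs $r(v)$), and argue closure from the stability of the key$_2$ values. The paper's proof is terser and simply asserts ``there must be a node $w$ with $key_2(w)=b(v)$'' before invoking \cref{fact:enough_key2_spaces_for_leaves}; you make this explicit as a bijection between used key$_2$ slots and leaves, and you are right to flag that surjectivity (including the root's possible second slot) is the only place where real work is needed --- the paper leaves exactly this step implicit.
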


\begin{proof}
	Assume that we reached a state where every inner node has a stable key$_{2}$ (\cref{lemma:inner_nodes_with_two_children_save_key2}).
	Consider a node $v\in \mathcal{R}$.
	If $r(v) \neq nil$ and $key_{2}(r(v)) \neq b(v)$, then $v$ sets $r(v)=nil$ in finite time (see \cref{algorithm:Trie-Node_key2_check} \cref{algorithm:line:r_value_connected_to_not_suitable_node}).
	If $r(v) = nil$, then there must be a node $w$ with $key_{2}(w)=b(v)$.
	This node will correct $r(v)$ in finite time to $r(v)=w$ leading to a reduction of $\phi_{\thePotentialIndex}$ (see \cref{algorithm:Trie-Node_key2_check} \cref{algorithm:line:r_value_connected_to_not_suitable_node}).
	Furthermore, no node $v\in \PatSet$ with $\textsc{Children}(v)=0$ and $key_{2}(r(v))=b(v)$ will delete $r(v)$ and the key$_{2}$ values stay fixed according to the closure of \cref{lemma:inner_nodes_with_two_children_save_key2}.
	Hence, $\phi_{\thePotentialIndex}$ is never increases.
	\cref{fact:enough_key2_spaces_for_leaves} assures that $\mathcal{R}$ is empty in finite time.
\end{proof}

\newpage
\section{Overhead Proof}\label{appendix:overhead_proof}

\theoremConstantAmountOfMessagesPerTimeout*

\begin{proof}
	On a call of \textsc{Timeout}, exactly one \HPT\ node is checked per \DHT\ node.
	This \HPT\ node $v$ has at most three edges provided by $p_{-}(v)$, $p_{0}(v)$ and $p_{1}(v)$, and one reference provided by either $key_{2}(v)$ or $r(v)$.
	In total, no more than four nodes may be (partially) acquired using the \DHT.
	Furthermore, Linearization presents $v$ to at most three other nodes.
	When the \HPT\ is in a legal state, no reinsertion, no presenting of $b(v)$ as a key$_2$ value and no searching of other key$_2$ values is done.
	Hence, we have $\Theta(1)$ created hash table accesses and messages at a \DHT\ node.
\end{proof}

\theoremStorageOverhead*

\begin{proof}
	In \cite{HashedPatriciaTrie} it was mentioned that the \HPT\ needs $\Theta (\bitsSym)$ memory space if $\bitsSym=\sum_{k\in \KeySet} \vert k \vert$ is the number of bits needed to store all keys.
	The modifications we made do not change the number of Patricia nodes and Msd nodes.
	Our protocol rebuilds the structure of the \HPT\ and deletes every unnecessary Patricia node as well as every incorrect Msd node (see \cref{lemma:inner_pat_nodes_without_rigths_leave} and \cref{lemma:incorrect_msd_nodes_removed}).
	Thus, the rebuilt \HPT\ has an asymptotically optimal memory demand of \MemoryBound\ bits.
\end{proof}

\end{document}